\newcommand{\ind}{\rotatebox[origin=c]{90}{$\models$}}
\newcommand{\W}{{\mathbf{W}}}
\newcommand{\lX}{{\mathbf{X}}}
\newcommand{\Y}{{\mathbf{Y}}}
\newcommand{\Z}{{\mathbf{Z}}}
\newcommand{\G}{{\mathcal{G}}}
\newcommand{\cG}{{\mathcal{G}}}
\newcommand{\cH}{{\mathcal{H}}}
\newcommand{\cE}{{\mathcal{E}}}
\newcommand{\cB}{{\mathcal{B}}}
\newcommand{\cM}{\mathcal{M}}
\newcommand{\cP}{{\mathcal{P}}}
\newcommand{\cS}{{\mathcal{S}}}
\newcommand{\cT}{{\mathcal{T}}}
\newcommand{\cU}{{\mathcal{U}}}
\newcommand{\cV}{{\mathcal{V}}}
\newcommand{\cW}{{\mathcal{W}}}
\newcommand{\e}{{\mathbf{e}}}
\newcommand{\des}{{\mathrm{des}}}
\newcommand{\co}{\mathrm{co}}
\newcommand{\scc}{\mathrm{sc}}
\newcommand{\mic}{{\mathrm{mic}}}
\newtheorem{lemma}{Lemma}
\newtheorem{proposition}{Proposition}
\newtheorem{remark}{Remark}
\newtheorem{theorem}{Theorem}
\newtheorem{corollary}{Corollary}
\theoremstyle{definition}
\newtheorem{definition}{Definition}
\title{Foundations of Causal Discovery on Groups of Variables}
\author{
  Jonas Wahl \\
  Technische Universität Berlin  \\
  DLR Institut f\"ur Datenwissenschaften Jena \\
  \texttt{wahl@tu-berlin.de}
   \And
  Urmi Ninad \\ 
  Technische Universität Berlin  \\
  DLR Institut f\"ur Datenwissenschaften Jena \\
  \texttt{urmi.ninad@tu-berlin.de}
   \And
  Jakob Runge \\ 
  Technische Universität Berlin  \\
  DLR Institut f\"ur Datenwissenschaften Jena \\
  \texttt{runge@tu-berlin.de} \\
}
\begin{document}
\maketitle

  \begin{abstract}
{Discovering causal relationships from observational data is a challenging task that relies on assumptions connecting statistical quantities to graphical or algebraic causal models. In this work, we focus on widely employed assumptions for causal discovery when objects of interest are (multivariate) groups of random variables rather than individual (univariate) random variables, as is the case in a variety of problems in scientific domains such as climate science or neuroscience. If the group-level causal models are derived from partitioning a micro-level model into groups, we explore the relationship between micro and group-level causal discovery assumptions. We investigate the conditions under which assumptions like Causal Faithfulness hold or fail to hold. Our analysis encompasses graphical causal models that contain cycles and bidirected edges. We also discuss grouped time series causal graphs and variants thereof as special cases of our general theoretical framework. Thereby, we aim to provide researchers with a solid theoretical foundation for the development and application of causal discovery methods for variable groups.}
\end{abstract}

\section{Introduction}
Inferring causal relationships from observational data and estimating their strength is an ubiquitous task in many research domains for which a multitude of tools \cite{PearlCausality,spirtes_causation_1993,spirtes_anytime_2001,PetJanSch17,ramsey_adjacency-faithfulness_2006,ShimizuLiNGaM,runge_inferring_2019} have been developed throughout the last decades. While the underlying assumptions on the data generating process differ from method to method, the majority of approaches have in common that the causal objects of interest are one-dimensional random variables. However, in some applications, the relevant causal entities can be multivariate groups of variables, such as spatial regions of measurements, or collections of random variables that together describe or approximate a phenomenon of interest, such as the phase and amplitude of an oscillation. For instance, neuroscientists may be interested in causal interactions between brain regions rather than in interactions between individual neurons \cite{semedo_statistical_2020,perich_rethinking_2020}, while climate scientists would like to improve their understanding of spatio-temporal climate modes that extend across large regions on the globe \cite{Runge17Science,runge2015identifying,wang_three-ocean_2019} and interact across long distances. Similarly, economists may want to approximate the economic activity of a given country by a range of different indicators rather than a single univariate index \cite{costanza_development_2014}. 
\\
At present, domain experts typically address such problems by employing the group mean of a variable group as a stand-in for the group as a whole, or by means of more elaborate standard dimension reduction techniques such as principal component analysis (PCA).  For instance, in climate science, the \emph{El Ni}$\tilde{n}$\emph{o Southern Oscillation (ENSO)} is often represented as either a regional average of sea surface temperatures, or as a principal component in a PCA \cite{timmermann_ninosouthern_2018}.  Unfortunately, if some of the causal processes at hand happen at smaller scale than averages or principal components can capture, relevant causal information may be lost. As an example, the group mean of two variable groups $\W$ and $\Y$ may be conditionally dependent given the group mean of a third group $\Z$ while the groups, considered as a whole, satisfy the conditional independence $\W \ind \Y | \Z$, see e.g. \cite{spirtes_causation_1993,Rubensteinetal17}. Causal inference methods based on conditional independence testing such as the PC algorithm might therefore infer different causal structures depending on whether they use group means or the full variable groups as their basic causal objects. Moreover, the dominant mode of \emph{internal} variability of a variable group $\Y$ as recovered by PCA may not be the causally relevant driver of its effect on another group $\Z$ which could for instance be captured more accurately by a higher order principal component. If only the dominant component is consequently used in a causal analysis, then the causal effect of $\Y$ on $\Z$ may be diluted or disappear completely. A practical example of this, again from climate science, that deals with the effect of ENSO on the \emph{North Atlantic Oscillation (NAO)} can be found in \cite{zhang_impact_2019}.
\\
A second approach to causal discovery for variable groups is to run causal discovery algorithms on the totality of all micro-variables and then deduce group-level relationships from the inferred micro-graph. Such an approach will inevitably need to unravel micro-relations of little interest to the group-level problem at hand. For example, one is typically not interested in causal relations between individual grid locations of satellite measurements of temperature data but between different spatial temperature fields as a whole \cite{runge_inferring_2019}. In addition, to be sound, a micro-level causal discovery method may require strong technical assumptions on micro-relations that are again of no relevance to the between-group interactions and it can quickly become computationally inefficient and statistically frail, see e.g. \cite{wahl_vector_2022} for empirical evidence of this for two variable groups. We will return to causal discovery with dimension reduction and full micro-level causal discovery in the final section of this paper, Section \ref{sec.pitfalls}, where we will discuss their strengths and weaknesses in more detail.
\\
An alternative approach to the group-level causal discovery problem is thus to consider variable groups as a whole as the basic causal entities on which to apply available causal discovery methods, see \cite{ParKas17}. For instance, approaches based on conditional independence testing such as the PC-algorithm do not make any assumptions on the dimensionality of its node variables per se and can still be executed provided that its conditional independence tests are adapted to the multivariate setting \cite{Shah2020,Josse2016,Chatterjee2022,hochsprung_increasing_2023}. However, such constraint-based methods rely on two fundamental assumptions, the \emph{causal Markov property} and \emph{causal faithfulness}, or variants thereof, that now have to be assumed directly on the group-level for the methods to be sound. Thus, the following question arises: if causally interacting micro-variables are partitioned into variable groups, see e.g. Figure \ref{fig.general_graph}, do causal discovery assumptions on the micro-level transfer to the group-level and if not, what else is required for these group-level causal discovery assumptions to be valid?

\begin{figure}[h!]
\centering
\includegraphics[scale=0.3]{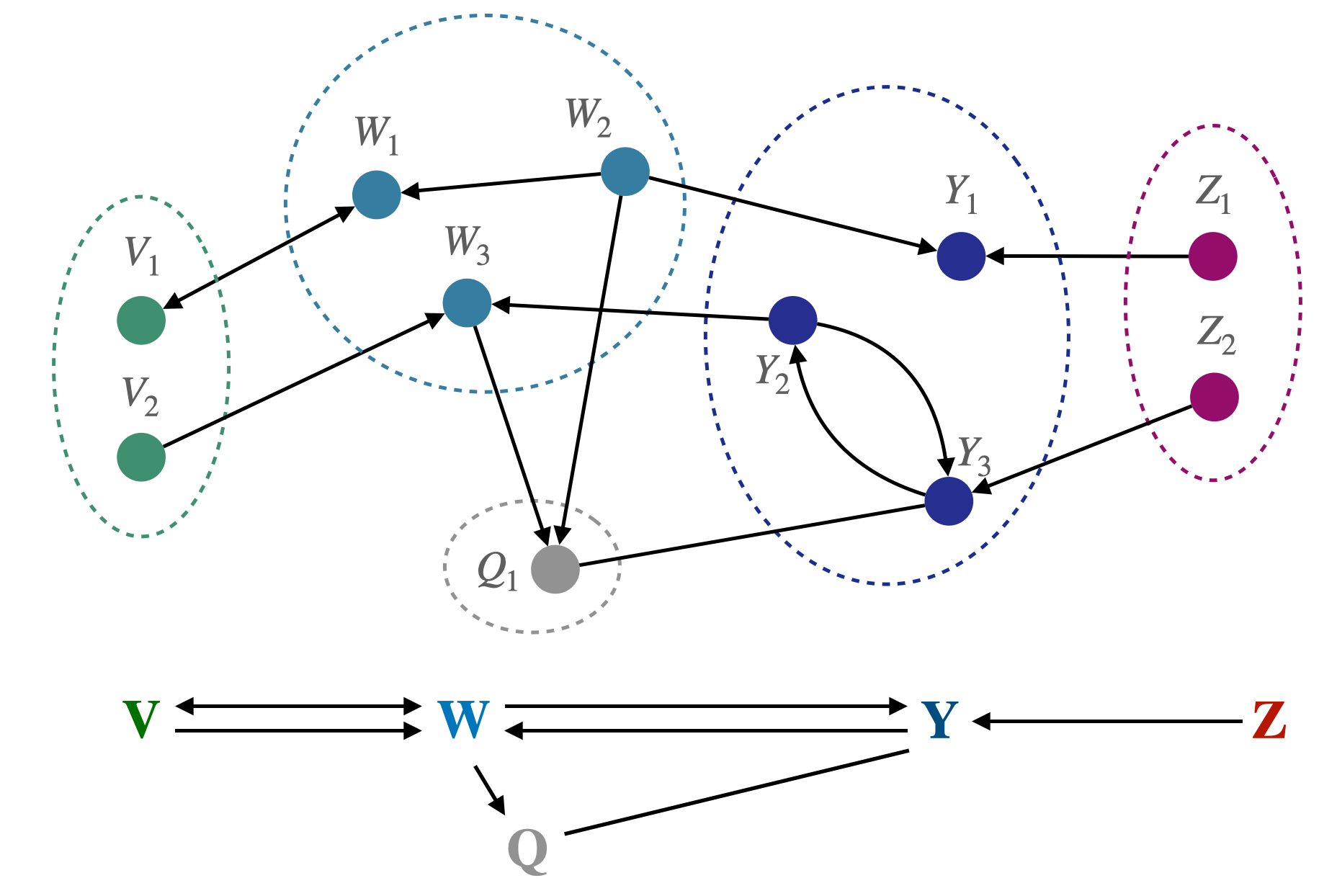}
\caption{A mixed graph over micro-variables is coarsened to a mixed graph of variable groups, see Definition \ref{def.coarsegraph}.In graphical causal modelling, directed edges represent direct causal influences, bidirected edges represent confounding by a hidden variable and undirected edges indicate the presence of a selection variable that has been conditioned on.}\label{fig.general_graph}
\end{figure}

\noindent To answer this question, in this work we provide a thorough theoretical analysis of the relationship between micro- and macro-level causal models with a view on causal discovery assumptions. We do so for causal models that exhibit cyclic as well as acyclic behaviour. Parallel questions on causal effect estimation on directed acyclic graphs over variable groups have been addressed recently in \cite{anand_causal_2023}. The authors of \cite{anand_causal_2023} also present general rules of graphical calculus for acyclic graphs of groups, which we recall and adapt to our setting in Section \ref{sec.groupDMGs} below.
In order to discuss our main results, we now recall that the Markov property and causal faithfulness relate the graphical structure of the model, the \emph{causal graph}, or more precisely its $d-$ or $\sigma-$\emph{separations}, to the observational distribution of the involved variables.
\\
The Markov property states that two variables that can be separated graphically by a separating set $\cS$ are conditionally independent given that set, or for short that $d$- (or $\sigma$-)separation implies conditional independence. The assumption of causal faithfulness on the other hand requires that also the converse implication is true, i.e. that conditional independence implies $d$- (or $\sigma$-)separation. Taken together, both properties thus state that graphical separations and conditional independencies are in exact correspondence to each other. 
\\
While the Markov property is a given in almost every causal inference method, causal faithfulness is more controversial and its validity has been discussed in various places, see e.g. \cite{PearlCausality,weinberger_faithfulness_2018}. As a consequence, weaker versions of causal faithfulness have been developed, most notably \emph{adjacency} and \emph{orientation faithfulness} \cite{ramsey_adjacency-faithfulness_2006}, see also \cite{marx_weaker_2021}. We study under which conditions the causal Markov property, faithfulness and some of its relatives do and do not carry over from a fine grained micro-level causal graphical model to a more coarse grained macro-level graph in which the micro-level variables are partitioned into groups, see Figure \ref{fig.general_graph}. In order to do so we additionally study the relationship between micro-level and coarse grained group-level causal graphs on a purely graphical level, see Section \ref{sec.groupDMGs}.

\noindent As our main results, we show that the Causal Markov property does  transfer from the micro- to the group-level (Theorems \ref{prop.m-Markov_relations} and \ref{prop.sigma-Markov_relations}) relatively straightforwardly, but that this is no longer true for causal faithfulness, a fact that was already noted in empirical simulations in \cite{ParKas17}. We point out that in some sense when dealing with variable groups, the faithfulness assumption is more complicated than was already known: not only does faithfulness fail to transfer to the macro-level, it can even be violated even though its weaker relatives \emph{adjacency} and \emph{orientation faithfulness} \cite{ramsey_adjacency-faithfulness_2006} are both satisfied on the macro-level, see Section \ref{sec.faithfulness} and Figure \ref{fig.non-local}. We are not aware of this type of faithfulness violations (that is, faithfulness being violated but adjacency and orientation faithfulness holding) in other settings and call them \emph{non-local faithfulness violations}.
\\
On the other hand, we also provide two criteria that \emph{do guarantee} macro-level causal faithfulness whenever the variables are \emph{sufficiently well-connected internally}, either through cycles (Theorem \ref{lem.connectivity-criterion}) or through directed or bidirected paths (Theorem \ref{lem.connectivity-criterion2}). This may justify the assumption of causal faithfulness is some settings, as often variable groups are chosen the way the are, exactly because of their internal coherence or their strong internal interactions. Nevertheless, considered in entirety, our discussion shows that faithfulness, already controversial in the univariate case, can be a strong assumption for causal graphs over variable groups and practitioners are advised to proceed with care when assuming it.
\\
We also demonstrate that graphs over variable groups need to be interpreted carefully with respect to their causal meaning as we will discuss in Section \ref{sec.interpretation}. In addition, we point out that the weaker notion of adjacency faithfulness \emph{does transfer} from the micro-level to macro-level (Lemma \ref{lem.adjacencyfaithfulness}). Therefore, when developing causal discovery tools for variable groups, proceeding in line with methods such as the conservative PC-algorithm of \cite{ramsey_adjacency-faithfulness_2006}, that only rely on adjacency faithfulness, may be advisable if there are no domain-specific reasons to believe that faithfulness is a valid assumption.
\\
We end with a discussion on causal discovery for time series, and generalize the widely employed notion of the \emph{time series summary graph}, see e.g. \cite{Runge17Science}, to the notion of \emph{time series summary graphs of groups}. We show that, under a dynamical systems inspired condition that we dub \emph{causal mixing}, stronger causal conclusions can be derived from grouped time series summary graphs. Thus, while causal conclusions on the time-resolved level need to be interpreted carefully, global interactions between groups of processes may be more robust with respect to the standard assumptions of causal inference. To summarize, our main contributions are as follows: \\

\begin{itemize}
\item We extend the theoretical framework of \cite{anand_causal_2023} for graphical causal reasoning between variable groups to $\sigma$-separation and discuss the relationship between micro- and macro-level versions of fundamental graphical properties, such as acyclicity and acyclification (Theorem \ref{thm.coarsen_acyclify}).
\item We discuss Markov properties for $m$- and $\sigma$-separation and show that they transfer from micro- to group-level directed mixed graphs (DMGs), see Theorems \ref{prop.sigma-Markov_relations} and \ref{prop.m-Markov_relations}.
\item We discuss different failure modes of causal faithfulness for graphs over variable groups including an example of a non-local faithfulness violation (Section \ref{sec.faithfulness}).
\item We provide two criteria (Theorems \ref{lem.connectivity-criterion} and \ref{lem.connectivity-criterion2}) that ensure faithfulness on the group-level after coarsening a micro-graph. We also discuss the role of adjacency faithfulness (Lemma \ref{lem.adjacencyfaithfulness}), and an example addressing the applicability of Meek's orientation rules \cite{meek_causal_1995} that was brought forward in \cite{ParKas17} (Subsection \ref{sec.Meek}).
\item We show how time series causal graphs fit into our framework (Section \ref{sec.grouped_ts_graphs}).
\item We elaborate on the difference between apparent and true causation in group DMGs and time series group DMGs (Section \ref{sec.interpretation}).
\item We discuss strengths and failure modes of causal discovery for variable groups through dimension reduction and full micro-level causal discovery and contrast this to an approach that proceeds directly on the group-level (Section \ref{sec.pitfalls}).
\end{itemize}   

We summarize our main results on faithfulness and Markov properties in Figure \ref{fig.summary}. We hope that this work will provide a solid theoretical footing for the development and empirical validation of  group-level causal discovery algorithms in the future.

\begin{figure}[h!]
\centering
\includegraphics[scale=0.38]{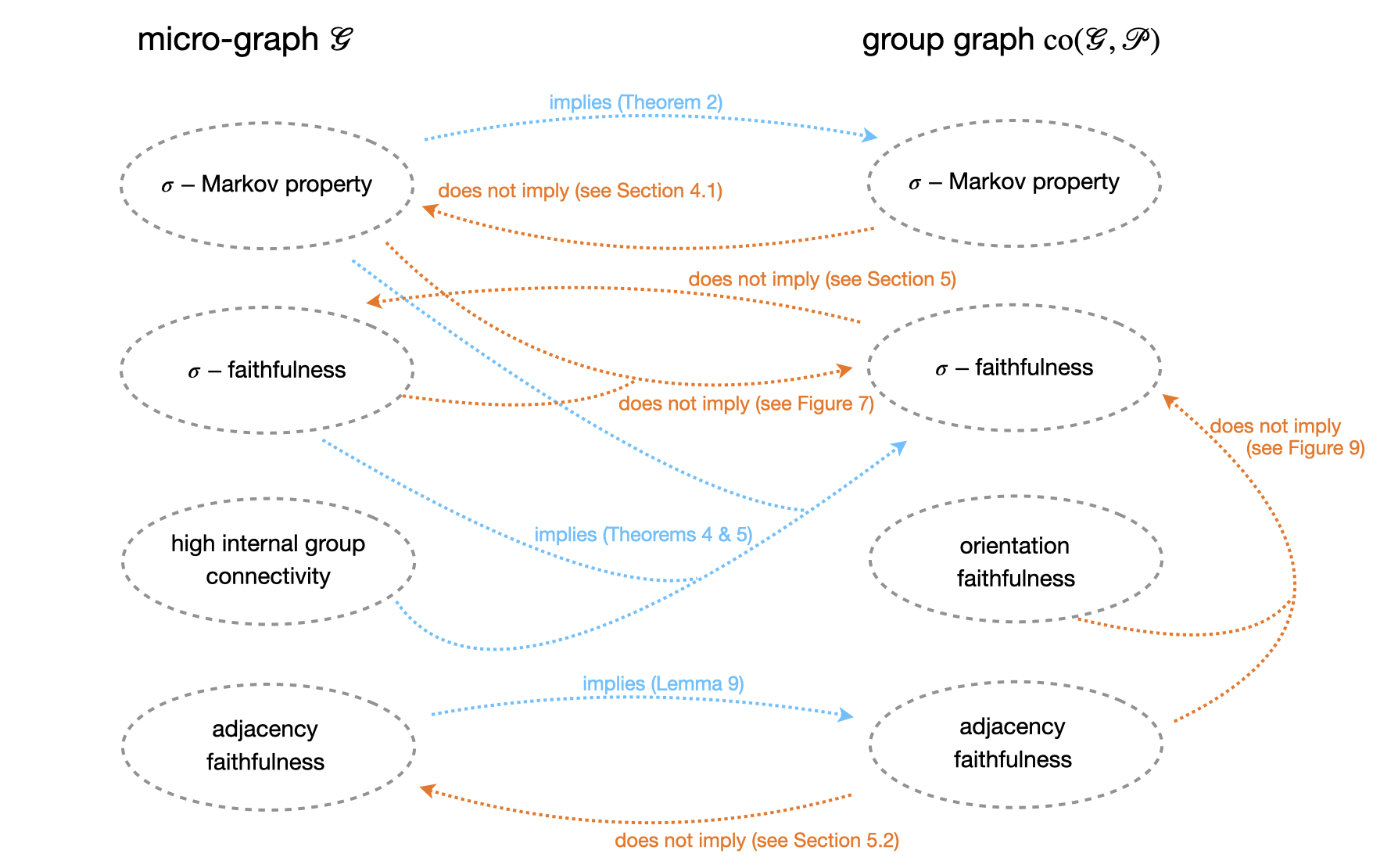}
\caption{A diagrammatic summary of the relationship between the $\sigma$-Markov property and different types of faithfulness on a micro graph $\cG$ and its graph of groups with respect to a partition $\cP$, see Definition \ref{def.graphs} below for details. A blue arrow indicates that all properties from which the arrow emerges imply the target property. An orange arrow indicates that the properties from which the arrow emerges are not sufficient to guarantee the target property.}\label{fig.summary}
\end{figure}
\subsection{Related Work}
The compatibility of averaging across variable and causal inference has been discussed in \cite{Rubensteinetal17} which also provides some toy examples. Arguably, the articles \cite{ParKas17,anand_causal_2023} are closest to our work. The authors of \cite{ParKas17} discussed several causal discovery methods for variable groups, introduced the notion of groupwise faithfulness and provide a first analysis of this property, including some empirical experiments with discrete micro-variables. We expand upon the theoretical analysis of \cite{ParKas17} in several directions, e.g. by including cyclic structures, addressing Markov properties as well as by providing new sufficient criteria for groupwise faithfulness, new examples of faithfulness violations and results on time series. In \cite{ParKas17}, the authors also provide an example in which groupwise faithfulness w.r.t. $d$-separation is deemed insufficient to ensure that the Meek orientation rules \cite{meek_causal_1995}, a fundamental part of the PC-algorithm \cite{spirtes_causation_1993}, still hold. However, we will point out in Section \ref{sec.Meek} that this is no longer true if group-level cycles in the example of \cite{ParKas17} are properly accounted for by replacing $d$-faithfulness with $\sigma$\emph{-faithfulness}. The authors of \cite{anand_causal_2023} present a graphical calculus for $d$-separation over graph of groups, which we will adapt to $\sigma$-separation below, and use this calculus to discuss causal effect estimation for directed acyclic graphs over variable groups, therein called cluster DAGs. The articles \cite{ZscheiJanZhang12} and \cite{wahl_vector_2022} present ways of inferring cause-effect relationships when only two groups of variables are involved. Constraint-based causal discovery methods for variable groups require conditional independence testing for multivariate random vectors which are discussed in various places, e.g. \cite{Shah2020,Josse2016,Chatterjee2022,hochsprung_increasing_2023}. Causal discovery for time series is treated in many works, see e.g. \cite{runge_inferring_2019,runge2023causal,glymour_review_2019} for discussions on state-of-the-art methods.

\section{Preliminaries on (directed) mixed graphs}

To account for latent confounding and selection bias many concepts of causal inference have been extended to mixed graphs \cite{zhang_causal_2008}. Cyclic causal relationships have also been incorporated succesfully into causal graphical modelling \cite{forre_markov_2017} \cite{mooij_constraint-based_2020} \cite{BonFOrPetMoo21} although, for the most part, these works do not deal with undirected edges. 

A \emph{mixed graph} (MG) is a tuple $\G = (\cV,\cE,\cB,\cU)$ of a set of nodes $\cV$, a set of directed edges $\cE$, a set of bidirected edges $\cB$ and a set of undirected edges $\cU$. All these sets are assumed to be countable. Directed edges will be depicted by one-sided arrows $A \rightarrow B $ or $B \leftarrow A$, bidirected edges by two-sided arrows $A \leftrightarrow B$ and undirected edges by simple lines $A-B$. We will assume that graphs considered in this work do not admit \emph{self-edges} of any type, i.e. both nodes of an edge are not allowed to coincide. A \emph{directed mixed graph} (DMG) is a mixed graph without undirected edges, and in this case we will always suppress the (empty) set $\cU$ from the notation. Finally, a \emph{directed graph} (DG) is a directed mixed graph without bidirected edges, and again we will suppress the (empty) set $\cB$ from the notation. A \emph{walk} $\pi$ from $A \in \cV$ to $B \in \cV$ on a mixed graph $\G$ is a finite alternating tuple $ \pi = (\pi(1),e_1,\pi(2),e_2,\dots,e_{m-1},\pi(m)), \ \pi(1) = A, \ \pi(m) = B$ of nodes $\pi(i) \in \cV$ and edges $e_i \in \cE \cup \cB \cup \cU$ such that $e_i$ connects $\pi(i)$ and $\pi(i+1)$, i.e. $e_i \in \{ \pi(i) \rightarrow \pi(i+1), \pi(i) \leftarrow \pi(i+1), \pi(i) \leftrightarrow \pi(i+1), \pi(i) - \pi(i+1)  \}$. A \emph{path} is a walk whose nodes $\pi(1),\dots,\pi(m)$ are all (pairwisely) different. A trivial walk (path) is a walk (path) that consists of only one node and no edges. A walk (path) is called \emph{right-directed} if it is of the form $\pi(1) \to \pi(2) \to \dots \to \pi(m)$, \emph{left-directed} if it is of the form $\pi(1) \leftarrow \pi(2) \leftarrow \dots \leftarrow \pi(m)$ and \emph{directed} if it is left- or right-directed. A \emph{cycle} on $\cG$ is a directed walk  $ \pi = (\pi(1),e_1,\pi(2),e_2,\dots,e_{m-1},\pi(m))$ such that $\pi(1) = \pi(m)$, and a graph is said to be \emph{acyclic} if it does not admit any cycles. As is common practice, \emph{directed acyclic graphs} will be abbreviated as \emph{DAGs}.
A subset of nodes $\cW \subset \cV$ of a mixed graph is \emph{strongly connected} if for any two nodes $A,B\in \cW$ there is a directed path from $A$ to $B$. In particular, there is a cycle between any two nodes in a strongly connected subset. The \emph{strongly connected components} of $\cG$ are the maximal strongly connected subsets of $\cV$, i.e. those that cannot be enlarged without losing their strong connectivity. For any node $A$, the unique strongly connected component that contains $A$ will be written as $\mathrm{sc}(A)$. The strongly connected components of a $\G = (\cV,\cE,\cB)$ form a partition of $\cV$, i.e. $\cV$ is a disjoint union of its strongly connected components. We also use the common conventions that $A \in \cV$ is called a \emph{parent} of $B \in \cV$ if there is a directed edge $A\to B$, and an \emph{ancestor}  of $B$ if there is a directed path from $A$ to $B$. Conversely, in the first case $B$ is called a \emph{child} of $A$, in the latter case $B$ is called a \emph{proper descendant} of $A$. A \emph{descendant} of $A$ is a node that is either $A$ itself or a proper descendant of $A$.
A \emph{collider} of a walk $\pi = (\pi(1),e_1,\pi(2),e_2,\dots,e_{m-1},\pi(m))$ is an inner node $\pi(i), \ 1 <i <m$ of $\pi$ such that both its adjacent edges point into $\pi(i)$. Any inner node of $\pi$ that is not a collider on $\pi$ is consequently called a \emph{non-collider} of $\pi$.

For the purpose of encoding conditional independencies efficiently when modelling causal relationships of random variables graphically, different notions of graphical \emph{separation} have been introduced for different types of graphs. 

\begin{definition}[m-separation, see \cite{zhang_causal_2008}]
Let $\cG =(\cV,\cE,\cB,\cU)$ be a mixed graph and let $\cS \subset \cV$ be a set of nodes. A walk $\pi$ between nodes $A = \pi(1)$ and $B = \pi(m)$ is said to be $m$\emph{-blocked} by $\cS$ if one of the following holds:
\begin{itemize}
\item[(1)] its first node $A$ or its last node $B$ lie in $\cS$;
\item[(2)] there is a collider of $\pi$ that does not have any descendants in $\cS$;
\item[(3)] $\cS$ contains a non-collider of $\pi$.
\end{itemize}
If all walks (or, equivalently, all paths) between $A$ and $B$ are $m$-blocked by $\cS$, we say that $A$ and $B$ are $m$\emph{-separated} by $\cS$ and write $A \bowtie^{m}_{\cG} B | \cS$. If $A$ and $B$ are not $m$-separated by $\cS$, we say that they are $m$\emph{-connected} by $\cS$.
\end{definition}

If the graph $\cG$ is a DAG, $m$-separation is known under the more familiar name $d$-separation. Since $m$-separation can be inadequate to deal with cyclic relationships (see \cite{BonFOrPetMoo21} for a detailed explanation of why this is the case), another type of separation dubbed $\sigma$-separation was introduced in \cite{forre_markov_2017} and studied in \cite{mooij_constraint-based_2020} \cite{BonFOrPetMoo21}. We have only found the definition of $\sigma$-separation for directed mixed graphs in the literature but it is easily adapted to general mixed graphs as well. $\sigma$-separation also reduces to the more familiar notion of $d$\emph{-separation} in the case of directed acyclic graphs.


\begin{definition}[$\sigma$-separation, see \cite{forre_markov_2017}] \label{def.sigma-sep}
Let $\cG=(\cV,\cE,\cB,\cU)$ be a mixed graph and let $\cS$ be a set of nodes.  A walk  $\pi$ from $A = \pi(1)$ and $B = \pi(m)$ is said to be $\sigma$\emph{-blocked} by a subset $\cS \subset \cV$ if one of the following holds:
\begin{itemize}
\item[(1)] its first node $A$ or its last node $B$ lie in $\cS$;
\item[(2)] there is a collider of $\pi$ that does not have any descendants in $\cS$;
\item[(3)] $\cS$ contains a non-collider $\pi(i)$ that has a neighbor $\pi(j), \ j \in \{ i-1,i+1\}$ such that
 \begin{itemize}
 \item  $\pi(j) \notin \mathrm{sc}(\pi(i))$ and
 \item the edge of $\pi$ between $\pi(i)$ and $\pi(j)$ is of the form $\pi(i) \to \pi(j)$ or $\pi(i) - \pi(j)$.
\end{itemize}  
\end{itemize}
If all walks (or, equivalently, all paths) between $A$ and $B$ are $\sigma$-blocked by $\cS$, we say that $A$ and $B$ are $\sigma$\emph{-separated} by $\cS$ and write $A \bowtie^{\sigma}_{\cG} B | \cS$. If $A$ and $B$ are not $\sigma$-separated by $\cS$, we say that they are $\sigma$\emph{-connected} by $\cS$.
\end{definition}

A convenient way of linking the usual notion of $d$-separation on DAGs and $\sigma$-separation is through \emph{acyclification} \cite{BonFOrPetMoo21}.

\begin{definition}[Acyclification of a MG, see \cite{BonFOrPetMoo21}] \label{def.acyclification}
Let $\cG = (\cV, \cE, \cB, \cU)$ be a mixed graph. The \emph{acyclification} of $\cG$ is the graph $\G^{\mathrm{acy}} = (\cV, \cE^{\mathrm{acy}}, \cB^{\mathrm{acy}}, \cU^{\mathrm{acy}})$ defined as follows
\begin{itemize}
\item there is a directed edge $A \to B \in \cE^{\mathrm{acy}}$ if and only if $A \in \mathrm{pa}_{\G}(\mathrm{sc}_{\G}(B)) \backslash\mathrm{sc}_{\G}(B) $;
\item there is an undirected edge $A - B \in \cU^{\mathrm{acy}}$ if and only if $A \notin \mathrm{sc}_{\G}(B) $ and $A-B\in \cU$;
\item there is a bidirected edge $A \leftrightarrow B \in \cB^{\mathrm{acy}}$ if and only if $\mathrm{sc}_{\G}(A) = \mathrm{sc}_{\G}(B)$ or there exist $A' \in \mathrm{sc}_{\G}(A), B' \in \mathrm{sc}_{\G}(B)$ with $A' \leftrightarrow B' \in \cB$.
\end{itemize}
\end{definition}

\begin{figure}[h!]
\centering 
\includegraphics[scale=0.5]{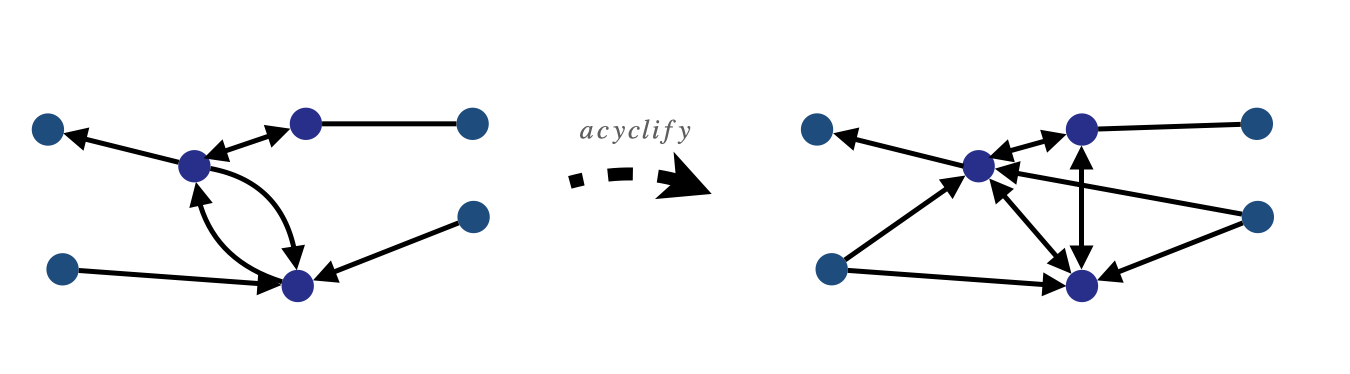}
\caption{Acyclification of a cyclic mixed graph to an acyclic mixed graph.}
\end{figure}

The following result is a straightforward generalization of \cite[Supplement, Proposition A.19]{BonFOrPetMoo21}. It states that $\sigma$-separation on a mixed graph can alternatively be understood as $d$-separation on its acyclification.

\begin{proposition} \label{thm.acyclic-separation}
Let $\cG = (\cV, \cE, \cB, \cU)$ be a mixed graph with acyclification $\G^{\mathrm{acy}}$, let $A,B \in \cV$ and let $\cS \subset \cV$ be a subset of nodes. Then
\[ A \bowtie_{\G}^{\sigma} B \ | \ \cS \qquad \Leftrightarrow \qquad A \bowtie_{\G^{\mathrm{acy}}}^{d} B \ | \ \cS.  \]
\end{proposition}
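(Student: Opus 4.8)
The plan is to reduce the statement to the known result \cite[Supplement, Proposition A.19]{BonFOrPetMoo21}, which treats the case $\cU = \emptyset$ (directed mixed graphs), by handling undirected edges in a way that is structurally parallel to how $\sigma$-separation and acyclification already treat them. First I would unwind the definitions and observe that in both $\sigma$-separation (Definition \ref{def.sigma-sep}) and acyclification (Definition \ref{def.acyclification}) an undirected edge $A - B$ behaves, for the purpose of blocking and of strongly connected components, much like a ``half-directed'' edge: it can carry a non-collider through rule (3) of $\sigma$-blocking exactly when the relevant endpoint lies outside the strongly connected component of its neighbour, and it survives into $\G^{\mathrm{acy}}$ as an undirected edge precisely under the same side condition $A \notin \mathrm{sc}_{\G}(B)$. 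Crucially, undirected edges never create colliders and never lie strictly inside a strongly connected component of $\G^{\mathrm{acy}}$, so they do not interfere with the ``collapsing'' of cycles.

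The key steps, in order, would be: (1) Record basic structural facts relating $\cG$ and $\G^{\mathrm{acy}}$: the node set is unchanged; $\mathrm{sc}_{\G^{\mathrm{acy}}}(A) = \mathrm{sc}_{\G}(A)$ for every $A$ (the bidirected clauses make each $\mathrm{sc}_{\G}$-class a bidirected-connected set in $\G^{\mathrm{acy}}$, and the directed edges of $\G^{\mathrm{acy}}$ only go between distinct classes, so no new cycles appear); and $\mathrm{des}_{\G^{\mathrm{acy}}}(A) = \mathrm{des}_{\G}(A)$, so the collider condition (rule (2)) has the same meaning in both graphs. (2) Prove the contrapositive in each direction by translating walks. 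Given a $\sigma$-connecting walk $\pi$ in $\cG$ relative to $\cS$, construct a $d$-connecting walk in $\G^{\mathrm{acy}}$: replace each maximal directed sub-walk of $\pi$ that passes through a single $\mathrm{sc}_{\G}$-class by a corresponding bidirected/directed segment in $\G^{\mathrm{acy}}$, keep undirected edges as undirected edges (legitimate by the side condition, which is exactly what rule (3) of $\sigma$-blocking was already checking), and verify that non-colliders in $\cS$ on the new walk still fail to block it — this is where the ``$\pi(j) \notin \mathrm{sc}(\pi(i))$'' clause of $\sigma$-separation matches the absence of that vertex's in-cycle edges in the acyclification. (3) Conversely, given a $d$-connecting walk in $\G^{\mathrm{acy}}$, lift each of its edges to a walk in $\cG$ (a directed edge $A \to B$ of $\G^{\mathrm{acy}}$ lifts to an edge into $\mathrm{sc}_{\G}(B)$ followed by a directed path inside that component to $B$; a bidirected edge lifts either to an in-component directed connection or to a genuine bidirected edge between the components; undirected edges lift to themselves), concatenate, and check that the resulting walk is $\sigma$-connecting — colliders are preserved by step (1), and any non-collider in $\cS$ that arises inside a lifted component has both neighbours in its own $\mathrm{sc}_{\G}$-class, so rule (3) cannot fire there, exactly as required.

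I expect the main obstacle to be bookkeeping in step (2)/(3): making the walk-translation precise when the walk $\pi$ enters, winds around inside, and exits a strongly connected component several times, and checking carefully that colliders are neither created nor destroyed at the ``seams'' where directed segments through a component are replaced by single edges of $\G^{\mathrm{acy}}$ (and vice versa). The undirected edges themselves add little genuine difficulty precisely because, as noted, they are inert with respect to cycles and are handled identically by the definitions of $\sigma$-blocking and of acyclification; the real content is the cyclic directed/bidirected part, and for that the argument is essentially that of \cite{BonFOrPetMoo21} with $\cU$ carried along passively. Once the walk correspondence is set up so that it preserves the status (collider vs.\ non-collider, in $\cS$ vs.\ not, descendant-in-$\cS$ or not) of every relevant vertex, both implications follow by directly comparing rule (2) and rule (3) of $\sigma$-blocking in $\cG$ with the two blocking rules of $d$-separation in $\G^{\mathrm{acy}}$.
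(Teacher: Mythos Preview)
The paper does not prove this proposition; it simply asserts that the result is a straightforward generalization of \cite[Supplement, Proposition A.19]{BonFOrPetMoo21}, which covers the case $\cU = \emptyset$. Your plan---carry over the walk-translation argument from the DMG case while letting undirected edges ride along passively---is exactly this generalization, so in spirit your proposal matches the paper.

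One of your ``basic structural facts'' in step (1) is false, however, and if used as stated it would derail the collider bookkeeping. It is not true that $\mathrm{des}_{\G^{\mathrm{acy}}}(A) = \mathrm{des}_{\G}(A)$: take $\cG$ with edges $A \to A'$, $A' \to A$, $A' \to B$; then $B \in \mathrm{des}_{\G}(A)$, but the only directed edge of $\G^{\mathrm{acy}}$ is $A' \to B$, so $\mathrm{des}_{\G^{\mathrm{acy}}}(A) = \{A\}$. The correct identity is $\mathrm{des}_{\G}(A) = \bigcup_{A'' \in \mathrm{sc}_{\G}(A)} \mathrm{des}_{\G^{\mathrm{acy}}}(A'')$. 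Consequently, when a $\sigma$-connecting walk in $\cG$ has a collider at $A$ that is unblocked via some $D \in \mathrm{des}_{\G}(A) \cap \cS$, the translated walk in $\G^{\mathrm{acy}}$ may have to pass through a \emph{different} node $A'' \in \mathrm{sc}_{\G}(A)$---one for which $D \in \mathrm{des}_{\G^{\mathrm{acy}}}(A'')$---rather than through $A$ itself. The definition of acyclification makes this rerouting possible (any parent of $\mathrm{sc}_{\G}(A)$ in $\cG$ acquires a directed edge into every member of $\mathrm{sc}_{\G}(A)$ in $\G^{\mathrm{acy}}$), and this is the real combinatorial content of the argument in \cite{BonFOrPetMoo21}; your sketch hides it behind the incorrect descendant identity. (Your claim $\mathrm{sc}_{\G^{\mathrm{acy}}}(A) = \mathrm{sc}_{\G}(A)$ is also wrong---$\G^{\mathrm{acy}}$ is acyclic by your own reasoning, so every strongly connected component is a singleton---but this one is harmless, since $d$-separation never references strongly connected components.) Once the rerouting is handled correctly, the treatment of undirected edges is indeed routine, exactly as you outline.
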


\section{Group (D)MGs} \label{sec.groupDMGs}
We will now move to the setting where nodes of graphs are no longer supposed to correspond to scalar random variables but to groups of random variables. If the graphs of groups under investigation are assumed acyclic and directed, they appear in the literature under the name \emph{Group DAGs} \cite{ParKas17} or \emph{Cluster DAGs} \cite{anand_causal_2023}. We will adopt the former terminology. Even for directed acyclic graphs, many of the results below including those of Sections \ref{sec.Markov} and \ref{sec.faithfulness} are new. Proofs of the results of this section are either provided immediately or have been moved to Appendix \ref{app.proofs_sec_groupDMGs}.

From now on we will reserve the bold letter $\lX$ for a given countable set $\lX = \{ X_1,X_2,\dots \}$ of \emph{micro nodes}. Although all results of this section are still purely graphical, we will also sometimes freely refer to the micro nodes as \emph{micro-variables} as they will correspond to random variables later on. A \emph{partition} of $\lX$ is a set $\cP$ of pairwise disjoint subsets  of $\lX$ such  that $\cup_{\Y \in \cP} \Y = \lX$. Partitions will always be assumed finite and its elements will be called \emph{variable groups} and will be denoted by bold letters other than $\lX$, e.g. $\W,\Y,\Z$.

\begin{definition}  \label{def.graphs}
Let $\lX$ be the set of micro nodes, and let $\cP$ be a partition of $\lX$ into finitely many subsets. A \emph{(directed) mixed graph of groups} or \emph{group (D)MG} is a (directed) mixed graph $\cG = (\cV, \cE, \cB, \cU)$ whose nodes are the elements of $\cP$, i.e. $\cV = \cP$. If $\cG$ is an acyclic directed graph, we speak of a \emph{Group DAG}.
\end{definition}

To clearly distinguish the usual setting from the group setting, we will speak of a micro (D)MG, respectively a micro DAG if all groups are of size one. There are two natural ways of deriving a Group MG: one can (a) coarsen a MG over micro nodes to a group MG or (b) use a structural causal model over random vectors to induce a group MG directly. The former approach is the main focus of this work while the latter will be defined and shortly discussed in Section \ref{sec.group_SCM}. 

\subsection{From micro-variable graphs to graphs of groups} \label{subsec.micro_to_macro}

If we start out with a mixed graph $\G$ over (the micro nodes in) $\lX$ and a partition  $\cP$ of $\lX$, there is a straightforward way to obtain a group MG over $\cP$ by `coarsening' the graph $\G$. The resulting graph is the quotient of the $\G$ with respect to $\cP$ (in the category-theoretical sense) and is therefore referred to as the \emph{quotient graph} of $\cG$ (w.r.t. $\cP$) in graph theory, see e.g. \cite{mcconnell2005linear}. In the context of causal inference, quotient graphs of (bi)directed graphs were first introduced in \cite[Definition 1]{anand_causal_2023} under the name \emph{cluster DAGs}. 

\begin{definition}[ see \cite{anand_causal_2023}] \label{def.coarsegraph}
Let $\cG$ be a mixed graph over $\lX$, and let $\cP$ be a partition of $\lX$. The \emph{coarse graph} or \emph{quotient graph} $\mathrm{co}(\G,\cP)$ is the mixed graph with nodes $\Y \in \cP$ obtained by
\begin{itemize}
\item[(i)] drawing a directed edge $\Y \to \Z$ if and only if $\Y \neq \Z$ and there is a directed edge $Y \to Z$ on $\G$ with $Y \in \Y$ and $Z \in \Z$;
\item[(ii)] drawing a bidirected edge $\Y \leftrightarrow \Z$ if and only if $\Y \neq \Z$ and there is a bidirected edge $Y \leftrightarrow Z$ on $\G$ with $Y \in \Y$ and $Z \in \Z$;
\item[(iii)] drawing an undirected edge $\Y - \Z,$ if and only if $\Y \neq \Z$ and there is an undirected edge $Y - Z$ on $\G$ with $Y \in \Y$ and $Z \in \Z$.
 \end{itemize}
Note that we do not allow self-edges on $\mathrm{co}(\G,\cP)$ but that multiple edges, each of a different type, are possible between two nodes of $\mathrm{co}(\G,\cP)$. 
\end{definition}


\noindent Clearly, in this generality the newly defined graph $\mathrm{co}(\G,\cP)$ need not be acyclic even if the underlying micro graph $\cG$ is a DAG. On the other hand, the coarse graph \emph{can} be acyclic even if the micro graph $\cG$ does have cycles, see Figure \ref{fig.cycles} for illustrations of both of these statements.

\begin{figure}[h!]
\centering
 \begin{subfigure}[b]{0.4\textwidth}
	 \centering
\includegraphics[scale=0.2]{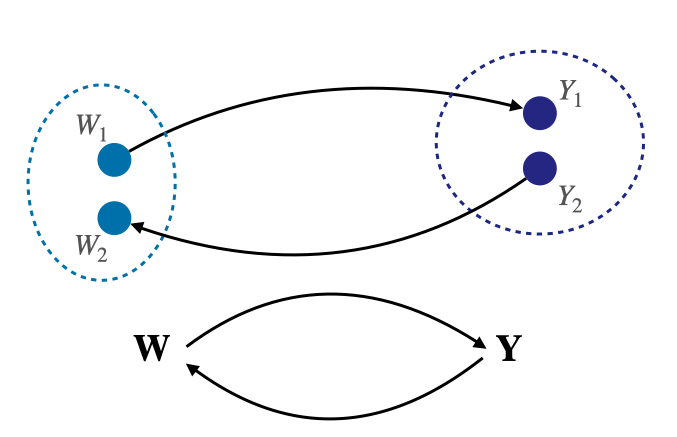}
	\end{subfigure}
     \begin{subfigure}[b]{0.4\textwidth} 
     \centering
\includegraphics[scale=0.2]{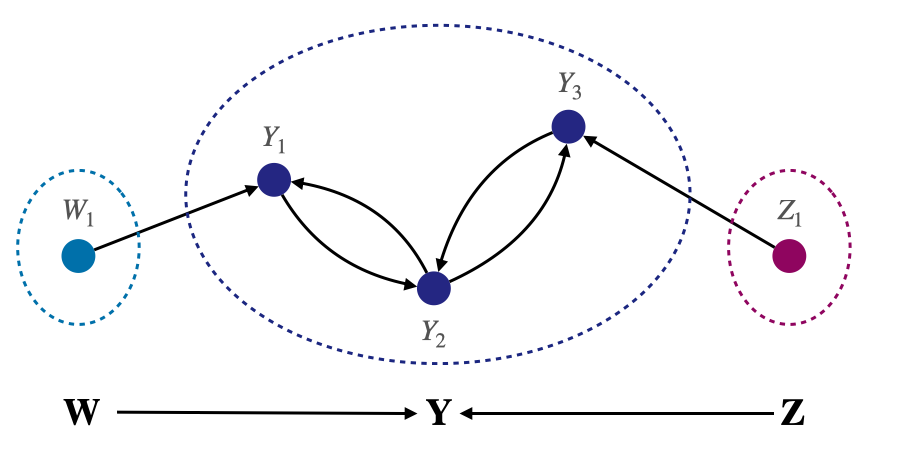}
	\end{subfigure}
\caption{Left: A DAG partitioned such that the resulting group DMG is cyclic, see also \cite[Figure 1(d)]{anand_causal_2023}. Right: A cyclic micro DMG partitioned such that the resulting group DMG is acyclic.} \label{fig.cycles}
\end{figure}

\noindent To discuss separation on directed graphs of groups, it is useful to introduce \emph{walk (path) segments} and \emph{coarse paths}.

\begin{definition} \label{def.part-segments}
Let $\cG$ be a micro MG over $\lX$, and let $\cP$ be a partition of $\lX$. Moreover, let $\pi = (\pi(1),e_1,\pi(2),\dots,e_{m-1},\pi(m))$ be a walk on $\cG$. A subwalk $\pi(i,j) = (\pi(i),e_i,\dots,e_{j-1},\pi(j)), \ i \leq j$ of $\pi$ is called a $\cP$\emph{-segment} of $\pi$ if there exists a group $\Y \in \cP$ such that $\pi(l) \in \Y$ for all $i \leq l \leq j$ and $\pi(i-1), \pi(j+1) \notin \Y$. If $i =1$ or $j = m$, we only require the respective one-sided condition.
\end{definition}

We can thus represent any walk $\pi = (\pi(1),e_1,\pi(2),\dots,e_{m-1},\pi(m))$ on a mixed graph as a sequence $(\pi(i_0,i_1),e_{i_1},\pi(i_1,i_2),e_{i_2},\dots,\pi(i_{s-1},i_s)), \ i_0 = 1, \ i_s = m$ where $\pi(i_l,i_{l+1})$ are the $\cP$-segments of $\pi$ and $e_{i_l}$ are edges that connect nodes that belong to different groups of $\cP$. We call this representation the $\cP$-\emph{segment representation} of $\pi$, see Figure \ref{fig.path-coarsening}.

\begin{definition} \label{def.coarsepath}
Let $\G$ be a micro MG over $\lX$, let $\cP$ be a partition of $\lX$ and let $\Y,\Z \in \cP$. Consider a walk $\pi$ from $Y \in \Y $ to $Z \in \Z$ on $\G$ with $\cP$-segment representation  
$(\pi(i_0,i_1),e_{i_1},\pi(i_1,i_2),e_{i_2},\dots,\pi(i_{s-1},i_s))$, $i_0 = 1, \ i_s = m$. The \emph{coarse walk (path)} $\co(\pi) = (\co(\pi)(1),\tilde{e}_1,\dots,\tilde{e}_{u-1},\co(\pi)(u))$ of $\pi$ is the walk on $\mathrm{co}(\cG,\cP)$ defined as follows:
\begin{itemize}
\item[(i)] $\co(\pi)(l)$ is the unique $\W \in \cP$ containing the nodes of the $\cP$-segment $\pi(i_{l-1},i_l)$;
\item[(ii)] $\tilde{e}_l$ connects $\co(\pi)(l)$ and $\co(\pi)(l+1)$ and is of the same type (directed, bidirected, undirected) as $e_{i_l}.$
\end{itemize}
\end{definition}

\begin{figure}[h!]
\centering
\includegraphics[scale=0.2]{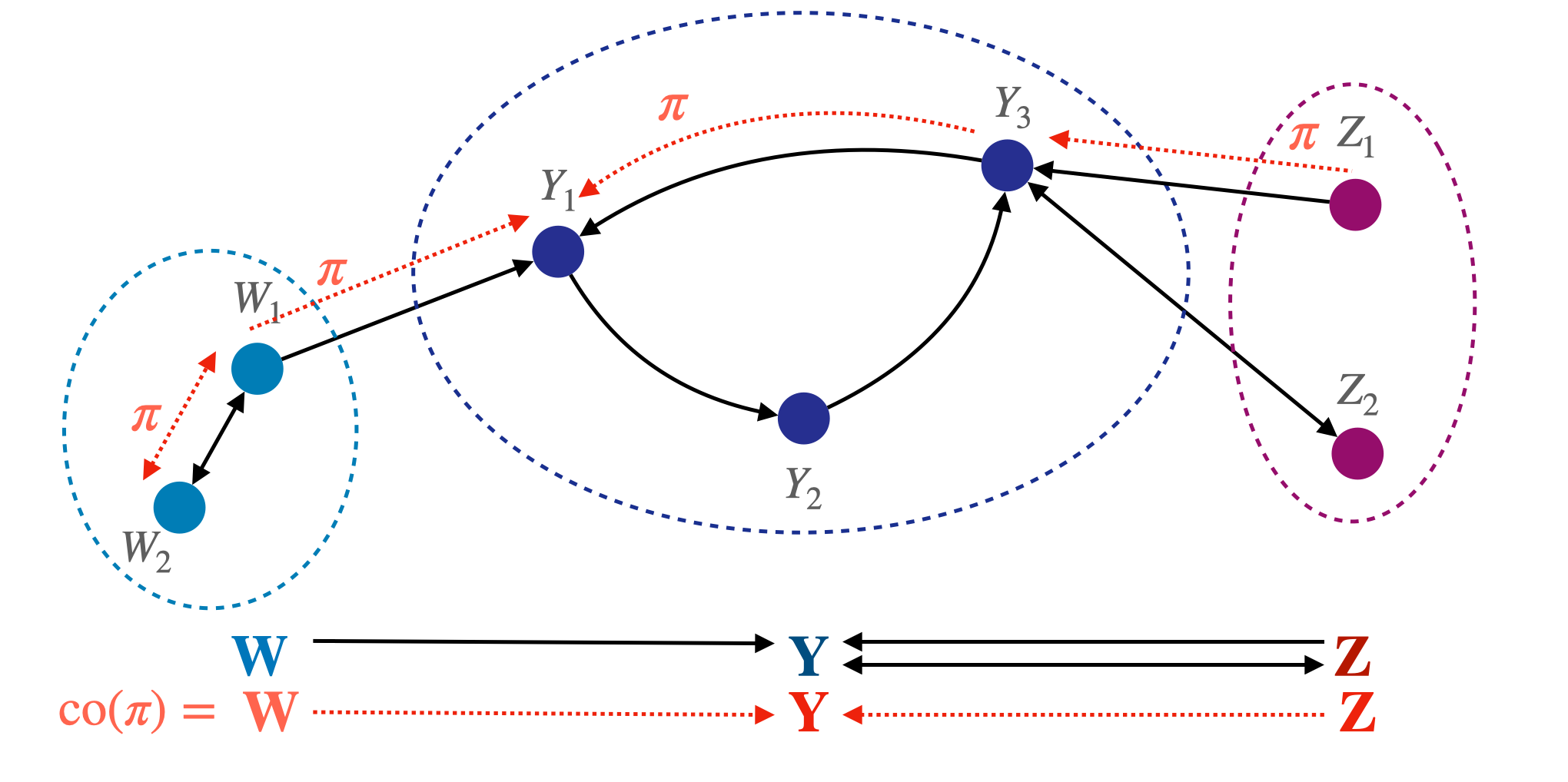}
\caption{The micro path $\pi$ (in red) from $W_2$ to $Z_1$ is coarsened to the path $\co(\pi)$ in the group DMG $\co(\G,\cP), \ \cP = \{ \W,\Y,\Z \}$. The three $\cP$-segments of $\pi$ are $W_2 \leftrightarrow W_1$, $Y_1 \leftarrow Y_3$ and $Z_1$.} \label{fig.path-coarsening}
\end{figure}

\begin{remark}
If $\pi$ in Definition \ref{def.coarsepath} is a path, then $\co(\pi)$ need not be a path as well. For instance if $\pi$ is of the form $\pi = W_1\to Y_1 \to W_2 \to Y_2$ and the micro nodes are grouped as $\W = \{ W_1,W_2 \}, \ \Y = \{ Y_1,Y_2 \}$, then $\co(\pi) = \W \to \Y \to \W \to \Y$ is no longer a path. On the other hand, a micro-walk that is not a path can coarsen to a macro-path if micro-node repetitions only happen within $\cP$-segments. Note also, that if $\pi$ is a directed walk, then $\co(\pi)$ is directed as well, see  \cite[Supplement, Proposition 2]{anand_causal_2023}.
\end{remark}

\begin{lemma} \label{lem.acyclicparts}
Let $\cG$ be a micro MG over $\lX$, and let $\cP$ be a partition of $\lX$.
\begin{itemize}
\item[(i)] If $\mathrm{co}(\G,\cP)$ is acyclic, then for any strongly connected component $\cW$ of $\cG$, there is $\Y \in \cP $ such that $\cW \subset \Y$. 
\item[(ii)] The converse of (i) is not true.
\item[(iii)] If the elements of $\cP$ are exactly the strongly connected components of $\cG$, then $\mathrm{co}(\G,\cP)$ is acyclic.\footnote{ In this case, $\mathrm{co}(\G,\cP)$ is also sometimes referred to as the \emph{condensation} of $\cG$ in the graph theory literature.}
\end{itemize}
\end{lemma}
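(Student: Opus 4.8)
The plan is to prove the three items separately, each being short. For item~(i), I would argue by contraposition, or directly: let $\cW$ be a strongly connected component of $\cG$ and suppose it is \emph{not} contained in a single group, so there are $A, B \in \cW$ with $A \in \Y$, $B \in \Z$ and $\Y \neq \Z$. Since $\cW$ is strongly connected, there is a directed micro-path from $A$ to $B$ and a directed micro-path from $B$ to $A$; concatenating them gives a directed micro-walk $\pi$ from $A$ back to $A$ that passes through $B$. Applying the coarsening operation of Definition~\ref{def.coarsepath} and the fact (noted in the Remark) that $\co(\pi)$ is a directed walk, I obtain a directed walk in $\co(\G,\cP)$ that starts at $\Y$, visits $\Z$ (because $B \in \Z$ lies in a $\cP$-segment distinct from the one containing the first and last nodes, as $A \notin \Z$), and returns to $\Y$. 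Since $\Y \neq \Z$, this directed walk is nontrivial and hence contains a cycle, contradicting acyclicity of $\co(\G,\cP)$. The one point needing a little care is confirming that $\co(\pi)$ genuinely contains a node other than $\Y$: this holds because $B \notin \Y$, so the $\cP$-segment of $\pi$ containing $B$ maps to a group different from $\Y$.

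For item~(ii), I would simply exhibit a counterexample, pointing to the right-hand graph of Figure~\ref{fig.cycles}: there the micro DMG has cycles, but one can choose a partition coarser than the strongly connected components (merging several components into one group) so that every strongly connected component is contained in some group, yet the coarse graph still fails to be acyclic --- or, even more simply, note that in the right panel of Figure~\ref{fig.cycles} the partition places each cycle inside a group but a cyclic arrangement among the groups could remain; I would describe a concrete small instance (e.g.\ three micro-cycles $\cW_1, \cW_2, \cW_3$ each inside its own group, with cross edges $\cW_1 \to \cW_2 \to \cW_3 \to \cW_1$) so that the hypothesis of (i) holds but $\co(\G,\cP)$ has the cycle $\Y_1 \to \Y_2 \to \Y_3 \to \Y_1$.

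For item~(iii), suppose $\cP$ consists exactly of the strongly connected components of $\cG$ and, for contradiction, that $\co(\G,\cP)$ contains a cycle, i.e.\ a nontrivial directed walk $\Y_1 \to \Y_2 \to \cdots \to \Y_k \to \Y_1$ with the $\Y_i \in \cP$ and $\Y_1 = \Y_k$ but at least one $\Y_i \neq \Y_1$. By Definition~\ref{def.coarsegraph}(i), each edge $\Y_i \to \Y_{i+1}$ of the coarse graph comes from a micro edge $A_i \to B_{i+1}$ with $A_i \in \Y_i$, $B_{i+1} \in \Y_{i+1}$. Within each group $\Y_i$ --- which is a strongly connected component --- there is a directed micro-path from $B_i$ to $A_i$. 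Splicing these together with the cross edges yields a closed directed micro-walk visiting nodes in at least two distinct strongly connected components; but then all nodes on this walk lie in a single strongly connected component (any two nodes on a directed cycle are mutually reachable), contradicting the maximality/disjointness of strongly connected components. Hence $\co(\G,\cP)$ is acyclic.

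The only genuine obstacle is bookkeeping in items~(i) and~(iii): one must be careful that the concatenated micro-walks really do connect up (the endpoints match by construction) and that the resulting coarse walk is nontrivial, i.e.\ actually leaves the starting group. Both reduce to the observation that nodes in different groups cannot lie in the same $\cP$-segment, so a micro-walk crossing between groups coarsens to a coarse walk with at least one edge. Everything else is a direct application of Definitions~\ref{def.coarsegraph}--\ref{def.coarsepath}, the Remark on coarsening of directed walks, and the basic fact that the strongly connected components partition $\cV$.
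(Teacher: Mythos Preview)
Your arguments for items~(i) and~(iii) are correct and essentially match the paper's proof: in~(i) you coarsen the concatenated micro-cycle while the paper coarsens the two directed micro-paths separately and then concatenates, and in~(iii) both you and the paper splice within-group directed paths between the cross edges to build a micro-cycle spanning two distinct strongly connected components.

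Your treatment of item~(ii), however, contains a genuine gap. The converse of~(i) asserts that if every strongly connected component of $\cG$ is contained in some group of $\cP$, then $\co(\G,\cP)$ is acyclic; a counterexample must therefore satisfy this containment hypothesis while still having a cyclic coarse graph. Your explicit construction---three internally cyclic groups $\cW_1,\cW_2,\cW_3$ with cross edges $\cW_1\to\cW_2\to\cW_3\to\cW_1$---does \emph{not} satisfy the hypothesis: the internal cycles together with the cross edges make every micro-node reachable from every other, so the entire micro-graph is a single strongly connected component, which is then split across three groups. Hence the premise of the converse fails and you have not produced a counterexample. You also point to the right panel of Figure~\ref{fig.cycles}, but that panel illustrates the opposite phenomenon (a cyclic micro graph with an \emph{acyclic} coarse graph). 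The correct and simplest counterexample is the \emph{left} panel of Figure~\ref{fig.cycles}: the micro graph there is a DAG, so every strongly connected component is a singleton and is trivially contained in its group, yet the chosen partition produces a cyclic coarse graph. This is exactly what the paper invokes.
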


\begin{definition} \label{def.acyclicity}
We will call a partition $\cP$ of $\lX$ 
\begin{itemize}
\item[(i)] \emph{acyclic} w.r.t. the micro MG $\cG$ if the coarse graph $\co(\G,\cP)$ is acyclic.
\item[(ii)]  \emph{maximally acyclic} if $\cP$ is the partition of $\cG$ into its strongly connected components.
\end{itemize}  
\end{definition}

In particular, acyclicity of a partition $\cP$ entails \emph{unidirectionality}, that is, all directed edges between micro nodes $Y \in \Y$ and $Z \in \Z, \ \Y \neq \Z$ on the micro graph $\cG$ must point in the same direction, e.g. from the elements of $\Y$ to the elements of $\Z$.

It was pointed out in \cite{anand_causal_2023} that coarsening micro DAGs to group DAGs induces an equivalence relation on the set of DAGs over $\lX$ and this observation carries through when the acyclicity assumption on the micro DAGs is dropped. 

\begin{definition} \label{def.equivalence_micro_graphs}
Given a partition $\cP$, we will call two micro MGs $\G$ and $\G'$ $\cP$-equivalent if their coarse graphs with respect to $\cP$ are the same, i.e. if $\mathrm{co}(\G,\cP) = \mathrm{co}(\G',\cP)$.
\end{definition}


The two operations of acyclification in the sense of Definition \ref{def.acyclification} and coarsening in the sense of Definition \ref{def.coarsegraph} do not commute in general, see Figure \ref{fig.acyclify_coarsen}. However, if the partition for coarsening is acyclic with respect to the micro MG, then acyclification of the micro MG has no effect on coarsening.

\begin{figure}[h!]
\centering
\includegraphics[scale=0.3]{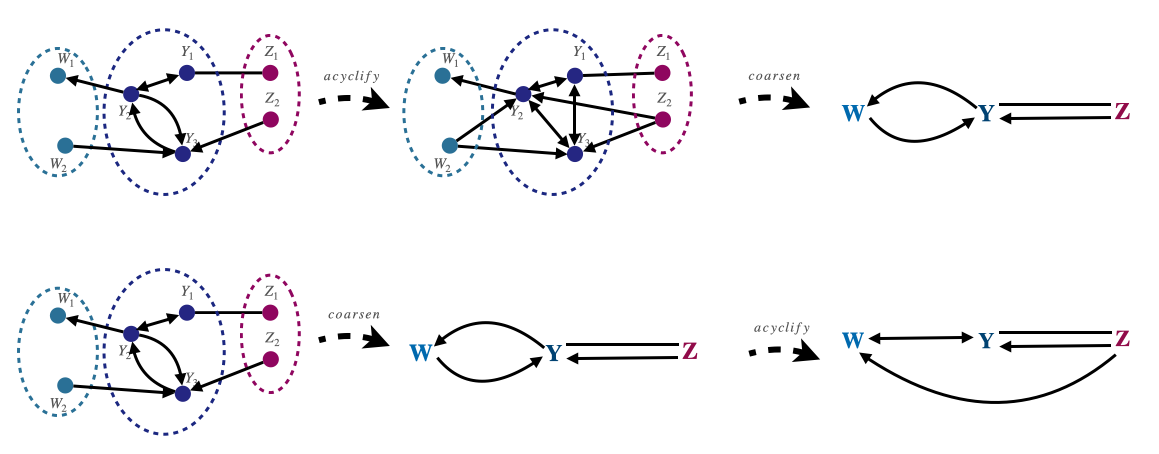}
\caption{Illustration of acyclification and coarsening. In general, these operations do not commute with each other.} \label{fig.acyclify_coarsen}
\end{figure}

\begin{theorem} \label{thm.coarsen_acyclify}
Let $\cG$ be a mixed graph and let $\cP$ be a partition of its nodes. If $\cP$ is acyclic with respect to $\cG$, then
\[
\mathrm{co}(\cG^{\mathrm{acy}},\cP) = \mathrm{co}(\cG,\cP).
\]
\end{theorem}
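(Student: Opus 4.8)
The plan is to prove equality of the two mixed graphs on the common node set $\cP$ by checking, separately for directed, bidirected and undirected edges, that an edge between two groups $\Y$ and $\Z$ appears in $\mathrm{co}(\cG^{\mathrm{acy}},\cP)$ if and only if it appears in $\mathrm{co}(\cG,\cP)$. The single structural input driving everything is Lemma \ref{lem.acyclicparts}(i): since $\cP$ is acyclic with respect to $\cG$, every strongly connected component of $\cG$ is contained in one group, so $\mathrm{sc}_{\cG}(X) \subseteq \Y$ whenever $X \in \Y$; and because $\cP$ is a partition, $\mathrm{sc}_{\cG}(X)$ is disjoint from every group other than the one containing $X$.

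\emph{Directed edges.} For $\mathrm{co}(\cG,\cP) \subseteq \mathrm{co}(\cG^{\mathrm{acy}},\cP)$: given a micro edge $Y \to Z$ in $\cG$ with $Y \in \Y$, $Z \in \Z$ and $\Y \neq \Z$, the node $Y$ is a parent of $Z \in \mathrm{sc}_{\cG}(Z)$ and $Y \notin \mathrm{sc}_{\cG}(Z)$ because $\mathrm{sc}_{\cG}(Z) \subseteq \Z$ while $Y \in \Y$ is disjoint from $\Z$; hence $Y \to Z \in \cE^{\mathrm{acy}}$, which coarsens to $\Y \to \Z$. Conversely, an edge $Y \to Z \in \cE^{\mathrm{acy}}$ with $Y \in \Y$, $Z \in \Z$, $\Y \neq \Z$ arises, by Definition \ref{def.acyclification}, from some micro edge $Y \to Z'$ in $\cG$ with $Z' \in \mathrm{sc}_{\cG}(Z) \subseteq \Z$, and this micro edge coarsens to the same $\Y \to \Z$ in $\mathrm{co}(\cG,\cP)$.

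\emph{Undirected and bidirected edges.} The undirected case is immediate: $\cG^{\mathrm{acy}}$ retains precisely those undirected edges of $\cG$ whose endpoints lie in distinct strongly connected components, and by the lemma endpoints in distinct groups automatically lie in distinct components. The bidirected case is the only one needing care, since acyclification both discards and manufactures bidirected edges: it adds $A \leftrightarrow B$ whenever $\mathrm{sc}_{\cG}(A) = \mathrm{sc}_{\cG}(B)$, or whenever $A' \leftrightarrow B' \in \cB$ for some $A' \in \mathrm{sc}_{\cG}(A)$, $B' \in \mathrm{sc}_{\cG}(B)$. If $\Y \leftrightarrow \Z$ lies in $\mathrm{co}(\cG^{\mathrm{acy}},\cP)$ via a witness $Y \leftrightarrow Z \in \cB^{\mathrm{acy}}$ with $Y \in \Y$, $Z \in \Z$, $\Y \neq \Z$, the clause $\mathrm{sc}_{\cG}(Y) = \mathrm{sc}_{\cG}(Z)$ is impossible, since it would force a nonempty strongly connected component into the two disjoint groups $\Y$ and $\Z$ simultaneously; hence there are $Y' \in \mathrm{sc}_{\cG}(Y) \subseteq \Y$ and $Z' \in \mathrm{sc}_{\cG}(Z) \subseteq \Z$ with $Y' \leftrightarrow Z' \in \cB$, and this micro bidirected edge coarsens to $\Y \leftrightarrow \Z$ in $\mathrm{co}(\cG,\cP)$. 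Conversely, any micro bidirected edge $Y \leftrightarrow Z$ in $\cG$ with $Y \in \Y$, $Z \in \Z$, $\Y \neq \Z$ is, taking $A' = Y$ and $B' = Z$, already a bidirected edge of $\cG^{\mathrm{acy}}$, hence coarsens to $\Y \leftrightarrow \Z$ in $\mathrm{co}(\cG^{\mathrm{acy}},\cP)$; in every coarsening step here the two endpoints stay in different groups because $\cP$ is a partition and $\Y \neq \Z$.

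I do not expect a serious obstacle once Lemma \ref{lem.acyclicparts}(i) is available, each case then being a short containment argument. The subtlety worth highlighting is in the bidirected case, namely ruling out the shared strongly connected component clause of Definition \ref{def.acyclification}; more conceptually, this is where acyclicity of $\cP$ is genuinely used, since it confines strongly connected components to single groups, which is exactly what prevents acyclification from creating new cross-group edges and ensures the edges it does keep still join the same pair of groups. Without this hypothesis the identity fails, as Figure \ref{fig.acyclify_coarsen} illustrates.
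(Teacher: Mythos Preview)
Your proposal is correct and follows essentially the same approach as the paper's proof: both arguments verify equality edge-type by edge-type, invoking Lemma~\ref{lem.acyclicparts}(i) to confine each strongly connected component of $\cG$ to a single group, which is precisely what is needed to control the edges that acyclification adds or removes. The paper's treatment is organized identically, and your handling of the bidirected case---ruling out the $\mathrm{sc}_{\cG}(Y)=\mathrm{sc}_{\cG}(Z)$ clause via the disjointness of $\Y$ and $\Z$---matches the paper's argument.
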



\begin{lemma} \label{lem.sc-inherit}
Let $\cG$ be a mixed graph and let $\cP$ be a partition of its nodes. If there exist $Y \in \Y$ and $Z \in \Z$ such that $\mathrm{sc}_{\G}(Y) = \mathrm{sc}_{\G}(Z)$, then $\mathrm{sc}_{\mathrm{co}(\G,\cP)}(\Y) = \mathrm{sc}_{\mathrm{co}(\G,\cP)}(\Z)$.
\end{lemma}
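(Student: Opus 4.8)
The plan is to unwind the definition of strongly connected components: two nodes of a mixed graph lie in the same strongly connected component precisely when there are directed paths between them in both directions (with the convention that the trivial path handles the case of a single node). So it suffices to produce, in $\mathrm{co}(\G,\cP)$, a directed path from $\Y$ to $\Z$ and a directed path from $\Z$ to $\Y$. If $\Y = \Z$ there is nothing to prove, so assume $\Y \neq \Z$.

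First I would use the hypothesis $\mathrm{sc}_{\G}(Y) = \mathrm{sc}_{\G}(Z)$: by definition of strong connectivity there is a directed path $\pi$ from $Y$ to $Z$ on $\G$ and a directed path $\pi'$ from $Z$ to $Y$ on $\G$. Next I would invoke the fact recorded in the remark following Definition \ref{def.coarsepath} (namely \cite[Supplement, Proposition 2]{anand_causal_2023}) that the coarsening of a directed walk is again a directed walk. Applying this, $\co(\pi)$ is a directed walk from $\Y$ to $\Z$ on $\mathrm{co}(\G,\cP)$ and $\co(\pi')$ is a directed walk from $\Z$ to $\Y$ on $\mathrm{co}(\G,\cP)$; both are nontrivial since $\Y \neq \Z$.

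Finally I would pass from directed walks to directed paths by the standard loop-erasure argument: any directed walk from $A$ to $B$ contains a directed path from $A$ to $B$ (repeatedly excise the segment between two occurrences of a repeated node). Hence there is a directed path $\Y \to \cdots \to \Z$ and a directed path $\Z \to \cdots \to \Y$ in $\mathrm{co}(\G,\cP)$, so $\{\Y,\Z\}$ is contained in a strongly connected subset of $\mathrm{co}(\G,\cP)$, which forces $\mathrm{sc}_{\mathrm{co}(\G,\cP)}(\Y) = \mathrm{sc}_{\mathrm{co}(\G,\cP)}(\Z)$, as claimed.

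I do not expect a genuine obstacle here; the only points requiring care are the bookkeeping in the edge case $\Y = \Z$, the (routine) remark that directed walks contain directed paths, and making sure the preservation of directedness under coarsening is being applied to walks rather than paths, since $\co(\pi)$ need not be a path even when $\pi$ is.
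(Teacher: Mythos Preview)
Your proposal is correct and follows essentially the same approach as the paper: pass from directed micro-paths between $Y$ and $Z$ to directed coarse walks between $\Y$ and $\Z$, and conclude. If anything, you are more careful than the paper's one-line proof, which speaks of the ``induced coarse path'' being a directed path without explicitly noting (as you do) that coarsening a path may yield only a walk and that a loop-erasure step is needed.
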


\begin{proof}
This result directly follows from the following fact: if there is a directed path from $Y$ to $Z$ (respectively $Z$ to $Y$), then the induced coarse path is a directed path from $\Y$ to $\Z$ (respectively from $\Z$ to $\Y$). 
\end{proof}

The following result clarifies the relationship between $\sigma$-separation on the micro- and the group-level.  It generalizes \cite[Theorem 1]{anand_causal_2023} to $\sigma$-separation in directed mixed graphs and also demonstrates that said theorem does not generalize to arbitrary mixed graphs in which undirected edges are present.

\begin{lemma} \label{lem.d-sep}
Let $\G = (\cV,\cE,\cB)$ be a directed mixed graph, and let $\cP$ be a partition of its nodes. Consider a micro walk $\pi$ on $\G$ and denote its induced coarse walk on $\mathrm{co}(\G,\cP)$ by $\co(\pi)$.
\begin{itemize}
    \item[(i)] If $\co(\pi)$ is $\sigma$-blocked by a set $\cS \subset \cP$ of nodes of $\mathrm{co}(\G,\cP)$, then $\pi$ is $\sigma$-blocked by $\mathcal{T} = \bigcup_{\W \in \mathcal{S}} \W$.
    \item[(ii)] The converse of (i) is not true.
    \item[(iii)] If $\mathcal{S} \subset \cP$ is a set of nodes of $\mathrm{co}(\G,\cP)$ that $\sigma$-separates $\Y, \Z$ in $\tilde{\G}$, then $\mathcal{T} = \bigcup_{\W \in \mathcal{S}} \W$ $\sigma$-separates any pair of micro nodes $Y \in \Y, Z \in \Z$ in $\G$.
    \item[(iv)] (i) and (iii) are no longer true in arbitrary mixed graphs.
\end{itemize}
\end{lemma}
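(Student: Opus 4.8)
The plan is to prove (i) by hand, deduce (iii) from it immediately, and settle (ii) and (iv) by small explicit examples; the only real work is in (i).

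For (i), I would fix a $\cP$-segment representation $\pi=(\sigma_1,e^{(1)},\sigma_2,\dots,e^{(s-1)},\sigma_s)$ of the micro walk, so that $\co(\pi)=(\G_1,\tilde e^{(1)},\dots,\tilde e^{(s-1)},\G_s)$, where $\G_l\in\cP$ is the group carrying $\sigma_l$ and $\tilde e^{(l)}$ has the same type and orientation as the cross-group edge $e^{(l)}$ joining the last node of $\sigma_l$ to the first node of $\sigma_{l+1}$. I would then run through the three clauses of Definition~\ref{def.sigma-sep} that could witness that $\co(\pi)$ is $\sigma$-blocked by $\cS$. Clause~(1) is trivial: if $\G_1\in\cS$ then $\pi(1)\in\G_1\subseteq\cT$, and symmetrically for $\G_s$. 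For clause~(2), if $\G_l$ ($1<l<s$) is a collider of $\co(\pi)$ with no descendant in $\cS$, then $e^{(l-1)}$ points into the first node of $\sigma_l$ and $e^{(l)}$ into its last node; walking along $\sigma_l$ and using that a tail at a node of a \emph{directed} mixed graph forces a directed edge out of that node, one sees that $\sigma_l$ contains a collider $\pi(k)$ of $\pi$ (if no inner node were a collider, all internal edges would be right-directed and the last node of $\sigma_l$ would carry two arrowheads). Since the coarsening of a directed micro path is a directed coarse walk (Remark after Definition~\ref{def.coarsepath}; \cite[Supplement, Proposition~2]{anand_causal_2023}) and $\G_l$ has no descendant in $\cS$ — in particular $\G_l\notin\cS$ — the node $\pi(k)$ can have no descendant in $\cT$, so clause~(2) blocks $\pi$. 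For clause~(3), if $\G_l\in\cS$ is a non-collider of $\co(\pi)$ with a neighbour $\G_j$, $j=l\pm1$, such that $\G_j\notin\scc_{\co(\G,\cP)}(\G_l)$ and the connecting coarse edge is $\G_l\to\G_j$ (an undirected edge being impossible in a group DMG), then the endpoint $V$ in $\G_l$ of the corresponding cross-group micro edge lies in $\cT$, is an inner non-collider of $\pi$, and has an outgoing edge $V\to V'$ with $V'\in\G_j$; if $V'$ were in $\scc_{\G}(V)$, a directed micro path $V'\to\cdots\to V$ would coarsen to a directed walk $\G_j\to\cdots\to\G_l$, which together with $\G_l\to\G_j$ would force $\G_j\in\scc_{\co(\G,\cP)}(\G_l)$, a contradiction; so $V'\notin\scc_{\G}(V)$ and clause~(3) blocks $\pi$. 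This proves (i), and (iii) is then immediate: any micro walk from $Y\in\Y$ to $Z\in\Z$ coarsens to a walk from $\Y$ to $\Z$, which is $\sigma$-blocked by $\cS$ by hypothesis, so by (i) the micro walk is $\sigma$-blocked by $\cT$.

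For (ii) I would take $\G$ on $\{W,Y_1,Y_2,Z\}$ with edges $W\to Y_1$, $Y_2\to Y_1$, $Y_2\to Z$ and $\cP=\{\W,\Y,\Z\}$, $\W=\{W\}$, $\Y=\{Y_1,Y_2\}$, $\Z=\{Z\}$: then $\co(\G,\cP)$ is the path $\W\to\Y\to\Z$, while the micro path $\pi=W\to Y_1\leftarrow Y_2\to Z$ is $\sigma$-blocked by $\cT=\emptyset$ (its collider $Y_1$ has no descendant in $\emptyset$) even though $\co(\pi)=\W\to\Y\to\Z$ is not $\sigma$-blocked by $\cS=\emptyset$. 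For (iv) I would take $\G$ on $\{A,M_1,M_2,B\}$ with edges $A\to M_1$, $M_1-M_2$, $B\to M_2$ and $\cP=\{\{A\},\mathbf{M},\{B\}\}$, $\mathbf{M}=\{M_1,M_2\}$: then $\co(\G,\cP)$ has exactly the edges $\{A\}\to\mathbf{M}\leftarrow\{B\}$, so $\{A\}$ and $\{B\}$ are $\sigma$-separated by $\emptyset$ there, whereas the micro path $\pi=A\to M_1-M_2\leftarrow B$ contains no collider (each of $M_1,M_2$ carries exactly one arrowhead) and hence is not $\sigma$-blocked by $\cT=\emptyset$; this single example refutes both (i) and (iii) as soon as undirected edges are allowed.

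The main obstacle is the collider clause of (i): one must convert ``$\G_l$ is a collider of $\co(\pi)$'' into ``$\sigma_l$ contains a collider of $\pi$ with no descendant in $\cT$'', which requires both the walk-tracing argument — exactly the step that breaks for undirected internal edges and thus produces the counterexample in (iv) — and the fact that coarsening sends directed micro paths to directed coarse walks, so that no descendants are created by passing to the coarse graph.
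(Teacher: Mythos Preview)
Your proposal is correct and follows essentially the same route as the paper: a case split over the three $\sigma$-blocking clauses for (i), the immediate corollary for (iii), and small explicit counterexamples for (ii) and (iv). The only cosmetic differences are that the paper invokes Lemma~\ref{lem.sc-inherit} (contrapositive) for clause~(3) where you re-derive it inline, and the paper's counterexamples use one extra node in the middle group ($Y_1\to Y_2\leftarrow Y_3$ for (ii) and $Y_1-Y_2-Y_3$ for (iv)), while your two-node versions work equally well.
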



 We also record the analogue of Lemma \ref{lem.d-sep} for $m$-separation for the sake of completeness.

\begin{lemma} \label{cor.m-sep}
Lemma \ref{lem.d-sep} remains true if $\sigma$-separation is replaced by $m$-separation.
\end{lemma}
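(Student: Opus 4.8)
The plan is to transfer the proof of Lemma~\ref{lem.d-sep} almost word for word, exploiting that $m$- and $\sigma$-separation differ only in blocking condition~(3) and that the $m$-version of that condition is strictly weaker---it omits both the strongly connected component clause and the edge-orientation clause---so that every step of the argument either stays the same or simplifies. For part~(i) I would split into cases according to why $\co(\pi)$ is $m$-blocked by $\cS$. If the first or last node of $\co(\pi)$ lies in $\cS$, then the corresponding micro endpoint of $\pi$ lies in $\cT=\bigcup_{\W\in\cS}\W$, so $\pi$ is $m$-blocked by~(1). If a non-collider $\co(\pi)(l)=\W\in\cS$ of $\co(\pi)$ is the reason, then at least one of the two crossing edges flanking the $\cP$-segment of $\pi$ that maps to $\W$ points out of that segment; the pertinent endpoint of the segment is therefore an inner non-collider of $\pi$ lying in $\cT$, and $\pi$ is $m$-blocked by~(3). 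In contrast to the $\sigma$-proof, no appeal to Lemma~\ref{lem.sc-inherit} is needed at this point.

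The step requiring some care is condition~(2). Suppose a collider $\co(\pi)(l)=\W$ of $\co(\pi)$ having no descendants in $\cS$ blocks $\co(\pi)$, and let $\pi(p,q)$ be the $\cP$-segment of $\pi$ that maps to $\W$. Both crossing edges flanking $\pi(p,q)$ then point into its endpoints $\pi(p)$ and $\pi(q)$, and a short walk argument shows that $\pi(p,q)$ must contain a collider $c$ of $\pi$: were every node of the segment a non-collider, the arrowhead reaching $\pi(p)$ from the left would propagate as $\pi(p)\to\pi(p+1)\to\dots\to\pi(q)$, forcing $\pi(q)$ to be a collider after all (and if $p=q$ the single node $\pi(p)$ is already a collider). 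Since $c\in\W$, it remains to check that $c$ has no descendants in $\cT$; this holds because a directed micro path from $c$ to some $t\in\cT$ would coarsen to a directed walk witnessing the group of $t$---which belongs to $\cS$---as a descendant of $\W$ in $\co(\G,\cP)$, contradicting the hypothesis on $\co(\pi)(l)$. Hence $\pi$ is $m$-blocked by $\cT$ via~(2), which completes part~(i).

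Part~(iii) then follows from part~(i) exactly as in Lemma~\ref{lem.d-sep}: any micro walk between $Y\in\Y$ and $Z\in\Z$ coarsens to a walk between $\Y$ and $\Z$, which is $m$-blocked by $\cS$ by hypothesis, so the micro walk is $m$-blocked by $\cT$; since this holds for every such walk, $\cT$ $m$-separates $Y$ and $Z$. For the negative statements~(ii) and~(iv) I would simply reuse the counterexamples from the proof of Lemma~\ref{lem.d-sep}: they may be taken acyclic, and on acyclic mixed graphs $m$- and $\sigma$-separation coincide---for a non-collider the edge-orientation clause of the $\sigma$-version of~(3) is automatically satisfiable, and acyclicity makes every strongly connected component a singleton, so its $\mathrm{sc}$-clause is met as well---so the same graphs, partitions and conditioning sets witness the failures claimed in~(ii) and~(iv).

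I expect the only real obstacle to be bookkeeping: keeping the $\cP$-segment indexing and the edge orientations along $\pi$ consistent, and handling the degenerate situations in the condition-(2) argument (a single-node segment, or a descendant $t$ lying in the same group as $c$), both of which collapse to the case $\W\in\cS$ and hence contradict that $\W$ is its own descendant. Beyond the ideas already used in the proof of Lemma~\ref{lem.d-sep}, nothing new should be required.
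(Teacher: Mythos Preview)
Your proposal is correct and follows essentially the same approach as the paper's proof: the paper, too, notes that only the non-collider case~(3) differs from the $\sigma$-version and observes that a macro non-collider in $\cS$ forces a micro non-collider of $\pi$ to lie in $\cT$, while the collider case~(2) and the derivation of~(iii) from~(i) carry over verbatim from Lemma~\ref{lem.d-sep}. Your treatment is somewhat more explicit---you pin down the non-collider as the segment endpoint adjacent to the outgoing crossing edge, and you spell out why the acyclic counterexamples for~(ii) and~(iv) make $m$- and $\sigma$-separation coincide---but the underlying argument is the same.
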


The proofs of Lemma \ref{lem.d-sep} (i)-(iii) and of Lemma \ref{cor.m-sep} only require straightforward adjustments of the proof of \cite[Theorem 1]{anand_causal_2023} to $\sigma$-separation ($m$-separation), and to the fact that we need to deal with walks instead of paths. We include these proofs in Appendix \ref{app.proofs_sec_groupDMGs} for the convenience of the reader.

\section{Markov properties for Group (D)MGs} \label{sec.Markov}

In this section, we will quickly recap the different types of Markov properties that relate $m$-separation, respectively $\sigma$-separation, to conditional independence statements for scalar node variables. Then we will discuss the transferal of Markov properties from micro graphs to graphs of groups under coarsening. The results of this section are thus no longer purely graphical and micro nodes will always correspond to univariate random variables while nodes of group MGs will consequently always correspond to groups of variables respectively random vectors. 

If $\cG$ is a mixed graph over a set of node variables $\lX$ with joint distribution $P_{\lX}$, then we recall that the pair $(\G,P_{\lX})$ is said to have the $m$-Markov property (or to be $m$-Markovian) if every valid $m$-separation statement on $\G$ implies the corresponding conditional independence statement, i.e. for $A,B \in \lX$ and $\cS \subset \lX$
\[ A \bowtie^m_{\G} B \ | \ \cS \qquad \Rightarrow \qquad A \ind B \ | \ \cS. \]
If the converse implication also holds, that is
\[ A \ind B \ | \ \cS \qquad \Rightarrow \qquad A \bowtie^m_{\G} B \ | \ \cS, \]
then $(\G,P_{\lX})$ is said to be $m$-faithful. Similar properties can also be defined for $\sigma$- instead of $m$-separation: 
$(\G,P_{\lX})$ is said to have the $\sigma$-Markov property (or to be $\sigma$-Markovian) if for $A,B \in \lX$ and $\cS \subset \lX$
\[ A \bowtie^{\sigma}_{\G} B \ | \ \cS \qquad \Rightarrow \qquad A \ind B \ | \ \cS. \]
and is $\sigma$-faithful if the converse implication also holds, that is
\[ A \ind B \ | \ \cS \qquad \Rightarrow \qquad A \bowtie^{\sigma}_{\G} B \ | \ \cS. \]
To introduce analogous properties for mixed graphs of groups, the first observation is that there are now two possible notions of conditional independence that can be considered: pairwise conditional independence and mutual conditional independence. For convenience, we will assume that all distributions have positive densities.

\begin{definition}[Mutual and pairwise independence]
Two groups of random variables $\Y = \{Y_1, Y_2, \dots \}$ and $\Z = \{ Z_1, Z_2,\dots \}$ are called
\begin{itemize}
    \item[(i)] \emph{mutually conditionally independent} given a third group $\W = \{W_1,W_2,\dots \}$ (written $\Y \ind \Z | \W $) if  their joint conditional density almost surely factorizes as $p(\mathbf{y}, \mathbf{z} | \mathbf{w}) = p(\mathbf{y} | \mathbf{w}) p (\mathbf{z} | \mathbf{w})$;
    \item[(ii)] \emph{pairwise (conditionally) independent} given a third group $\W = \{W_1,W_2,\dots \}$ (written $\Y \ind^{pw} \Z | \W $) if for all $Y \in \Y$ and all $Z \in \Z$, we have $Y \ind Z | \W$.
\end{itemize}
\end{definition}

The following well-known characterization illustrates the difference between pairwise and mutual independence nicely: for mutual independence to hold, not only pairwise independence but also conditional independencies involving entries of $\Y$ and $\Z$ in the conditioning set are required as the next lemma illustrates. For a proof of the following result, see \cite[Section 4]{dawid_conditional_1979}.

\begin{lemma} \label{lem.mutual-indep-identity}
Consider groups of random variables $\Y$, $\Z$ and $\W$ and let $\Z' \subset \Z$ be a non-empty subset. The following are equivalent:
\begin{itemize}
    \item[(i)] $\Y$ and $\Z$ are mutually conditionally independent given $\W$.
    \item[(ii)] We have $\Y \ind \Z' \ | \ \W$ and $\Y \ind \Z\backslash\Z' \ | \ \Z', \W$.
\end{itemize}
\end{lemma}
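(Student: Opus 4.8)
The plan is to reduce the statement to the elementary fact that, under the standing positivity assumption, $\Y \ind \Z \mid \W$ is equivalent to the identity $p(\y \mid \z, \mathbf{w}) = p(\y \mid \mathbf{w})$ holding almost surely, and then to chain such identities using the chain rule for conditional densities. Write $\Z'' := \Z \backslash \Z'$, so that $\Z$ is the disjoint union $\Z' \cup \Z''$ and correspondingly $\z = (\z', \z'')$.

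For the implication (i) $\Rightarrow$ (ii), I would start from $p(\y \mid \z', \z'', \mathbf{w}) = p(\y \mid \mathbf{w})$, which is the density reformulation of $\Y \ind \Z \mid \W$. Integrating both sides against $p(\z'' \mid \z', \mathbf{w})$ over $\z''$ yields $p(\y \mid \z', \mathbf{w}) = p(\y \mid \mathbf{w})$, i.e. $\Y \ind \Z' \mid \W$. Feeding this back into the starting identity gives $p(\y \mid \z', \z'', \mathbf{w}) = p(\y \mid \z', \mathbf{w})$, i.e. $\Y \ind \Z \backslash \Z' \mid \Z', \W$, which together establishes (ii). For the converse (ii) $\Rightarrow$ (i), the two hypotheses read $p(\y \mid \z', \mathbf{w}) = p(\y \mid \mathbf{w})$ and $p(\y \mid \z'', \z', \mathbf{w}) = p(\y \mid \z', \mathbf{w})$; composing them gives $p(\y \mid \z, \mathbf{w}) = p(\y \mid \mathbf{w})$, and multiplying through by $p(\z \mid \mathbf{w})$ recovers the factorization $p(\y, \z \mid \mathbf{w}) = p(\y \mid \mathbf{w})\, p(\z \mid \mathbf{w})$ in the definition of mutual conditional independence.

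The only delicate point --- and the reason positivity is assumed in the surrounding text --- is that every conditional density above must be well defined and each manipulation justified on a set of full measure; with strictly positive joint densities this is automatic, and nothing beyond the graphoid contraction and decomposition/weak-union rules is required (in particular no ``intersection''-type argument). Since a careful measure-theoretic treatment is already given in \cite[Section 4]{dawid_conditional_1979}, I would keep the write-up at this level of detail; the only real work is the bookkeeping of matching the two conditional-independence statements in (ii) to the right density identities and applying them in the correct order.
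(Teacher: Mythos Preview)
Your argument is correct: it is precisely the decomposition/weak-union direction for (i) $\Rightarrow$ (ii) and the contraction direction for (ii) $\Rightarrow$ (i), carried out at the level of conditional densities under the paper's standing positivity assumption. The paper does not give its own proof of this lemma but simply defers to \cite[Section 4]{dawid_conditional_1979}; your write-up is essentially the standard argument found there, so there is nothing to correct or compare.
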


\begin{lemma} \label{lem.mutual-indep-identity2}
Consider disjoint groups of random variables $\Y$, $\Z$ and $\W$ and assume that $\Y$ and $\Z$ are finite and non-empty. If for any $Y \in \Y, \ Z \in \Z$ and any subset $\mathcal{M} \subset \Y \cup \Z \backslash \{Y, Z \}$, we have  $Y \ind Z \ | \ \W, \mathcal{M}$, then  $\Y$ and $\Z$ are mutually conditionally independent given $\W$.
\end{lemma}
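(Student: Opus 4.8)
The plan is to argue by induction on $|\Y| + |\Z|$, peeling off one element of $\Z$ (or, symmetrically, of $\Y$) at a time by means of the decomposition supplied by Lemma~\ref{lem.mutual-indep-identity}. First I would observe that the hypothesis is symmetric under exchanging the roles of $\Y$ and $\Z$, since $Y \ind Z \ | \ \W, \mathcal{M}$ is symmetric in $Y$ and $Z$; this lets me assume, in the inductive step, that $|\Z| \geq 2$ whenever $|\Y| + |\Z| \geq 3$. For the base case $|\Y| = |\Z| = 1$, say $\Y = \{Y\}$ and $\Z = \{Z\}$, the only admissible auxiliary set is $\mathcal{M} = \emptyset$, so the hypothesis directly gives $Y \ind Z \ | \ \W$, which is by definition mutual conditional independence of $\Y$ and $\Z$ given $\W$.

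For the inductive step I would assume the claim for all pairs of finite, non-empty groups of strictly smaller total size, with $|\Z| \geq 2$. I fix $Z_0 \in \Z$ and set $\Z'' := \Z \backslash \{Z_0\}$, which is finite, non-empty, and disjoint from $\Y$ and from $\W$. By Lemma~\ref{lem.mutual-indep-identity} applied with $\Z' = \{Z_0\}$, it then suffices to prove
\[ \text{(a)}\ \ \Y \ind \{Z_0\} \ | \ \W \qquad \text{and} \qquad \text{(b)}\ \ \Y \ind \Z'' \ | \ \{Z_0\}, \W. \]
For (a), I would invoke the induction hypothesis on the pair $(\Y, \{Z_0\})$ with conditioning group $\W$: every admissible auxiliary set for this pair has the form $\mathcal{M} \subset \Y \backslash \{Y\}$, hence is also a subset of $(\Y \cup \Z) \backslash \{Y, Z_0\}$, so the original hypothesis (with $Z = Z_0$) yields $Y \ind Z_0 \ | \ \W, \mathcal{M}$, and the induction hypothesis delivers (a). For (b), I would invoke the induction hypothesis on the pair $(\Y, \Z'')$ with conditioning group $\W \cup \{Z_0\}$: given $Y \in \Y$, $Z \in \Z''$ and an admissible $\mathcal{M}' \subset (\Y \cup \Z'') \backslash \{Y, Z\}$, the set $\mathcal{M} := \mathcal{M}' \cup \{Z_0\}$ satisfies $\mathcal{M} \subset (\Y \cup \Z) \backslash \{Y, Z\}$ (note $Z_0 \neq Y$ and $Z_0 \neq Z$ since $Z_0 \in \Z$ and $Z \in \Z''$), so the original hypothesis gives $Y \ind Z \ | \ \W, \mathcal{M}' \cup \{Z_0\}$, i.e.\ $Y \ind Z \ | \ (\W \cup \{Z_0\}), \mathcal{M}'$, and the induction hypothesis delivers (b). Recombining (a) and (b) through Lemma~\ref{lem.mutual-indep-identity} gives $\Y \ind \Z \ | \ \W$, closing the induction.

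The argument is essentially bookkeeping rather than a real obstacle; the one point that needs care is the translation between the two ways a group can enter a conditioning statement in Lemma~\ref{lem.mutual-indep-identity}: in step (b), enlarging the outer conditioning group by $\{Z_0\}$ corresponds exactly to inserting $Z_0$ into the auxiliary set $\mathcal{M}$ of the original hypothesis, and one must verify that the auxiliary sets required for the two sub-claims still lie in the collection of subsets over which the hypothesis quantifies (for (a) this is the restriction $\mathcal{M}\subset\Y\backslash\{Y\}$, for (b) the augmentation $\mathcal{M}'\cup\{Z_0\}$). No positivity assumptions beyond those already used in Lemma~\ref{lem.mutual-indep-identity} are needed, and the same scheme shows more generally that the ``pairwise plus separating-set'' conditions of the hypothesis upgrade to full mutual independence.
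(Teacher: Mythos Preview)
Your proof is correct and follows essentially the same strategy as the paper: induction on $|\Y|+|\Z|$, peeling off one element at a time via the decomposition of Lemma~\ref{lem.mutual-indep-identity}. Your write-up is in fact a bit more careful than the paper's in explicitly verifying that the auxiliary sets needed for the two sub-claims (a) and (b) are covered by the original hypothesis, and in separately invoking the induction hypothesis for each sub-claim rather than leaving the $\Y$-side implicit.
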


The situation is more convenient in graphical models in which the $\sigma$-Markov property and $\sigma$-faithfulness hold on the micro-level. In this case mutual and pairwise conditional independence turn out to be the same in the sense of the following lemma.

\begin{lemma} \label{lem.equivalent-independence}
Let $\G$ be a micro DMG over the micro-variables $\lX$ and suppose that the pair $(\G,P_{\lX})$ is $\sigma$-Markovian and $\sigma$-faithful. Let $\cP$ be a partition of $\lX$ with coarse graph $\co(\G,\cP)$ and let $\cS \subset\cP$. Then two variable groups $\Y,\Z \in \cP \backslash\cS$ are conditionally mutually independent given $\cT := \bigcup_{\W \in \cS} \W$  if and only if they are pairwisely conditionally independent given $\cT$.   
\end{lemma}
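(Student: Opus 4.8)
The plan is to handle the two implications separately. The direction ``mutual $\Rightarrow$ pairwise'' is immediate: the decomposition and symmetry properties of conditional independence hold for every distribution, so $\Y \ind \Z \mid \cT$ forces $Y \ind Z \mid \cT$ for every $Y \in \Y$ and $Z \in \Z$, which is precisely pairwise independence given $\cT$. For the converse I would reduce to Lemma~\ref{lem.mutual-indep-identity2}: applied with its $\W$ equal to our $\cT$ (note that $\Y$, $\Z$, $\cT$ are pairwise disjoint since $\Y, \Z \in \cP\setminus\cS$), that lemma says it is enough to establish $Y \ind Z \mid \cT \cup \cM$ for every $Y \in \Y$, $Z \in \Z$ and every $\cM \subseteq (\Y\cup\Z)\setminus\{Y,Z\}$. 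So the whole task becomes producing these ``conditioned pairwise'' independencies.

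I would obtain each of them by going through the graph in three steps. First, pairwise independence given $\cT$ together with $\sigma$-faithfulness of $(\G,P_{\lX})$ yields $Y \bowtie^{\sigma}_{\G} Z \mid \cT$ for all $Y \in \Y$, $Z \in \Z$, i.e.\ $\Y$ and $\Z$ are $\sigma$-separated by $\cT$ (as sets of micro nodes). Second --- the one genuinely graphical point --- I would upgrade this to $Y \bowtie^{\sigma}_{\G} Z \mid \cT \cup \cM$ for all such $Y, Z, \cM$. Third, the $\sigma$-Markov property of $(\G,P_{\lX})$ turns that back into $Y \ind Z \mid \cT \cup \cM$, which is exactly what Lemma~\ref{lem.mutual-indep-identity2} consumes.

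The graphical step is where I expect the only real subtlety, and I would phrase it as an instance of the \emph{weak union} axiom of semi-graphoids for the relation $\sigma$-separation. Concretely: with $\cM_{\Y} := \cM\cap\Y$ and $\cM_{\Z} := \cM\cap\Z$, moving $\cM_{\Y}$ and $\cM_{\Z}$ out of their respective sides of $\Y \bowtie^{\sigma}_{\G} \Z \mid \cT$ and into the conditioning set gives $(\Y\setminus\cM_{\Y}) \bowtie^{\sigma}_{\G} (\Z\setminus\cM_{\Z}) \mid \cT\cup\cM$, and specializing to the pair $(Y,Z)$ --- legitimate because $Y \notin \cM$ and $Z \notin \cM$ --- is precisely the needed statement. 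That $\sigma$-separation satisfies weak union follows from Proposition~\ref{thm.acyclic-separation}, which realizes $\sigma$-separation in $\G$ as $d$-separation in the acyclification $\G^{\mathrm{acy}}$, combined with the classical fact that $d$-separation (equivalently $m$-separation on the acyclic mixed graph $\G^{\mathrm{acy}}$) is a semi-graphoid; alternatively one can verify weak union for $\sigma$-separation directly by tracing a hypothetical $\sigma$-connecting walk, which merely re-proves the same fact. The point worth emphasising is that weak union only ever relocates nodes already lying on ``our'' side of the separation, so no separation statements \emph{within} $\Y$ or $\Z$ are ever used --- the single structural input is that $\Y$ and $\Z$ are $\sigma$-separated by $\cT$. (One small caveat: Lemma~\ref{lem.mutual-indep-identity2} is stated for finite $\Y,\Z$; if infinite groups are admitted one further needs that mutual independence of $\Y$ and $\Z$ is witnessed on finite subsets, which is harmless for the positive-density distributions considered here.)
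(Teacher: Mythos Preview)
Your proof is correct and follows the same overall scaffold as the paper: both reduce to Lemma~\ref{lem.mutual-indep-identity2} and then argue, via $\sigma$-faithfulness and the $\sigma$-Markov property, that it suffices to show $Y \bowtie^{\sigma}_{\G} Z \mid \cT \cup \cM$ for all $Y \in \Y$, $Z \in \Z$ and $\cM \subseteq (\Y\cup\Z)\setminus\{Y,Z\}$, starting from the set-level separation $\Y \bowtie^{\sigma}_{\G} \Z \mid \cT$ obtained from pairwise faithfulness.

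The genuine difference lies in how the graphical step is justified. The paper gives a self-contained path-chasing contradiction: assuming some path $\pi$ from $Y'\in\Y$ to $Z'\in\Z$ is $\sigma$-blocked by $\cT$ but $\sigma$-unblocked by $\cT\cup\cM$, it isolates a subpath whose inner nodes avoid $\Y\cup\Z$, locates a collider with a descendant in $\cM$, and splices in the descending path to manufacture a micro-path between $\Y$ and $\Z$ that is already $\sigma$-open given $\cT$ alone---contradicting the assumed pairwise separation. You instead recognise this step as an instance of the weak-union axiom for $\sigma$-separation (justified via Proposition~\ref{thm.acyclic-separation} and the semi-graphoid property of $d$-separation on $\G^{\mathrm{acy}}$), which lets you move $\cM_{\Y}$ and $\cM_{\Z}$ into the conditioning set in one stroke. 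Your route is shorter and more conceptual, and makes transparent that no separation statements internal to $\Y$ or $\Z$ are needed; the paper's route has the virtue of being entirely self-contained within the paper's own definitions, with no appeal to external graphoid facts. One small point you leave implicit is the passage from pairwise to set-level $\sigma$-separation (i.e.\ composition), but for path-blocking separation relations this is immediate.
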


\subsection{$\sigma$-Markov properties} \label{subsec.sigma-Markov}

For group MGs, we can now introduce the following Markov properties with respect to $\sigma$-separation.

\begin{definition}[$\sigma$-Markov properties] \label{def.sigma-Markov}
Let $\lX$ be a set of scalar random variables with joint distribution $P_{\lX}$ and let $\cP$ be a partition of $\lX$. Let $\G'$ be a mixed graph with node set $\cP$. We say that $(\G',P_{\lX})$ has the 
\begin{itemize}
\item[(i)] $\sigma$\emph{-Markov property} (or is $\sigma$\emph{-Markovian}) if for $\Y, \Z \in \cP$ and $\cS \subset \cP$, we have
\[ \Y \bowtie^{\sigma}_{\G'} \Z \ | \ \cS \qquad \Rightarrow \qquad \Y \ind \Z \ | \ \cS. \]
\item[(ii)] \emph{weak} $\sigma$\emph{-Markov property} (or is \emph{weakly} $\sigma$\emph{-Markovian}) if for $\Y, \Z \in \cP$ and $\cS \subset \cP$, we have
\[ \Y \bowtie^{\sigma}_{\G'} \Z \ | \ \cS \qquad \Rightarrow \qquad \Y \ind^{pw} \Z \ | \ \cS. \]
\end{itemize}
\end{definition}

\begin{remark}
In the previous definition $\cS$ is a set of sets and therefore, to be precise, we should have written $\bigcup_{\W \in \cS} \W$ instead of $\cS$ in the independence statements. However, whenever the context is clear, we prefer to use $\cS$ instead to keep the notation more simple.   
\end{remark}

$\sigma$-Markovianity transfers nicely from the micro to the macro-level. See Appendix \ref{app.proofs_sec_Markov} for the proof of the following theorem.

\begin{theorem} \label{prop.sigma-Markov_relations}
Let $\G$ be a micro DMG over the micro-variables $\lX$ and suppose that the pair $(\G,P_{\lX})$ is $\sigma$-Markovian. Let $\cP$ be a partition of $\lX$ into finite sets, with coarse graph $\co(\G,\cP)$. Then $(\co(\G,\cP),P_{\lX})$ is $\sigma$-Markovian and consequently weakly $\sigma$-Markovian.
\end{theorem}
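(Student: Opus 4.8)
The plan is to show that any $\sigma$-separation statement $\Y \bowtie^\sigma_{\co(\G,\cP)} \Z \mid \cS$ on the coarse graph forces the corresponding mutual conditional independence $\Y \ind \Z \mid \cT$, where $\cT = \bigcup_{\W \in \cS} \W$, by reducing everything to micro-level $\sigma$-separation statements and then invoking the assumed micro-level $\sigma$-Markov property. The workhorse is Lemma \ref{lem.d-sep}(iii): if $\cS$ $\sigma$-separates $\Y$ and $\Z$ in $\co(\G,\cP)$, then $\cT$ $\sigma$-separates every pair of micro nodes $Y \in \Y$, $Z \in \Z$ in $\G$, hence by $\sigma$-Markovianity of $(\G, P_{\lX})$ we get $Y \ind Z \mid \cT$ for all such pairs, i.e.\ $\Y \ind^{pw} \Z \mid \cT$. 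That already gives the weak $\sigma$-Markov property immediately, so the real content is upgrading pairwise independence to mutual independence.

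For the upgrade I would use Lemma \ref{lem.mutual-indep-identity2}: it suffices to show that for every $Y \in \Y$, $Z \in \Z$ and every subset $\cM \subset (\Y \cup \Z) \setminus \{Y, Z\}$ we have $Y \ind Z \mid \cT, \cM$. The key observation is that adding micro nodes from $\Y \cup \Z$ to the conditioning set does not destroy the relevant $\sigma$-separation: I claim that if $\cS$ $\sigma$-separates $\Y$ and $\Z$ in $\co(\G,\cP)$, then $\cT \cup \cM$ still $\sigma$-separates $Y$ from $Z$ in $\G$ for any such $\cM$. To see this, take any micro walk $\pi$ from $Y$ to $Z$ in $\G$ and look at its coarse walk $\co(\pi)$, which runs from $\Y$ to $\Z$ in $\co(\G,\cP)$; since $\cS$ $\sigma$-blocks $\co(\pi)$, one of the three blocking conditions of Definition \ref{def.sigma-sep} holds for $\co(\pi)$, and I would trace through each case (first/last node in $\cS$ — impossible since $\Y, \Z \notin \cS$; a collider on $\co(\pi)$ with no descendant in $\cS$; or a non-collider of $\co(\pi)$ in $\cS$ with an appropriate outgoing or undirected edge to a non-strongly-connected neighbor) to show the corresponding micro walk $\pi$ is $\sigma$-blocked by $\cT \cup \cM$. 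The non-collider case uses that a group $\W \in \cS$ contributes its whole micro content to $\cT$, together with Lemma \ref{lem.acyclicparts}/Lemma \ref{lem.sc-inherit}-type reasoning relating micro and macro strongly connected components; the collider case uses that descendants of a macro collider correspond to descendants of the corresponding micro collider. Crucially, the extra nodes in $\cM$ lie in $\Y \cup \Z$, which are \emph{not} in $\cS$, so enlarging the conditioning set from $\cT$ to $\cT \cup \cM$ can only turn former colliders into open ones if those colliders belong to $\Y$ or $\Z$ — but the $\cP$-segments of $\pi$ lying in $\Y$ or $\Z$ sit at the two ends of $\pi$, and I would argue the blocking always occurs at an inner $\cP$-segment or is of a type (end-node membership, or a genuinely off-path blocking non-collider) that $\cM$ cannot interfere with. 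This is essentially the statement that Lemma \ref{lem.d-sep}(iii) is robust under adding conditioning nodes drawn from the two separated groups themselves.

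Alternatively — and this is probably the cleaner route — I would bypass the case analysis by working on the acyclification. By Proposition \ref{thm.acyclic-separation}, $\sigma$-separation on $\co(\G,\cP)$ is $d$-separation on $\co(\G,\cP)^{\mathrm{acy}}$, and since any acyclic partition... wait, $\cP$ need not be acyclic here, so one must be careful; instead I would note that $\co(\G,\cP)^{\mathrm{acy}}$ and the coarse graph of $\G^{\mathrm{acy}}$ need not agree, so I would stick with the direct argument but phrase it via Lemma \ref{lem.d-sep}: apply Lemma \ref{lem.d-sep}(iii) not to $\G$ but observe that the $\sigma$-separation $\Y \bowtie^\sigma_{\co(\G,\cP)} \Z \mid \cS$ persists as $\Y \bowtie^\sigma_{\co(\G,\cP)} \Z \mid \cS'$ for any $\cS'$ with $\cS \subseteq \cS' \subseteq \cP \setminus \{\Y, \Z\}$ \emph{provided no new colliders are opened} — but for $\sigma$-separation monotonicity in the conditioning set fails, so this shortcut is not available. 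Hence the honest case analysis of the previous paragraph is unavoidable.

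\textbf{The main obstacle} is exactly this last point: unlike ordinary $d$-separation in DAGs, $\sigma$-separation (and $m$-separation) is \emph{not} monotone in the conditioning set, so I cannot simply say ``$\cT$ separates $Y$ from $Z$, hence so does any superset.'' I must genuinely verify that enlarging the conditioning set by micro nodes \emph{from the groups $\Y$ and $\Z$ being separated} preserves blocking — leaning on the fact that these groups are not themselves in the macro separating set $\cS$, so the macro blocking mechanism (a non-collider in $\cS$, or an unactivated collider whose descendant-set in $\cS$ is empty) is untouched by $\cM \subseteq (\Y \cup \Z) \setminus \{Y,Z\}$. Getting this structural claim right, in the presence of cycles and bidirected edges and the strongly-connected-component subtleties in condition (3) of Definition \ref{def.sigma-sep}, is where the real work lies; everything else (the reduction via Lemma \ref{lem.d-sep}(iii), the final assembly via Lemmas \ref{lem.mutual-indep-identity2} and the $\sigma$-Markov property of $\G$) is routine.
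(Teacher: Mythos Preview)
Your overall strategy (reduce via Lemma \ref{lem.d-sep}(iii), assemble via Lemma \ref{lem.mutual-indep-identity2} and the micro $\sigma$-Markov property) is exactly the paper's, and you correctly identify the crux: showing $Y \bowtie^\sigma_\G Z \mid \cT, \cM$ for every $\cM \subset (\Y \cup \Z)\setminus\{Y,Z\}$. But your sketched argument for this step has a real gap. The claim that adding $\cM$ ``can only turn former colliders into open ones if those colliders belong to $\Y$ or $\Z$'' is false: a collider $C$ on a micro walk $\pi$ is opened whenever $C$ has a \emph{descendant} in $\cM$, and $C$ itself can lie in any group. Likewise, the claim that the $\cP$-segments of $\pi$ lying in $\Y$ or $\Z$ sit only at the two ends is false --- a micro path from $Y\in\Y$ to $Z\in\Z$ can pass through other elements of $\Y$ or $\Z$ in the middle. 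So the assertion that $\cM$ ``cannot interfere with'' the blocking mechanism does not go through as stated.

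The paper does not try to argue directly that the particular blocking mechanism inherited from the macro separation survives enlarging by $\cM$. Instead it works contrapositively: assume there is a micro path $\pi$ that is $\sigma$-open given $\cT \cup \cM$ but $\sigma$-blocked given $\cT$; then every collider on $\pi$ has a descendant in $\cM \subset \Y \cup \Z$, and hence every collider of the coarse walk $\co(\pi)$ has a directed macro path into $\Y$ or into $\Z$. If some macro collider of $\co(\pi)$ nevertheless has no descendant in $\cS$, the paper takes the first (or last) such collider and splices its directed path to $\Z$ (resp.\ $\Y$) onto the initial (resp.\ final) sub-walk of $\co(\pi)$, producing a \emph{new} directed macro walk from $\Y$ to $\Z$ that is $\sigma$-open given $\cS$ --- contradicting the assumed macro $\sigma$-separation. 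This path-surgery construction is the missing ingredient in your proposal.
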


\begin{remark} \label{rem.Markov}
It is worthwhile to remark here that while being sufficient, the $\sigma$-Markov property of $(\G,P_{\lX})$ is certainly not necessary for the the $\sigma$-Markov property of $(\co(\G,\cP),P_{\lX})$ as the latter does not care about non-Markovianity \emph{strictly within} variable groups. For instance if $\Y = \{ Y_1,Y_2\}, \ \Z = \{ Z_1 \}$ are two variable groups with only one micro-edge $Y_2 \to Z_1$, then the coarse graph $\Y \to \Z$ is $\sigma$-Markovian with respect to any distribution as there are no $\sigma$-separations. In particular, it is $\sigma$-Markovian w.r.t. distributions in which $Y_1$ and $Y_2$ are not independent, that is for distributions that are not  $\sigma$-Markovian on the micro-graph $\G$.
\end{remark}

\subsection{$m$-Markov properties} \label{subsec.m-Markov}
For good measure, we provide analogues of Definition \ref{def.sigma-Markov} and Theorem \ref{prop.sigma-Markov_relations} for $m$-separation.

\begin{definition}[$m$-Markov properties] \label{def.m-Markov}
Let $\lX$ be a set of scalar random variables with joint distribution $P_{\lX}$ and let $\cP$ be a partition of $\lX$. Let $\G'$ be a mixed graph with node set $\cP$. We say that $(\G',P_{\lX})$ has the 
\begin{itemize}
\item[(i)] $m$\emph{-Markov property} (or is $m$\emph{-Markovian}) if for $\Y, \Z \in \cP$ and $\cS \subset \cP \backslash \{ \Y, \Z  \}$, we have
\[ \Y \bowtie^{m}_{\G'} \Z \ | \ \cS \qquad \Rightarrow \qquad \Y \ind \Z \ | \ \cS. \]
\item[(ii)] \emph{weak} $m$\emph{-Markov property} (or is \emph{weakly} $m$\emph{-Markovian}) if for $\Y, \Z \in \cP$ and $\cS \subset \cP$, we have
\[ \Y \bowtie^{m}_{\G'} \Z \ | \ \cS \qquad \Rightarrow \qquad \Y \ind^{pw} \Z \ | \ \cS. \]
\end{itemize}
\end{definition}

The analogue of Theorem \ref{prop.sigma-Markov_relations} is as follows. Again, see Appendix \ref{app.proofs_sec_Markov} for a proof.

\begin{theorem} \label{prop.m-Markov_relations}
Let $\G$ be a micro DMG over $\lX$ and suppose that the pair $(\G,P_{\lX})$ is $m$-Markovian. Let $\cP$ be a partition of $\lX$ into finite sets, with coarse graph $\co(\G,\cP)$.  Then $(\co(\G,\cP),P_{\lX})$ is $m$-Markovian and in particular weakly $m$-Markovian.
\end{theorem}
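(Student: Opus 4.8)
The plan is to reduce Theorem~\ref{prop.m-Markov_relations} to the already-established results on the purely graphical side, namely Lemma~\ref{cor.m-sep} (the $m$-separation analogue of Lemma~\ref{lem.d-sep}), and then to the combinatorial characterization of mutual independence in Lemma~\ref{lem.mutual-indep-identity2}. Concretely, suppose $\Y \bowtie^{m}_{\co(\G,\cP)} \Z \mid \cS$ for variable groups $\Y,\Z \in \cP$ and $\cS \subset \cP \setminus \{\Y,\Z\}$, and set $\cT := \bigcup_{\W \in \cS} \W$. The goal is to show $\Y \ind \Z \mid \cT$ (mutual independence) assuming $(\G,P_{\lX})$ is $m$-Markovian.

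First I would address the pairwise part. Fix $Y \in \Y$ and $Z \in \Z$. By Lemma~\ref{cor.m-sep}(iii) (the $m$-separation version of Lemma~\ref{lem.d-sep}(iii)), the macro-separation $\Y \bowtie^{m}_{\co(\G,\cP)} \Z \mid \cS$ implies that $\cT$ $m$-separates $Y$ and $Z$ in $\G$; the $m$-Markov property of $(\G,P_{\lX})$ then yields $Y \ind Z \mid \cT$. This already establishes the weak $m$-Markov property of $(\co(\G,\cP),P_{\lX})$. For the full $m$-Markov property I need mutual independence, so I would invoke Lemma~\ref{lem.mutual-indep-identity2}: it suffices to show that for all $Y \in \Y$, $Z \in \Z$ and every subset $\cM \subseteq (\Y \cup \Z) \setminus \{Y,Z\}$ one has $Y \ind Z \mid \cT, \cM$. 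The key observation is that the micro-variables in $\cM$ lie entirely inside the groups $\Y$ and $\Z$, so conditioning on $\cT \cup \cM$ corresponds, on the macro side, to conditioning on $\cS$ together with parts of the \emph{endpoint} groups — and adding nodes to the endpoints of an $m$-separation statement keeps the endpoints separated. More precisely, I would argue directly at the micro-level: since $\cT$ $m$-separates $Y$ and $Z$ in $\G$ and $\cM \subseteq \lX$, I claim $\cT \cup \cM$ still $m$-separates $Y$ and $Z$ in $\G$. This is where one must be slightly careful, because in general enlarging a conditioning set can \emph{unblock} a path by activating a collider; the point is that here every added node lies in $\sc$-terms tied to the fixed endpoint groups, and one shows that any walk from $Y$ to $Z$ that was $m$-blocked by $\cT$ remains $m$-blocked by $\cT \cup \cM$ because the coarse walk $\co(\pi)$ is still $m$-blocked by $\cS$ — the added micro-nodes cannot be descendants-of-colliders on $\co(\pi)$ that weren't already accounted for, as they live in $\Y \cup \Z$ which are the endpoints of $\co(\pi)$ and hence not inner collider nodes of it. Then Lemma~\ref{cor.m-sep} applied with conditioning set $\cS$ (and a routine extension handling the micro-additions), together with the $m$-Markov property of $\G$, gives $Y \ind Z \mid \cT, \cM$ for every such $\cM$, and Lemma~\ref{lem.mutual-indep-identity2} concludes $\Y \ind \Z \mid \cT$.

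The main obstacle I anticipate is precisely the step that enlarging the conditioning set from $\cT$ to $\cT \cup \cM$ preserves micro-level $m$-separation of the endpoints $Y$ and $Z$. Unlike $d$-separation with disjoint sets, one has to rule out that a node of $\cM$ acts as (a descendant of) a collider on some $Y$–$Z$ walk and thereby opens a new active path. The cleanest way to handle this is to push the argument through the coarsening machinery of Section~\ref{sec.groupDMGs}: translate any hypothetical active micro-walk $\pi$ from $Y$ to $Z$ given $\cT \cup \cM$ into its coarse walk $\co(\pi)$ on $\co(\G,\cP)$, use the hypothesis that $\co(\pi)$ is $m$-blocked by $\cS$, and check $\cP$-segment by $\cP$-segment that the blocking survives — the colliders created by $\cM$ are internal to the endpoint groups $\Y$ or $\Z$ and therefore correspond to endpoint nodes of $\co(\pi)$, which by definition of $m$-blocking (condition (1)) already force $\co(\pi)$ to be blocked. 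Alternatively, one can simply prove a small stand-alone lemma: if $\cS$ $m$-separates $\Y$ and $\Z$ in $\co(\G,\cP)$ and $\cM \subseteq \Y \cup \Z$, then $\bigl(\bigcup_{\W \in \cS}\W\bigr) \cup \cM$ $m$-separates every $Y \in \Y$ from every $Z \in \Z$ in $\G$ — this is really the natural generalization of Lemma~\ref{cor.m-sep}(iii) and the bulk of the work, after which the $m$-Markov property of $\G$ and Lemma~\ref{lem.mutual-indep-identity2} finish the proof mechanically.
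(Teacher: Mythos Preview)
Your overall strategy — reduce the mutual-independence conclusion to Lemma~\ref{lem.mutual-indep-identity2} and then verify the required micro-level separations via Lemma~\ref{cor.m-sep} — is exactly the paper's approach. You also correctly isolate the one non-trivial step: passing from $m$-separation of $Y,Z$ by $\cT$ to $m$-separation of $Y,Z$ by $\cT\cup\cM$ with $\cM\subseteq\Y\cup\Z$.

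However, your justification of that step contains a genuine gap. You write that ``the colliders created by $\cM$ are internal to the endpoint groups $\Y$ or $\Z$ and therefore correspond to endpoint nodes of $\co(\pi)$''. This is not true. A micro-collider $\pi(l)$ on a walk $\pi$ from $Y$ to $Z$ can perfectly well sit in an \emph{intermediate} group $\W_0\neq\Y,\Z$ and become activated because some \emph{descendant} of $\pi(l)$ lies in $\cM\subset\Y\cup\Z$; the collider itself need not be an endpoint node at all. Correspondingly, on $\co(\pi)$ the node $\W_0$ is an inner collider with no descendants in $\cS$, which blocks $\co(\pi)$ — but this tells you nothing about whether $\pi$ stays blocked once you add $\cM$, because the activating descendant lives in $\Y$ or $\Z$, not in $\cS$. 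So a segment-by-segment check of $\co(\pi)$ does not close the argument.

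What the paper does at precisely this point is not to show that $\pi$ remains blocked, but to \emph{construct a new macro walk} that is $m$-open given $\cS$, contradicting $\Y\bowtie^{m}_{\co(\G,\cP)}\Z\mid\cS$. The mechanism is: every macro-collider $\mathbf{C}$ on $\co(\pi)$ contains a micro-collider whose activating descendant lies in $\cM\subset\Y\cup\Z$, hence there is a directed macro-path from $\mathbf{C}$ into $\Y$ or into $\Z$. Picking the first (resp.\ last) such collider that reaches $\Z$ (resp.\ $\Y$) and splicing the directed path onto the initial segment of $\co(\pi)$ yields a directed walk $\Y\to\cdots\to\Z$ (or the reverse) all of whose nodes are either non-colliders of $\co(\pi)$ outside $\cS$ or descendants of a collider with no descendants in $\cS$ — hence $m$-open given $\cS$. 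This rerouting is the missing idea in your sketch; once you add it, your ``stand-alone lemma'' is exactly what the paper proves, and the rest of your plan goes through verbatim.
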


\section{Types of Faithfulness for Group (D)MGs} \label{sec.faithfulness}

In this section, we discuss how different notions of faithfulness on scalar mixed graphs relate to faithfulness on a coarsened graph. As we will see, faithfulness is often not preserved under coarsening. However, we will  provide sufficient criteria for faithfulness to hold both in the cyclic and the acyclic setting. We discuss when the strong assumptions that are needed to guarantee faithfulness on the macro-level might be realistic and continue with a discussion on weaker notions of faithfulness. Proofs of the results of this section are provided in Appendix \ref{app.proofs_sec_faithfulness}.

As was already mentioned in \cite{ParKas17}, where coarsening a scalar DAGs $\cG$ to a group DAG $\co(\G,\cP)$ by means of a partition $\cP$, $d$-faithfulness, i.e. faithfulness w.r.t. $d$-separation, need not be preserved. Since DAGs are special cases of mixed graphs, and $m$-separation/respective $\sigma$-separation collapse to $d$-separation on DAGs, this conlusion does not change when either of these separations are considered instead.  Figure \ref{fig.faithfulness-violations} shows simple examples of $\sigma$/$m$-faithfulness violations for $(\co(\G,\cP),P_{\lX})$ that occur even if $\sigma$/$m$-Markovianity and $\sigma$/$m$-faithfulness of a the pair $(\G,P_{\lX})$ is assumed. This observation seriously challenges the most naive approach to causal discovery between groups of variables, namely running the standard PC-algorithm with multivariate conditional independence tests or any adaption thereof that relies on the Causal Faithfulness Condition. We also observe that, conversely, $\sigma$/$m$-faithfulness of $(\co(\G,\cP),P_{\lX})$ need not imply $\sigma$/$m$-faithfulness of $(\G,P_{\lX})$. This is because any $\sigma$/$m$-faithfulness violation for $(\G,P_{\lX})$ that is confined within a variable group will not affect $\sigma$/$m$-faithfulness of $(\co(\G,\cP),P_{\lX})$. As a concrete example, if $(\G,P_{\lX})$ is not $\sigma$/$m$-faithful and $\cP$ collects all variables in one group, then  $(\co(\G,\cP),P_{\lX})$ is always $\sigma$/$m$-faithful for the trivial reason that only one node is present. 

\begin{figure}
\begin{subfigure}[b]{0.45\textwidth}
         \centering
         \includegraphics[scale=0.23]{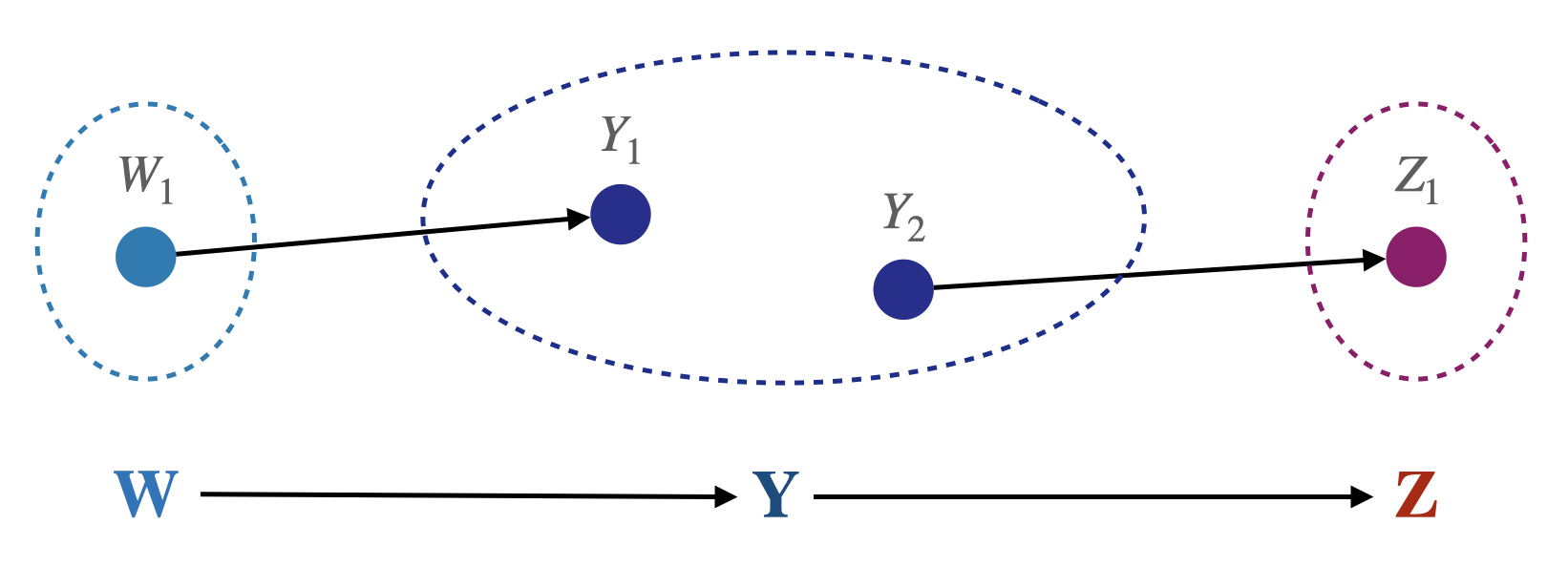}
     \end{subfigure}
     \hfill
     \begin{subfigure}[b]{0.45\textwidth}
         \centering
         \includegraphics[scale=0.22]{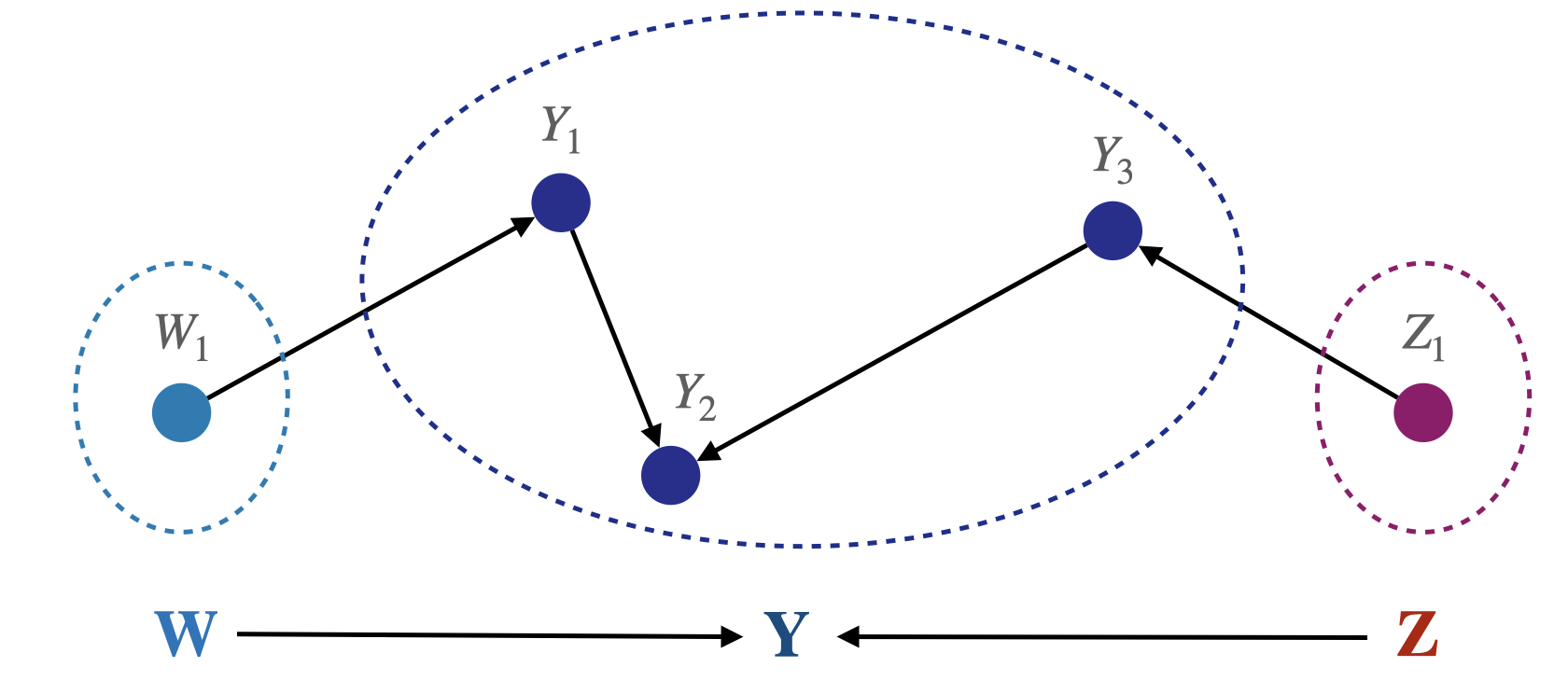}
     \end{subfigure}
\caption{Two simple examples of $d$-faithfulness violations. In the first figure faithfulness is violated due to the internal disconnectedness of $\Y$. In the second figure, conditioning on $\Y$ will open the macro path from $W$ to $\Z$ but closes the micro path, see e.g. Figure 6(i) in \cite[Supplement]{anand_causal_2023}.} \label{fig.faithfulness-violations}
\end{figure}

\subsection{Faithfulness criteria for coarse graphs} \label{sec.faithfulness-criteria}

In this subsection, we will work towards two $\sigma$-faithfulness criteria for group DMGs that are obtained from coarsening a micro-DMG. We will start with the following simple characterization of $\sigma$-faithfulness.

\begin{lemma} \label{lem.macro-micro-faithful}
Let $\G$ be a scalar DMG over the micro-variables $\lX$ and suppose that the pair $(\G,P_{\lX})$ is $\sigma$-Markovian and $\sigma$-faithful. Let $\cP$ be a partition with coarse graph $\co(\G,\cP)$.
Then $(\co(\G,\cP),P_{\lX})$ is $\sigma$-faithful if and only if the following holds: whenever $\Y$ and $\Z$ are $\sigma$-connected by a set $\cS \subset \cP$ then there exist $Y \in \Y$ and $Z \in \Z$ that are $\sigma$-connected by $\mathcal{T} = \bigcup_{\W \in \cS} \W$.
\end{lemma}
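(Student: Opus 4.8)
The plan is to unwind the definitions of $\sigma$-faithfulness for the coarse pair $(\co(\G,\cP), P_{\lX})$ and to translate each side of the biconditional through the bridge results established earlier, namely the $\sigma$-Markov transfer (Theorem~\ref{prop.sigma-Markov_relations}), the separation-transfer Lemma~\ref{lem.d-sep}, and the equivalence of mutual and pairwise independence under micro $\sigma$-Markovianity and $\sigma$-faithfulness (Lemma~\ref{lem.equivalent-independence}). Recall that $\sigma$-faithfulness of $(\co(\G,\cP),P_{\lX})$ means: for all $\Y,\Z \in \cP$ and $\cS \subset \cP$, if $\Y \ind \Z \mid \cS$ then $\Y \bowtie^{\sigma}_{\co(\G,\cP)} \Z \mid \cS$. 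By Theorem~\ref{prop.sigma-Markov_relations} the coarse pair is already $\sigma$-Markovian, so the only content to verify is this converse implication.

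First I would take the contrapositive form and argue both directions simultaneously by chasing the following chain of equivalences. Fix $\Y,\Z \in \cP$ and $\cS \subset \cP$ with $\cT := \bigcup_{\W \in \cS}\W$. On the graphical side, $\Y$ and $\Z$ are $\sigma$-connected by $\cS$ in $\co(\G,\cP)$ if and only if, by Lemma~\ref{lem.d-sep}(iii) read contrapositively, there exist $Y \in \Y$, $Z \in \Z$ that are $\sigma$-connected by $\cT$ in $\G$ --- this is exactly the condition appearing in the statement. Wait: Lemma~\ref{lem.d-sep}(iii) only gives one direction (coarse separation $\Rightarrow$ micro separation of all pairs), so I must be careful: its contrapositive says that if \emph{some} micro pair is $\sigma$-connected then the coarse nodes are $\sigma$-connected. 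The reverse (coarse $\sigma$-connection $\Rightarrow$ some micro pair $\sigma$-connected) is precisely the substantive hypothesis of the lemma --- so it does not hold for free, and that is why it must appear explicitly in the ``if and only if'' of Lemma~\ref{lem.macro-micro-faithful} rather than being automatic. On the probabilistic side, using that $(\G,P_{\lX})$ is $\sigma$-Markovian and $\sigma$-faithful together with Lemma~\ref{lem.equivalent-independence}, the statement $\Y \ind \Z \mid \cT$ is equivalent to $\Y \ind^{pw} \Z \mid \cT$, which by definition means $Y \ind Z \mid \cT$ for every $Y \in \Y$, $Z \in \Z$; and by micro $\sigma$-faithfulness and $\sigma$-Markovianity, $Y \ind Z \mid \cT$ holds if and only if $Y \bowtie^{\sigma}_{\G} Z \mid \cT$. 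Hence $\Y \ind \Z \mid \cT$ is equivalent to ``all micro pairs $Y \in \Y, Z \in \Z$ are $\sigma$-separated by $\cT$ in $\G$''.

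Assembling these: the coarse faithfulness implication ``$\Y \ind \Z \mid \cS \Rightarrow \Y \bowtie^{\sigma}_{\co(\G,\cP)} \Z \mid \cS$'' holds for all $\Y,\Z,\cS$ if and only if, for all such triples, ``all micro pairs $\sigma$-separated by $\cT$ in $\G$'' implies ``$\Y,\Z$ $\sigma$-separated by $\cS$ in $\co(\G,\cP)$''. Contraposing this last implication yields exactly: ``$\Y,\Z$ $\sigma$-connected by $\cS$ in $\co(\G,\cP)$'' implies ``some micro pair $Y \in \Y, Z \in \Z$ is $\sigma$-connected by $\cT$ in $\G$'', which is the condition stated in the lemma. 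So the two sides of the desired biconditional are literally the same assertion after these translations, and the proof is complete once each equivalence is justified.

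The main obstacle, as flagged above, is the asymmetry in Lemma~\ref{lem.d-sep}: part (iii) only transfers separation downward (coarse to micro), so I cannot simply invoke an ``iff'' there; instead the burden is carried by Lemma~\ref{lem.equivalent-independence} and the micro $\sigma$-faithfulness assumption, which convert the probabilistic statement $\Y \ind \Z \mid \cT$ into a statement about \emph{all} micro pairs being $\sigma$-separated. I would therefore be careful to invoke Lemma~\ref{lem.equivalent-independence} with $\cS \subset \cP$ possibly not containing $\Y$ or $\Z$ --- checking that its hypotheses ($\sigma$-Markovian and $\sigma$-faithful micro pair, finite partition) match ours --- and to note that the equivalence $Y \ind Z \mid \cT \Leftrightarrow Y \bowtie^{\sigma}_{\G} Z \mid \cT$ is exactly the conjunction of micro $\sigma$-Markovianity (forward) and micro $\sigma$-faithfulness (backward). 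A minor additional point to handle cleanly is the degenerate case $\Y = \Z$ or $\Y,\Z \in \cS$, where both sides are trivially consistent. No genuinely hard estimate is involved; the content is entirely in lining up the three transfer results correctly.
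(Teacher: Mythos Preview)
Your proposal is correct and follows essentially the same route as the paper: both directions hinge on translating group-level (in)dependence into pairwise micro-level (in)dependence via Lemma~\ref{lem.equivalent-independence}, and then into micro-level $\sigma$-(separation/connection) via the assumed micro $\sigma$-Markovianity and $\sigma$-faithfulness. The paper simply argues the two directions separately rather than packaging them as a single chain of equivalences, and it does not invoke Theorem~\ref{prop.sigma-Markov_relations} or Lemma~\ref{lem.d-sep} at all---your own analysis correctly identifies that these are not needed here.
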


\begin{corollary} \label{cor.macro-micro-faithful2}
Let $\G$ be a scalar DMG over the micro-variables $\lX$ and suppose that the pair $(\G,P_{\lX})$ is $\sigma$-Markovian and $\sigma$-faithful. Let $\cP$ be a partition with coarse graph $\co(\G,\cP)$. Assume that for any path $\Pi$ on $\co(\G,\cP)$, there exists a path $\pi$ on $\cG$ such that
\begin{itemize}
\item[(i)] $\co(\pi) = \Pi$ and
\item[(ii)] whenever $\Pi$ is $\sigma$-unblocked by a set $\cS \subset \cP$, then $\pi$ is $\sigma$-unblocked by $\mathcal{T} = \bigcup_{\W \in \cS} \W$.
\end{itemize}
Then $(\co(\G,\cP),P_{\lX})$ is $\sigma$-faithful.
\end{corollary}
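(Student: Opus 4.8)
The plan is to derive Corollary \ref{cor.macro-micro-faithful2} directly from Lemma \ref{lem.macro-micro-faithful}, so the whole argument reduces to checking the single characterising condition of that lemma: whenever $\Y$ and $\Z$ are $\sigma$-connected by $\cS \subset \cP$ on $\co(\G,\cP)$, we must exhibit micro nodes $Y \in \Y$, $Z \in \Z$ that are $\sigma$-connected by $\cT = \bigcup_{\W \in \cS}\W$ on $\G$. Note that we may assume $\Y,\Z \notin \cS$, since otherwise $\sigma$-connectedness fails trivially on both sides. So first I would take, as witness of the $\sigma$-connection on $\co(\G,\cP)$, a \emph{path} $\Pi$ from $\Y$ to $\Z$ that is $\sigma$-unblocked by $\cS$ (such a path exists because $\sigma$-connectedness via walks is equivalent to $\sigma$-connectedness via paths).

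Next I would invoke the hypothesis of the corollary with this particular $\Pi$: it furnishes a path $\pi$ on $\cG$ with $\co(\pi) = \Pi$ that is $\sigma$-unblocked by $\cT$. The endpoints of $\pi$ lie in the endpoint groups of $\co(\pi) = \Pi$, i.e. $\pi$ runs from some $Y \in \Y$ to some $Z \in \Z$. Since $\Y,\Z \notin \cS$ we have $Y,Z \notin \cT$, so the fact that $\pi$ is $\sigma$-unblocked by $\cT$ says precisely that $Y$ and $Z$ are $\sigma$-connected by $\cT$ in $\G$. This is exactly the condition required by Lemma \ref{lem.macro-micro-faithful}, and since the hypotheses ``$(\G,P_{\lX})$ is $\sigma$-Markovian and $\sigma$-faithful'' are shared between the corollary and the lemma, we conclude that $(\co(\G,\cP),P_{\lX})$ is $\sigma$-faithful.

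The only genuinely delicate point is bookkeeping about endpoints and conditioning sets: one must be careful that ``$\sigma$-unblocked by $\cT$'' for the path $\pi$ really does encode ``$\sigma$-connected by $\cT$'' for its two specific endpoints $Y,Z$ — this is immediate from the definition of $\sigma$-separation once we know $Y,Z \notin \cT$, which is why the reduction to the case $\Y,\Z \notin \cS$ is made at the start — and that $\co(\pi)=\Pi$ forces the endpoints of $\pi$ into $\Y$ and $\Z$ respectively, which follows from clause (i) of Definition \ref{def.coarsepath}. There is no real obstacle here; the corollary is essentially a repackaging of Lemma \ref{lem.macro-micro-faithful} in which the abstract ``there exist $Y,Z$ that are $\sigma$-connected'' clause is replaced by the more operational assumption that individual macro-paths can be lifted to micro-paths that remain unblocked. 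If one wanted a marginally slicker write-up, one could also observe that the corollary's hypothesis, applied path-by-path, literally implies the right-hand condition of Lemma \ref{lem.macro-micro-faithful} with no case analysis beyond the trivial endpoint exclusion, and then cite the lemma as a black box.
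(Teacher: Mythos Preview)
Your proof is correct and follows exactly the paper's approach: the paper's proof is the single line ``This is a direct consequence of Lemma \ref{lem.macro-micro-faithful}'', and you have simply spelled out why the lemma's hypothesis is satisfied. Your handling of the endpoint bookkeeping (reducing to $\Y,\Z\notin\cS$ and noting that $\co(\pi)=\Pi$ forces the endpoints of $\pi$ into $\Y$ and $\Z$) is accurate and matches what the paper leaves implicit.
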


\begin{proof}
This is a direct consequence of Lemma \ref{lem.macro-micro-faithful}.
\end{proof}


In Theorem \ref{lem.connectivity-criterion} below, we will now derive a simple sufficient condition that guarantees $\sigma$-faithfulness on a coarsened graph. In a nutshell, it shows that $\sigma$-faithfulness \emph{does hold} if variable groups are sufficiently connected internally. Before formulating Theorem \ref{lem.connectivity-criterion} we need to introduce some additional definitions.

\begin{definition} \label{def.microedges}
Let $\G$ be a mixed graph with edge sets $\cE,\cB,\cU$, and let $\cP$ be a partition of its nodes. Moreover let $\mathbf{e}$ be an edge on $\co(\G,\cP)$.
\begin{itemize}
\item If $\mathbf{e} = \Z \to \Y$ is right-directed, define the set of $\e$-micro edges as 
\[ \mathrm{mic}(\mathrm{e}):= \{ e \in \cE \ ; \ e = Z \to Y \ \mathrm{with} \  Z \in \Z , \ Y \in \Y \}. \]
\item If $\mathbf{e} = \Z \leftarrow \Y$ is left-directed, define the set of $\e$-micro edges as 
\[ \mathrm{mic}(\mathrm{e}):= \{ e \in \cE \ ; \ e = Z \leftarrow Y \ \mathrm{with} \  Z \in \Z , \ Y \in \Y \}. \]
\item If $\mathbf{e} = \Z \leftrightarrow \Y$ is bidirected, define the set of $\e$-micro edges as 
\[ \mathrm{mic}(\mathrm{e}):= \{ e \in \cB \ ; \ e = Z \leftrightarrow Y \ \mathrm{with} \  Z \in \Z , \ Y \in \Y \}. \]
\item If $\mathbf{e} = \Z -\Y$ is undirected, define the set of $\e$-micro edges as 
\[ \mathrm{mic}(\mathrm{e}):= \{ e \in \cU \ ; \ e = Z - Y \ \mathrm{with} \  Z \in \Z , \ Y \in \Y \}. \]
\end{itemize}
Given an arbitrary edge $\e = (\Z,\Y) \in \cE \cup \cB \cup \cU$, the $\e$\emph{-boundary} of $\Z$ is then the projection of $\mic(\e)$ to its source node, i.e. 
\[ \mathrm{bd}_{\e}(\Z) := \{ Z \in \Z \ ; \ \mathrm{there} \ \mathrm{is} \  Y \in \Y \ \mathrm{such} \ \mathrm{that} \ (Z,Y) \in \mic(\e) \} \subset \Z.  \]
Similarly, the $\e$\emph{-boundary} of $\e= (\Z,\Y)$ is defined as 
\[ \mathrm{bd}_{\e}(\Y) := \{ Y \in \Y \ ; \ \mathrm{there} \ \mathrm{is} \  Z \in \Z \ \mathrm{such} \ \mathrm{that} \ (Z,Y) \in \mic(\e) \} \subset \Y.  \]

\begin{figure}[h!]
\centering
\includegraphics[scale=0.3]{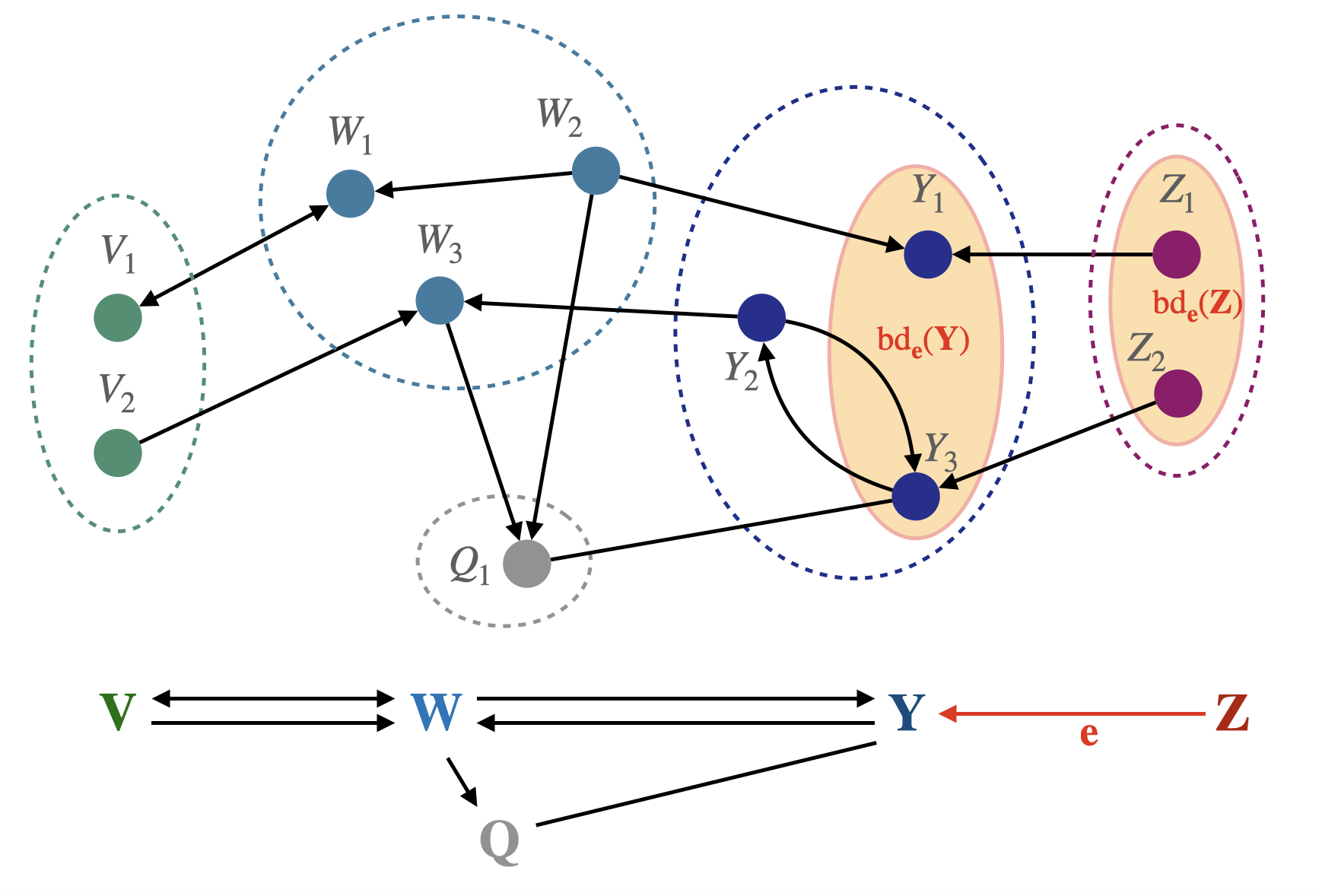}
\caption{The two boundaries of the edge $\mathbf{e}$ that is marked in red in the macro graph.}\label{fig.edge_boundaries}
\end{figure}

\end{definition}

\begin{theorem}[Faithfulness criterion 1] \label{lem.connectivity-criterion}
Let $\G$ be a directed mixed graph over the micro-variables $\lX$ with distribution $P_{\lX}$, and let $\cP $ be a partition of its nodes. Assume the following
\begin{itemize}
\item[(i)] The pair $(\G,P_{\lX})$ is $\sigma$-Markovian and $\sigma$-faithful.
\item[(ii)] For any strongly connected component $\cW$ of $\G$, there exists $\W \in \cP$, with $\cW \subset \W$.
\item[(iii)] For any adjacent pair of edges $\e = (\W,\Y), \e' =  (\Y,\Z)$, and any $Y \in \mathrm{bd}_{\e}(\Y)$ there exists $Y' \in \mathrm{bd}_{\e'}(\Y)$ such that $\mathrm{sc}_{\G}(Y) = \mathrm{sc}_{\G}(Y')$.
\end{itemize}
Then, $(\co(\G,\cP),P_{\lX})$ is $\sigma$-faithful and $\co(\G,\cP)$ is acyclic.
\end{theorem}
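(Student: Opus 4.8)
The plan is to establish the two assertions in turn: first that $\co(\G,\cP)$ is acyclic, and then, using that, that $(\co(\G,\cP),P_{\lX})$ is $\sigma$-faithful by verifying the criterion of Corollary~\ref{cor.macro-micro-faithful2}. The engine for both parts is a single \emph{propagation lemma}: if $Y \in \mathrm{bd}_{\mathbf{e}}(\W)$ for some edge $\mathbf{e}$ incident to a group $\W$, and there is a directed walk $\W \to \cdots \to \W''$ on $\co(\G,\cP)$, then there is a directed \emph{micro}-walk in $\G$ from $Y$ to some node of $\W''$. This is proved by induction on the length of the coarse walk: if its first edge is $\mathbf{g} = \W \to \W'$, then assumption~(iii) applied to the adjacent pair $\mathbf{e},\mathbf{g}$ at $\W$ yields $Y' \in \mathrm{bd}_{\mathbf{g}}(\W)$ with $\mathrm{sc}_{\G}(Y)=\mathrm{sc}_{\G}(Y')$; by~(ii) this component lies inside $\W$, so a directed micro-walk $Y \leadsto Y'$ stays in $\W$; a micro-edge $Y' \to Z'$ realizing $\mathbf{g}$ then lands in $\mathrm{bd}_{\mathbf{g}}(\W') \subseteq \W'$, and the induction hypothesis applied from $Z'$ finishes.

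For acyclicity, suppose toward a contradiction that $\co(\G,\cP)$ contains a directed cycle $\W_1 \to \cdots \to \W_k \to \W_1$ with distinct $\W_i$. Take a micro-edge realizing $\mathbf{e}_1 = \W_1 \to \W_2$ and let $Y$ be its source, so $Y \in \mathrm{bd}_{\mathbf{e}_1}(\W_1)$. Viewing the cycle as a closed directed walk from $\W_1$, the propagation lemma produces a directed micro-walk from $Y$ to some node of $\W_1$ that, by the construction, lies in $\mathrm{bd}_{\mathbf{e}_k}(\W_1)$; iterating from there keeps all subsequent passages through $\W_1$ inside the finite set $\mathrm{bd}_{\mathbf{e}_k}(\W_1)$, so after enough rounds two of them coincide and we obtain a directed cycle in $\G$ that meets both $\W_1$ and $\W_2$. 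Its strongly connected component then meets two distinct groups, contradicting~(ii). (This is also where I would make sure the boundary sets are finite, i.e. that the groups are finite.)

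For $\sigma$-faithfulness, by~(i) and Corollary~\ref{cor.macro-micro-faithful2} it is enough to lift every path $\Pi=(\W_0,\mathbf{e}_1,\W_1,\dots,\mathbf{e}_u,\W_u)$ on $\co(\G,\cP)$ to a micro-path $\pi$ with $\co(\pi)=\Pi$ such that $\pi$ is $\sigma$-unblocked by $\cT:=\bigcup_{\W\in\cS}\W$ whenever $\Pi$ is $\sigma$-unblocked by $\cS\subset\cP$. I build $\pi$ by the same stitching: choose a micro-edge $e_1$ realizing $\mathbf{e}_1$; having reached $b_l\in\W_l$ along $e_l$, use~(iii) to choose a micro-edge $e_{l+1}$ realizing $\mathbf{e}_{l+1}$ whose $\W_l$-endpoint $a_{l+1}$ satisfies $\mathrm{sc}_{\G}(b_l)=\mathrm{sc}_{\G}(a_{l+1})$, and join $b_l$ to $a_{l+1}$ by a directed micro-path $\rho_l$ inside $\mathrm{sc}_{\G}(b_l)\subseteq\W_l$; since the groups along $\Pi$ are distinct, the $\rho_l$ lie in distinct groups and $\pi$ is genuinely a path. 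The only delicate point is to orient $\rho_l$ so that $\pi$ gains no collider beyond those forced by $\Pi$: concretely, arrange that $\rho_l$ leaves $b_l$ exactly when $e_l$ enters $b_l$ (possible since $b_l$ and $a_{l+1}$ lie in a common, hence mutually reachable, component). With this convention the $\cP$-segments of $\pi$ are exactly the $\rho_l$, so $\co(\pi)=\Pi$, and the colliders of $\pi$ are precisely the nodes $a_{l+1}$ for which $\W_l$ is a collider of $\Pi$.

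It remains to check the three $\sigma$-blocking clauses for $\pi$ given $\cT$. Clause~(1) holds since $\W_0,\W_u\notin\cS$, so the endpoints of $\pi$ lie outside $\cT$. For clause~(2): every collider of $\pi$ sits in a collider group $\W_l$ of $\Pi$, which (by $\sigma$-openness of $\Pi$) has a descendant in $\cS$, i.e. a directed coarse path to some group of $\cS$; the propagation lemma converts this into a directed micro-walk from the collider node of $\pi$ to $\cT$, so every collider of $\pi$ has a descendant in $\cT$. For clause~(3): a non-collider of $\pi$ lying in $\cT$ lies in some $\W_l\in\cS$; since $\co(\G,\cP)$ is acyclic, $\sigma$-separation on it is ancestral, so a $\sigma$-open $\Pi$ contains no non-collider in $\cS$, which forces $\W_l$ to be a \emph{collider} of $\Pi$; then $\mathbf{e}_l$ points into $\W_l$, hence the micro-edge $e_l$ points into $b_l$, which removes the only way clause~(3) could fire at $b_l$, while any neighbour of a non-collider reached along $\rho_l$ lies in the same strongly connected component, so clause~(3) cannot fire either. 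Hence $\pi$ is $\sigma$-unblocked by $\cT$, finishing the proof. I expect the main obstacle to be precisely the orientation bookkeeping for the $\rho_l$ — making the collider pattern of $\pi$ mirror that of $\Pi$ without introducing spurious colliders at ``chain'' vertices — together with isolating exactly how acyclicity of $\co(\G,\cP)$ keeps non-colliders of $\Pi$ out of $\cS$.
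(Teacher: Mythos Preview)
Your proposal is correct and shares the paper's core mechanism: lift each coarse path $\Pi$ to a micro-path $\pi$ by realizing each coarse edge with a micro-edge and joining consecutive realizations inside the intermediate group via a directed path within the common strongly connected component guaranteed by~(iii). The one genuine difference is the \emph{order} in which you prove the two conclusions. You establish acyclicity of $\co(\G,\cP)$ first (via your propagation lemma and a pigeonhole argument on boundary nodes), and then exploit it to shortcut the clause-(3) verification: on an acyclic coarse graph $\sigma$-separation collapses to $d$-separation, so a $\sigma$-open $\Pi$ can have no non-collider in $\cS$, and hence any non-collider of $\pi$ lying in $\cT$ must sit inside a \emph{collider} group of $\Pi$, where your orientation rule for $\rho_l$ disposes of it directly. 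The paper instead proves $\sigma$-faithfulness first and therefore carries out a full analysis of the case in which a non-collider $\Pi(i)\in\cS$ satisfies $\scc_{\co(\G,\cP)}(\Pi(i))=\scc_{\co(\G,\cP)}(\Pi(i\pm1))$; only after proving acyclicity does that case become vacuous, so your ordering genuinely buys a shorter argument. Your orientation convention for $\rho_l$ (let $\rho_l$ leave $b_l$ exactly when $e_l$ enters $b_l$) differs cosmetically from the paper's rule but produces the same collider pattern on $\pi$. The finiteness caveat you flag for the pigeonhole step is real and is equally present in the paper's own acyclicity argument (which appeals to finiteness of the set of strongly connected components contained in a single group); both proofs tacitly need the groups to be finite.
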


\begin{corollary} \label{cor.components-faithful}
Let $\G$ be a directed mixed graph over the micro-variables $\lX$ with distribution $P_{\lX}$, and let $\cP$ be the partition into the strongly connected components of $\G$. If the pair $(\G,P_{\lX})$ is $\sigma$-Markovian and $\sigma$-faithful, then $(\co(\G,\cP),P_{\lX})$ is $\sigma$-faithful.
\end{corollary}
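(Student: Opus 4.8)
The plan is to derive the corollary as an immediate application of Theorem~\ref{lem.connectivity-criterion}: I would simply check that, when $\cP$ is the partition of $\G$ into its strongly connected components, all three hypotheses (i)--(iii) of that theorem are met. Hypothesis (i) is precisely the standing assumption. Hypothesis (ii) is automatic, since every strongly connected component $\cW$ of $\G$ is itself an element of $\cP$, so $\cW \subset \W$ holds with $\W = \cW \in \cP$; this is also the content of Lemma~\ref{lem.acyclicparts}(iii), which in addition tells us that $\co(\G,\cP)$ is acyclic.

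The only hypothesis that needs a (short) argument is (iii). I would fix an adjacent pair of edges $\e = (\W,\Y)$, $\e' = (\Y,\Z)$ on $\co(\G,\cP)$ with common node $\Y$, together with any $Y \in \mathrm{bd}_{\e}(\Y)$. Since $\e'$ is an edge of the coarse graph, Definition~\ref{def.coarsegraph} forces a micro-edge of the same type between some node of $\Y$ and some node of $\Z$; hence $\mic(\e')$ is nonempty, and therefore so is the boundary $\mathrm{bd}_{\e'}(\Y)$. I would then pick any $Y' \in \mathrm{bd}_{\e'}(\Y)$, and observe that because $\Y$ is, by the choice of $\cP$, a single strongly connected component of $\G$, we have $\mathrm{sc}_{\G}(Y) = \Y = \mathrm{sc}_{\G}(Y')$, which is exactly the condition required in (iii).

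With (i)--(iii) verified, Theorem~\ref{lem.connectivity-criterion} immediately gives that $(\co(\G,\cP),P_{\lX})$ is $\sigma$-faithful. I do not expect any genuine obstacle here: the substantive work all lives in Theorem~\ref{lem.connectivity-criterion}, and the point of the strongly-connected-component partition is precisely that it renders hypotheses (ii) and (iii) trivial---(iii) in particular, since each group is its own strongly connected component, so the equality of strongly connected components holds for \emph{any} two of its members.
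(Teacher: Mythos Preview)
Your proposal is correct and follows exactly the intended approach: the corollary is stated in the paper immediately after Theorem~\ref{lem.connectivity-criterion} with no separate proof, and is meant to be a direct application of that theorem. Your verification of hypotheses (i)--(iii), in particular the observation that each group $\Y$ is itself a strongly connected component so that $\mathrm{sc}_{\G}(Y) = \Y = \mathrm{sc}_{\G}(Y')$ for any $Y,Y'\in\Y$, is precisely the point.
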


Corollary \ref{cor.components-faithful} is no longer true if $\sigma$-separation is replaced by $m$-separation. The graph on the right of Figure \ref{fig.cycles} provides a counterexample, as every micro-path between groups $\W$ and $\Z$ is $m$-blocked by $\Y$ but $\sigma$-unblocked by $\Y$. This example serves as another illustration that the notions of separation entail different consequences, see \cite{BonFOrPetMoo21} for more. \\

\noindent The previous results show that if cyclic relationships are present \emph{internal} to the variable groups, this can be an advantage for causal discovery rather than a disadvantage. Assuming a variable group to be well connected internally to achieve $\sigma$-faithfulness on the group-level is to some degree at odds with assuming acyclicity on the micrograph as acyclicity disallows paths to be present if they induce a cycle. However, if one zooms in on the proof of Theorem \ref{lem.connectivity-criterion}, it becomes clear that condition (iii) can be replaced by weaker sufficient conditions that still guarantee $\sigma$-faithfulness, even if the micrograph is acyclic. These conditions need to be formulated separately for (almost) mediators, confounders and colliders and are therefore more technical to formulate. Here, by an almost mediator we mean a motive of the form $ A \leftrightarrow B \rightarrow C$ (right-directed almost mediator) or $ A \leftrightarrow B \leftarrow C$ (left-directed almost mediator).  For (almost) mediators, condition (iii) can be replaced by \\
\begin{itemize}
\item[(iii-a)] For any adjacent pair of edges $\e = \W\to \Y, \e' =  \Y\to \Z $ (or $\e = \W\leftrightarrow \Y, \e' =  \Y\to \Z$), and any $Y \in \mathrm{bd}_{\e}(\Y)$ there exists $Y' \in \mathrm{bd}_{\e'}(\Y)$ and a right-directed (possibly trivial) path $Y \to \dots \to \dots \to Y'$ that does not leave $\Y$.
\item[(iii-b)] For any adjacent pair of edges $\e = \W\leftarrow \Y, \e' =  \Y\leftarrow \Z$ or ($\e = \W\leftarrow \Y, \e' =  \Y\leftrightarrow \Z$), and any $Y' \in \mathrm{bd}_{\e'}(\Y)$ there exists $Y \in \mathrm{bd}_{\e}(\Y)$ and a left-directed (possibly trivial) path $Y \leftarrow \dots \leftarrow \dots \leftarrow Y'$ that does not leave $\Y$.\\
\end{itemize}   
For confounders the corresponding condition becomes \\
\begin{itemize}
\item[(iii-c)] For any adjacent pair of edges $\e = \W\leftarrow \Y, \e' =  \Y\to \Z$, and any $Y \in \mathrm{bd}_{\e}(\Y)$ there exists $Y' \in \mathrm{bd}_{\e'}(\Y)$ and a confounding path $Y \leftarrow \dots \leftarrow Y'' \to \dots \to \dots \to Y'$ that does not leave $\Y$. \\
\end{itemize} 

Finding an appropriate condition for colliders is a bit less straightforward, as faithfulness violations may arise by conditioning on a collider $\Y$, e.g. $\W \to \Y \leftarrow \Z$ in such a way that while a micro-collider inside $\Y$ is unblocked, a non-collider in $\Y$ is blocked again, see e.g. the second example in Figure \ref{fig.faithfulness-violations}. In Lemma \ref{lem.connectivity-criterion} this was avoided by enforcing these non-colliders to only point to neighbors in the same strong connected component and condition (iii) in the Definition of $\sigma$-separation, Definition \ref{def.sigma-sep}. The following condition, although strong, will do the job. \\

\begin{itemize}
\item[(iii-d)] For any adjacent pair of colliding edges $\e = (\W,\Y), \e' =  (\Y,\Z)$, and any $ \mathrm{bd}_{\e}(\Y) \cap \mathrm{bd}_{\e'}(\Y) \neq \emptyset$, i.e. there exist colliding edges $(W,Y),(Y,Z)$ with $(W,Y) \in \mic(\e)$ and $(Y,Z) \in \mic(\e')$. \\
\end{itemize} 

Thus, we have the following $\sigma$-faithfulness criterion that is more meaningful when a micro DMG is acyclic, i.e. an ADMG. Note that in this case, condition (ii) of the following theorem is trivially satisfied. In addition, perhaps surprisingly,  it does not enforce the coarse graph $\co(\G,\cP)$ to be acyclic as did Theorem \ref{lem.connectivity-criterion}.

\begin{theorem}[Faithfulness criterion 2]  \label{lem.connectivity-criterion2}
Let $\G$ be a directed mixed graph over the micro-variables $\lX$ with distribution $P_{\lX}$, and let $\cP $ be a partition of its nodes. Assume the following:
\begin{itemize}
\item[(i)] The pair $(\G,P_{\lX})$ is $\sigma$-Markovian and $\sigma$-faithful.
\item[(ii)] For any strongly connected component $\cW$ of $\G$, there exists $\W \in \cP$, with $\cW \subset \W$.
\item[(iii)] An adjacent pair of edges $\e = (\W,\Y), \e' =  (\Y,\Z)$, satisfies the conditions (iii-a), (iii-b), (iii-c), or (iii-d) depending on whether it is a right-directed (almost) mediator, a left-directed almost mediator, a confounder or a collider, respectively.
\end{itemize}
Then, $(\co(\G,\cP),P_{\lX})$ is $\sigma$-faithful.
\end{theorem}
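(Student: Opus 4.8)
The plan is to reduce, via Lemma \ref{lem.macro-micro-faithful} (whose hypotheses hold by (i)), to the lifting statement: whenever two groups $\Y,\Z$ are $\sigma$-connected by a set $\cS\subset\cP$ in $\co(\G,\cP)$, there exist micro nodes $Y\in\Y$, $Z\in\Z$ that are $\sigma$-connected by $\cT=\bigcup_{\W\in\cS}\W$ in $\G$. So I fix a coarse path $\Pi=(\Y_0,\tilde e_1,\Y_1,\dots,\tilde e_u,\Y_u)$ with $\Y_0=\Y$, $\Y_u=\Z$ that is $\sigma$-unblocked by $\cS$, and construct a $\sigma$-open micro \emph{walk} between suitable endpoints in $\Y$ and $\Z$ (walks and paths being interchangeable for $\sigma$-separation). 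Crucially I do \emph{not} insist that this walk coarsen back to $\Pi$ — the examples around Figure \ref{fig.faithfulness-violations} show rerouting is sometimes forced — so Corollary \ref{cor.macro-micro-faithful2} is too rigid and Lemma \ref{lem.macro-micro-faithful} is the right tool.

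The walk is assembled segment by segment along $\Pi$. Each crossing edge $\tilde e_l$ is realised by a micro edge in $\mic(\tilde e_l)$, and each internal coarse node $\Y_l$ is replaced by a within-group micro-segment dictated by the local motif $(\tilde e_l,\tilde e_{l+1})$: for a right-directed (almost) mediator or a left-directed almost mediator I invoke (iii-a)/(iii-b) to obtain a left- or right-directed within-$\Y_l$ path from the incoming micro endpoint to an appropriate boundary node; for a confounder I invoke (iii-c) for a within-$\Y_l$ confounding path; for a collider I invoke (iii-d) to take the trivial segment at a node of $\mathrm{bd}_{\tilde e_l}(\Y_l)\cap\mathrm{bd}_{\tilde e_{l+1}}(\Y_l)$, which is a micro-collider. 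At the ends I take trivial segments at boundary nodes, which lie outside $\cT$ because $\Y,\Z\notin\cS$ and the groups are disjoint. The boundary-node choices are threaded consistently by one pass along $\Pi$: the exit micro node produced at $\Y_l$ feeds the realisation of $\tilde e_{l+1}$, using that (iii-a)/(iii-c) propagate entry-to-exit ``downstream'' while (iii-b) propagates exit-to-entry ``upstream'', and anchoring both incident sides at collider groups via (iii-d).

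I then verify $\sigma$-openness of the walk against $\cT$ through Definition \ref{def.sigma-sep}. Blocking by clause (1) is excluded by the endpoint choice. For clause (2), the only colliders of the walk are the micro-colliders at collider groups $\Y_l$ of $\Pi$; such a coarse collider has a descendant $\W\in\cS$, and along a \emph{shortest} directed coarse path $\Y_l\to\cdots\to\W$ every internal node, and $\Y_l$ itself relative to its incoming colliding edge, is a right-directed (almost) mediator, so (iii-a) can be iterated to lift that path to a directed micro path from the micro-collider into $\W\subset\cT$. For clause (3): a within-segment or crossing micro node can lie in $\cT$ only when its group $\Y_l\in\cS$, in which case $\Y_l$ is a conditioned non-collider of the $\sigma$-open path $\Pi$, so clause (3) applied to $\Pi$ forces every group that $\Y_l$ points into along $\Pi$ to lie in $\mathrm{sc}_{\co(\G,\cP)}(\Y_l)$, i.e.\ $\Pi$ passes through a genuine coarse cycle at $\Y_l$. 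Combining this with condition (ii) (micro strongly connected components sit inside single groups), I re-route the micro walk through such conditioned groups so that it enters $\Y_l$ only through arrowheads — traversing the forced coarse cycle backwards from $\Y_l$, which turns the would-be $\sigma$-blocked tail-outgoing mediator/confounder micro node into a $\sigma$-open micro-collider (descendant of itself in $\cT$) and continues toward the rest of $\Pi$ — so clause (3) is never triggered.

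The main obstacle is exactly this last point. Because condition (iii) only supplies within-group directed and confounding connections, it cannot prevent a conditioned group on the coarse path from forcing the micro walk through a tail-outgoing conditioned node; one must instead exploit the coarse cycle that $\sigma$-openness of $\Pi$ imposes on every conditioned non-collider of $\Pi$, re-route around it at the micro level using condition (ii), and do so globally — compatibly with the descendant-lifting at collider groups, with the boundary-node bookkeeping, and while guaranteeing the walk still terminates in $\Z$. The remaining ingredients (the reduction via Lemma \ref{lem.macro-micro-faithful}, realising crossing edges in $\mic(\cdot)$, the mediator/confounder/collider substitutions via (iii-a)–(iii-d), and the endpoint checks) are routine given the earlier results.
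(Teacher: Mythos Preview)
Your core construction --- lifting a $\sigma$-open coarse path $\Pi$ segment by segment, bridging within each group via (iii-a)/(iii-b)/(iii-c)/(iii-d) according to the local motif, realising each crossing edge in $\mic(\cdot)$, and verifying clause~(2) at a collider group by lifting a directed coarse path to a member of $\cS$ through iterated (iii-a) --- is exactly the paper's proof. The paper frames this via Corollary~\ref{cor.macro-micro-faithful2} rather than Lemma~\ref{lem.macro-micro-faithful}: it keeps $\co(\pi)=\Pi$ throughout and states that the $\sigma$-openness check ``completely parallels'' the proof of Theorem~\ref{lem.connectivity-criterion}, noting in particular that a micro non-collider of $\pi$ always sits in a non-collider group of $\Pi$ (because (iii-d) produces a single micro-collider).

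Where you depart --- declaring Corollary~\ref{cor.macro-micro-faithful2} ``too rigid'' and proposing a rerouting through coarse cycles at conditioned non-collider groups --- is where the proposal becomes shaky. First, the evidence you cite (Figure~\ref{fig.faithfulness-violations}) is off-target: those are faithfulness \emph{violations}, hence fail (iii-a)--(iii-d), and so cannot demonstrate that rerouting is ever forced \emph{under the hypotheses of this theorem}. Second, the rerouting itself is only a sketch: ``traversing the forced coarse cycle backwards from $\Y_l$'' does not specify how that traversal is lifted to the micro level using (iii-a)--(iii-d), does not check that the newly introduced micro nodes (which may themselves lie in other conditioned groups on the cycle) satisfy the $\sigma$-openness clauses, and does not argue that iterated rerouting at several conditioned non-colliders terminates in a walk that still ends in $\Z$. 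The paper does none of this --- it stays with $\co(\pi)=\Pi$ and absorbs the conditioned-non-collider case into the argument already carried out for Theorem~\ref{lem.connectivity-criterion} --- so your detour is both unnecessary for the route the paper takes and not itself complete as written.
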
 

The discussion in this section also shows the importance of choosing variable groups carefully if one wants to guarantee $\sigma$-faithfulness which may be a non-trivial task in real-world applications. The authors of \cite{ParKas17} tested empirically how often group-level faithfulness would be violated in Erd\"os-R\'enyi random DAGs with groups of small sizes. They found that such violations were likely to appear in sparse graphs but unlikely to appear in dense random graphs. This matches the theoretical results of this section that internally well-connected groups help to ensure group-level faithfulness.

\subsection{Adjacency and Orientation Faithfulness}

\noindent We will therefore consider the two weaker notions of \emph{adjacency faithfulness} and \emph{orientation faithfulness}. The former is at the base of the conservative PC-algorithm \cite{ramsey_adjacency-faithfulness_2006}, and does transfer from the micro-variable to the group-level. 

\begin{definition}[Adjacency faithfulness]
A  pair $(\cG,P)$ of a mixed graph $\G$ and a distribution $P$ over its node variables is \emph{adjacency faithful} if any two nodes $X,Y$ that are independent given some conditioning set $\cS$ are not adjacent, i.e. they do not share an edge.  
\end{definition}

Note that adjacency faithfulness only makes reference to the skeleton of the graph $\G$ and not to any specific type of separation.

\begin{lemma} \label{lem.adjacencyfaithfulness}
Let $\G$ be a mixed graph over the variables $\lX$ with distribution $P_{\lX}$, and let $\cP $ be a partition that induces the coarse graph $\co(\cG,\cP)$. If the pair $(\cG,P_{\lX})$ is adjacency faithful on $\G$, then the pair $(\co(\cG,\cP), P_{\lX})$ is adjacency faithful as well.
\end{lemma}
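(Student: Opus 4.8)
The plan is to argue by contraposition on the skeleton of the coarse graph. Suppose the conclusion fails, i.e. there exist two distinct groups $\Y, \Z \in \cP$ that are adjacent in $\co(\cG,\cP)$ yet $\Y \ind \Z \mid \cT$ for some conditioning set $\cT \subseteq \bigcup_{\W \in \cP \setminus \{\Y,\Z\}} \W$ (more precisely, $\cT = \bigcup_{\W \in \cS}\W$ for some $\cS \subseteq \cP$). I want to exhibit a single micro-edge whose two endpoints are conditionally independent given a suitable micro-conditioning set, thereby contradicting adjacency faithfulness of $(\cG, P_{\lX})$.

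First I would unpack what ``$\Y$ and $\Z$ are adjacent in $\co(\cG,\cP)$'' means via Definition \ref{def.coarsegraph}: there is at least one micro-edge $e$ (directed, bidirected, or undirected) between some $Y_0 \in \Y$ and some $Z_0 \in \Z$ in $\cG$. Since $Y_0 \neq Z_0$ (the groups are disjoint), the pair $Y_0, Z_0$ is adjacent in $\cG$. Next I would use the mutual independence $\Y \ind \Z \mid \cT$ to derive a pairwise statement: mutual conditional independence of the groups implies in particular $Y_0 \ind Z_0 \mid \cT$ (this is the trivial direction of the definition of pairwise independence, or a one-line consequence of the factorization of the joint conditional density by marginalizing over the other coordinates). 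Thus $Y_0$ and $Z_0$ are two nodes of $\cG$ that are adjacent in $\cG$ but rendered conditionally independent by the set $\cT \subseteq \lX$. This directly violates adjacency faithfulness of $(\cG, P_{\lX})$, completing the contrapositive.

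The main thing to be careful about — rather than a genuine obstacle — is the bookkeeping around conditioning sets: the definition of adjacency faithfulness quantifies over \emph{some} conditioning set $\cS$, and I must make sure the set $\cT$ arising from the macro-independence is a legitimate subset of $\lX$ not containing $Y_0$ or $Z_0$ (which it is, since $\cT$ is a union of groups other than $\Y$ and $\Z$, and these are disjoint from $\Y$ and $\Z$). I would also note explicitly that adjacency faithfulness, as defined in the excerpt, makes no reference to any separation notion and no reference to mutual versus pairwise independence in its statement, so no $\sigma$- or $m$-machinery is needed here; the argument is purely combinatorial on the skeleton together with the elementary fact that group-level (mutual) independence implies independence of any pair of constituent micro-variables. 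This makes the proof short, and I would present it in the contrapositive form above.
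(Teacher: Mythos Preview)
Your proposal is correct and takes essentially the same approach as the paper: both arguments hinge on the fact that mutual conditional independence of $\Y$ and $\Z$ given $\cT$ implies pairwise conditional independence $Y \ind Z \mid \cT$ for every $Y \in \Y$, $Z \in \Z$, together with the definition of edges in $\co(\cG,\cP)$. The only cosmetic difference is that you phrase it as a contrapositive (start with a macro-edge and derive a violation of micro-level adjacency faithfulness), whereas the paper proceeds directly (start with a macro-independence and conclude there is no macro-edge); the logical content is identical.
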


\begin{proof}
Suppose that $\Y \ind \Z | \cS$ for some $ \cS \subset \cP\backslash\{\Y,\Z\} $. Because mutual conditional independence implies pairwise conditional independence, it follows by adjacency faithfulness on $\G$ that $Y$ and $Z$ do not share an edge for all $Y \in \Y, Z \in \Z$. By definition of $\co(\cG,\cP)$, $\Y$ and $\Z$ do not share an edge.    
\end{proof}

\begin{remark}
Lemma \ref{lem.adjacencyfaithfulness} does not use the full strength of adjacency faithfulness on the micro-level: in fact it suffices to assume that $X$ and $Y$ \emph{that belong to different variable groups} do not share an edge if they are conditionally independent given a conditioning set $ \cS$. In other words: adjacency faithfulness violations \emph{within a group} do not matter for adjacency faithfulness on the macro-level. 
\end{remark}

Combining Theorem \ref{prop.sigma-Markov_relations} with Lemma \ref{lem.adjacencyfaithfulness}, we see that if $(\cG,P_{\lX})$ is a $\sigma$-Markovian and adjacency faithful pair of a DMG $\cG$ and a distribution of micro-variables $P_{\lX}$, then for a given partition $\cP$, the pair $(\co(\cG,\cP), P_{\lX})$ is strongly $\sigma$-Markovian and adjacency faithful as well. If the graph $\co(\cG,\cP)$ is moreover a DAG, these are exactly the assumptions that the conservative PC algorithm of \cite{ramsey_adjacency-faithfulness_2006} requires to be sound. To our knowledge, soundness of conservative PC has not been discussed beyond the acyclic case, but we believe it to hold as well. This is because soundness of the PC algorithm is not affected by allowing cycles and working with $\sigma$-separation as demonstrated in \cite{mooij_constraint-based_2020}. Recall that the conservative PC algorithm takes the observational distribution as an input and outputs a so-called \emph{e-pattern}, see \cite{ramsey_adjacency-faithfulness_2006} for an exact definition.

\begin{corollary} \label{cor.soundness-conservative-PC}
Let $(\cG,P_{\lX})$ be a $\sigma$-Markovian and $\sigma$-faithful pair of a DMG $\cG$ and a distribution of micro-variables $P_{\lX}$. Let $\cP$ be a partition of $\lX$ such that $\co(\cG,\cP)$ is a DAG. Then the conservative PC algorithm with vector-valued (oracle) conditional independence tests is sound for $\co(\cG,\cP)$ in that it outputs an e-pattern that represents $\co(\cG,\cP)$.
\end{corollary}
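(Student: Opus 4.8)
The plan is to deduce the corollary from the known soundness guarantee for the conservative PC algorithm together with the transfer results already established in the paper. Recall that the soundness theorem of \cite{ramsey_adjacency-faithfulness_2006} states that, given an oracle for the relevant conditional independence statements, the conservative PC algorithm returns an e-pattern representing a DAG $H$ whenever the underlying distribution is Markovian and adjacency faithful with respect to $H$. Since by hypothesis $\co(\cG,\cP)$ is a DAG, $\sigma$-separation on $\co(\cG,\cP)$ coincides with $d$-separation, so we are in the purely acyclic regime and do not need the cyclic extension of \cite{mooij_constraint-based_2020} mentioned above. Hence it suffices to verify that $(\co(\cG,\cP),P_{\lX})$ satisfies (a) the $d$-Markov property and (b) adjacency faithfulness with respect to $\co(\cG,\cP)$, and to note that the soundness argument of \cite{ramsey_adjacency-faithfulness_2006} is insensitive to the dimensionality of the nodes: it manipulates only abstract conditional independence statements (which obey the usual semi-graphoid rules because all densities are positive) and $d$-separation facts, via the two structural assumptions. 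In particular, reading ``conditional independence'' throughout as mutual conditional independence of the variable groups, as tested by the vector-valued oracle, the proof of \cite{ramsey_adjacency-faithfulness_2006} applies verbatim with the nodes of $H = \co(\cG,\cP)$ being the groups in $\cP$.

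To verify (a), I would apply Theorem \ref{prop.sigma-Markov_relations}: since $(\cG,P_{\lX})$ is $\sigma$-Markovian and $\cP$ partitions $\lX$ into finitely many finite sets, $(\co(\cG,\cP),P_{\lX})$ is $\sigma$-Markovian, which, because $\co(\cG,\cP)$ is a DAG, is precisely the causal $d$-Markov property, with the independence statements being the mutual ones matching the oracle.

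To verify (b), I would first observe that $\sigma$-faithfulness of $(\cG,P_{\lX})$ forces $\cG$ itself to be adjacency faithful. Indeed, if two micro-nodes $X$ and $Y$ are adjacent via an edge $e$ and $\cS$ is any set not containing $X$ or $Y$, then the one-edge path $\pi=(X,e,Y)$ has no inner node, so condition (1) of Definition \ref{def.sigma-sep} fails (as $X,Y\notin\cS$) while conditions (2) and (3) fail vacuously; hence $\pi$ is not $\sigma$-blocked by $\cS$, so $X$ and $Y$ are $\sigma$-connected given $\cS$, and $\sigma$-faithfulness then yields that $X$ and $Y$ are conditionally dependent given $\cS$. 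Thus $\cG$ is adjacency faithful, and Lemma \ref{lem.adjacencyfaithfulness} transfers adjacency faithfulness to $(\co(\cG,\cP),P_{\lX})$.

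Plugging (a) and (b) into the cited soundness theorem gives the conclusion. The one point that deserves explicit attention --- and that I expect to be the only genuine obstacle --- is the claim from the first paragraph that the \cite{ramsey_adjacency-faithfulness_2006} soundness proof carries over unchanged to group-valued nodes: one must check that neither the skeleton-recovery step nor the conservative unshielded-collider step, nor the subsequent Meek-type orientation on unambiguous triples, uses anything beyond the Markov property, adjacency faithfulness, and the semi-graphoid properties of conditional independence --- all of which hold here, the last because of the standing positivity assumption. Everything else is a direct invocation of results proved earlier in the paper.
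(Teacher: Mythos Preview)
Your proposal is correct and follows essentially the same route as the paper: combine Theorem \ref{prop.sigma-Markov_relations} (transfer of the $\sigma$-Markov property) with Lemma \ref{lem.adjacencyfaithfulness} (transfer of adjacency faithfulness) and then invoke the soundness guarantee of \cite{ramsey_adjacency-faithfulness_2006}, exactly as the paper does in the paragraph preceding the corollary. You add two useful pieces of connective tissue that the paper leaves implicit --- the short argument that $\sigma$-faithfulness of $(\cG,P_{\lX})$ already implies micro-level adjacency faithfulness (needed to feed into Lemma \ref{lem.adjacencyfaithfulness}), and the remark that the \cite{ramsey_adjacency-faithfulness_2006} soundness proof uses only Markovianity, adjacency faithfulness, and semi-graphoid axioms and is therefore dimension-agnostic --- but the overall strategy is identical.
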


In an e-pattern, speficific violations of faithfulness, namely violations of \emph{orientation faithfulness} can be singled out and are marked by a $*$. To recap the definition of orientation faithfulness for DAGs, we recall that a triple of nodes $(X,Y,Z)$ in a DAG is called \emph{unshielded} if there is  an edge between $X$ and $Y$ and an edge between $Y$ and $Z$ but none between $X$ and $Z$.
\begin{definition}[Orientation faithfulness] \label{def.orientation-faithfulness}
Let $\G$ be a DAG over a set of variables $\lX$ with distribution $P_{\lX}$. The pair $(\cG,P_{\lX})$
is called \emph{orientation faithful} if for any unshielded triple $(X,Y,Z)$ the following holds.
 \begin{itemize}
     \item[(O1)] If $(X,Y,Z)$ is a collider, then $X$ and $Z$ are dependent given any subset of $\lX\backslash\{X,Z\}$ that contains $Y$;
      \item[(O2)] If $(X,Y,Z)$ is a non-collider, then $X$ and $Z$ are dependent given any subset of $\lX\backslash\{X,Z\}$ that does not contain $Y$;
 \end{itemize}
\end{definition}

For DMGs with potential cycles, orientation faithfulness is more tricky to define, as the absence of an edge between two nodes $X,Y$ does no longer mean that they can be $\sigma$-separated. To deal with this, we will rather introduce the following notion of \emph{local faithfulness} for DMGs which agrees with orientation faithfulness if the graph is a DAG.

\begin{definition}[Local faithfulness]
Let $\G$ be a DMG over a set of variables $\lX$ with distribution $P_{\lX}$. A \emph{local faithfulness violation} is a short path $(X,e_1,Y,e_2,Z)$ such that there exists a set $\cS \subset \lX \backslash \{X,Y,Z \}$ with  $X \ind Z | \cS$ and $X \ind Z| \cS,Y$.
\\

The pair $(\cG,P_{\lX})$ is called \emph{locally faithful} if there are no local faithfulness violations.
\end{definition} 

\begin{lemma}[see \cite{ramsey_adjacency-faithfulness_2006}] \label{lem.locally-orientation}
If G is a DAG, a pair $(\cG,P_{\lX})$ is locally faithful if it is orientation faithful.
\end{lemma}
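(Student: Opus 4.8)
The plan is to unwind the definitions and verify, for every short path $(X,e_1,Y,e_2,Z)$ on the DAG $\cG$ and every $\cS \subset \lX \backslash \{X,Y,Z\}$, that it cannot simultaneously hold that $X \ind Z \, | \, \cS$ and $X \ind Z \, | \, \cS, Y$; this is exactly the assertion that $(\cG,P_{\lX})$ has no local faithfulness violation. I would first reduce to the case that $(X,Y,Z)$ is an unshielded triple in the sense of the text preceding Definition \ref{def.orientation-faithfulness}, i.e. that $X$ and $Z$ are non-adjacent in $\cG$: this is the reading of ``short path'' under which local faithfulness literally coincides with orientation faithfulness, and if $X$ and $Z$ happen to be adjacent one may instead invoke adjacency faithfulness, under which $X \ind Z \, | \, \cS$ cannot hold to begin with. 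Since $\cG$ is a DAG, $e_1$ and $e_2$ are directed edges, so the inner node $Y$ of this length-two path is either a collider or a non-collider of the path, and in a DAG this coincides with $(X,Y,Z)$ being, respectively, a collider or a non-collider in the sense of Definition \ref{def.orientation-faithfulness}.

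The two cases are then one-liners via the contrapositives of (O1) and (O2). If $(X,Y,Z)$ is a collider, then $\cS \cup \{Y\}$ is a subset of $\lX \backslash \{X,Z\}$ containing $Y$, so (O1) yields that $X$ and $Z$ are dependent given $\cS, Y$; hence the second conjunct $X \ind Z \, | \, \cS, Y$ fails. If $(X,Y,Z)$ is a non-collider, then $\cS$ is a subset of $\lX \backslash \{X,Z\}$ not containing $Y$ (because $Y \notin \cS$), so (O2) yields that $X$ and $Z$ are dependent given $\cS$; hence the first conjunct $X \ind Z \, | \, \cS$ fails. Either way the conjunction required by the definition of a local faithfulness violation cannot hold, and since the short path and $\cS$ were arbitrary, $(\cG,P_{\lX})$ is locally faithful.

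The one point I would be careful about when writing this out is precisely the shielded case: orientation faithfulness as stated constrains only unshielded triples, so if $X$ and $Z$ are adjacent it gives no information, and one must either interpret ``short path'' as ruling out adjacent endpoints (the interpretation under which the claimed agreement with orientation faithfulness is exact) or carry along the adjacency-faithfulness assumption that is already in play in this part of the paper. Everything else is an immediate application of the contrapositives of (O1) and (O2) and involves no computation.
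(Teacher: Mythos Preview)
The paper does not supply its own proof of this lemma; it is stated with a citation to \cite{ramsey_adjacency-faithfulness_2006} and left unproven. Your argument via the contrapositives of (O1) and (O2) is exactly the standard one-line verification a reader would reconstruct, and your caveat about the shielded case is well taken: orientation faithfulness as defined constrains only unshielded triples, so the statement is cleanest when ``short path'' is read as excluding adjacent endpoints, or else one tacitly uses the adjacency-faithfulness assumption that is already in force in this section. There is nothing further to compare.
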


Examples of faithfulness violations in the literature are typically either violations of adjacency or orientation faithfulness. Figure \ref{fig.non-local} below shows that if the nodes correspond to variable groups, there are faithfulness violations that are non-local. In other words, both orientation and adjacency faithfulness are satisfied, still ($\sigma$- or $d$-)faithfulness is violated. In particular, such non-local violations would not be marked in the output of the conservative PC algorithm.

\begin{figure}[h!]
\centering
\includegraphics[scale=0.45]{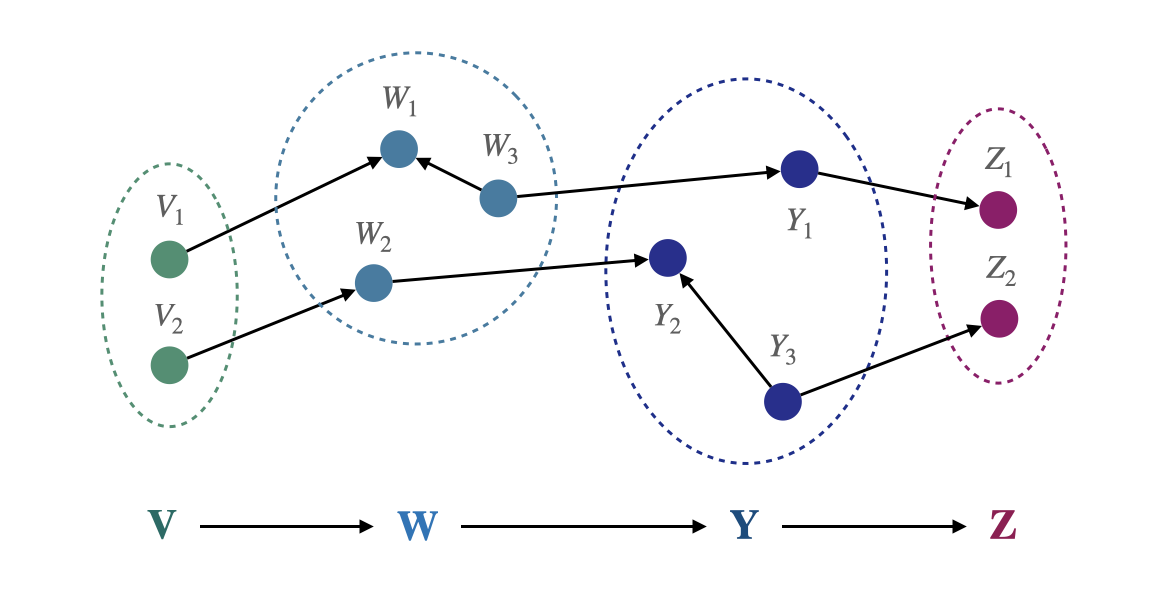}
\caption{An example of a non-local $\sigma$-faithfulness violation (resp. $d$-faithfulness violation as there are no cycles). If the joint distribution is $\sigma$-Markovian and $\sigma$-faithful to the micro graph, the group DAG does not contain any orientation faithfulness violations. At the same time, all micro paths between the groups $\mathbf{V}$ and $\Z$ are $\sigma$-blocked while the only macro path is $\sigma$-open.} \label{fig.non-local}
\end{figure}

\subsection{Faithfulness and Meek's orientation rules revisited} \label{sec.Meek}
Constrained-based algorithms for causal discovery such as the PC-algorithm \cite{spirtes_causation_1993} infer the directionality of arrows in a DAG by first identifying $v$-structures and then applying Meek's orientation rules\footnote{Note that these rules pertain to DAGs, not to general DMGs.} \cite{meek_causal_1995}. In this subsection, only the first of these rules will be relevant. It states that an edge $X - Y$ is to be oriented as $X \to Y$ if there is an edge $Z \to X$ such that $Z$ and $Y$ are non-adjacent. The authors of \cite{ParKas17} discuss the validity of Meek's orientation rules for group DAGs using the example depicted in Figure \ref{fig.PK-example}. Translated to our terminology, their example consists of a micro-variable DAG $\cG$, a partition $\cP = \{\mathbf{V}, \W,\Y,\Z\}$ of the micro-variables and a group DAG $\cG'$ with nodes $\mathbf{V},\W,\Y,\Z$ such that
\begin{itemize}
\item the micro-level pair $(\cG,P)$ is causally Markovian and $d$-faithful, where $P$ is the micro-variable distribution;
\item the macro-level pair $(\cG',P)$ is causally Markovian and $d$-faithful;
\item $\cG' \neq \co(\cG,\cP)$ and in particular $\Y \leftarrow \mathbf{V}$ in $\co(\cG,\cP)$ and $\Y \to \mathbf{V}$ in $\cG'$.
\end{itemize}

\begin{figure}
\begin{subfigure}[b]{0.3\textwidth}
         \centering
         \includegraphics[scale=0.23]{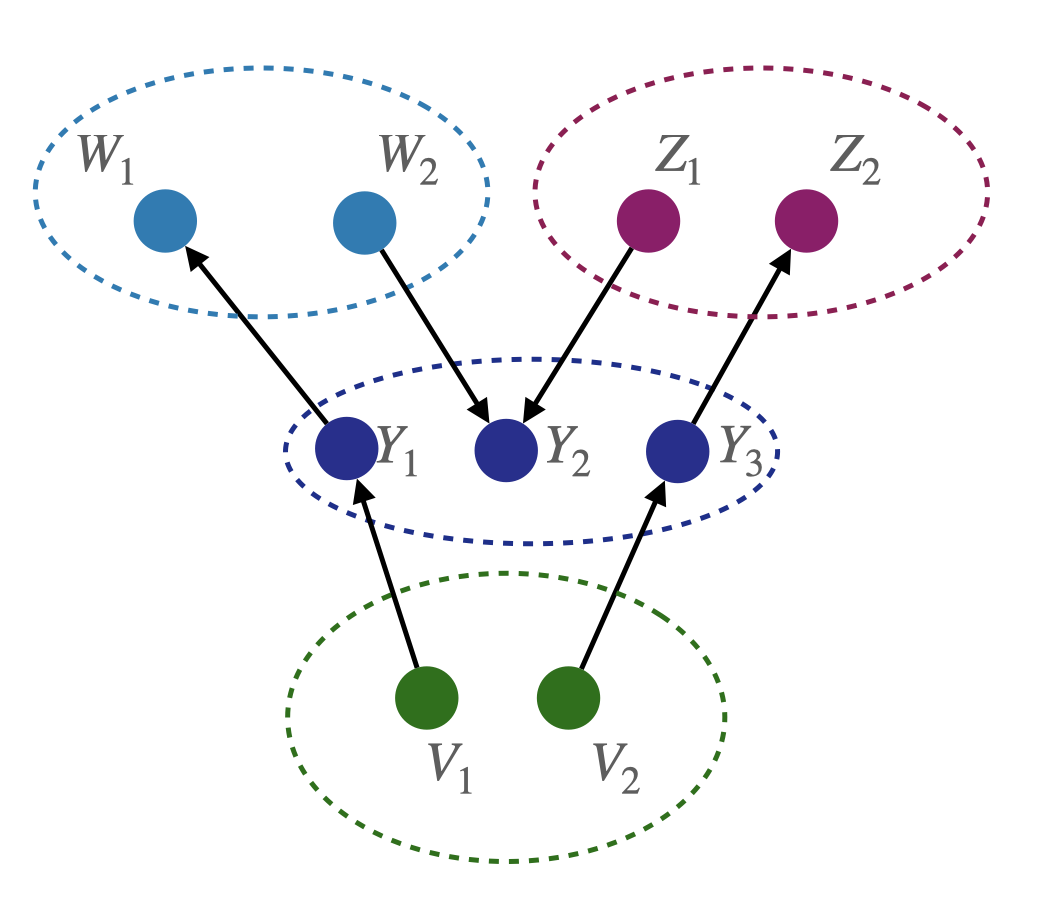}
     \end{subfigure}
     \hfill
     \begin{subfigure}[t]{0.3\textwidth}
         \centering
         \includegraphics[scale=0.31]{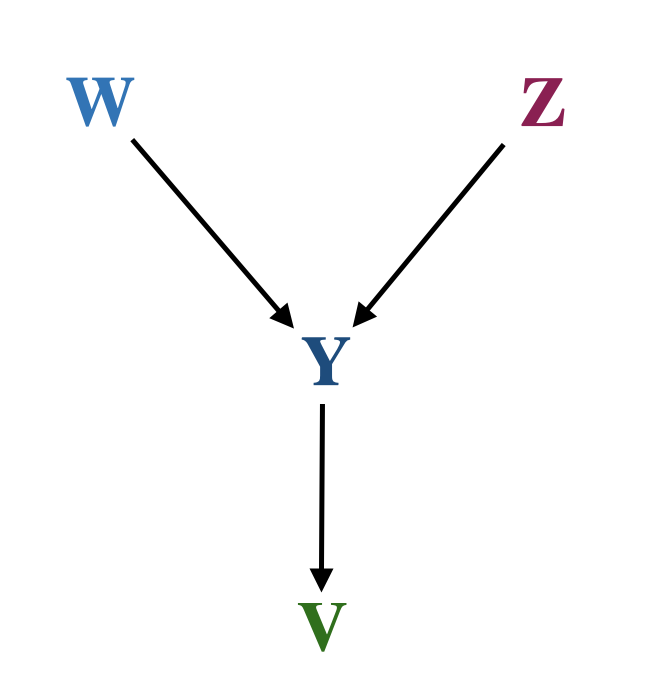}
     \end{subfigure}
     \hfill
     \begin{subfigure}{0.3\textwidth}
         \centering
         \includegraphics[scale=0.3]{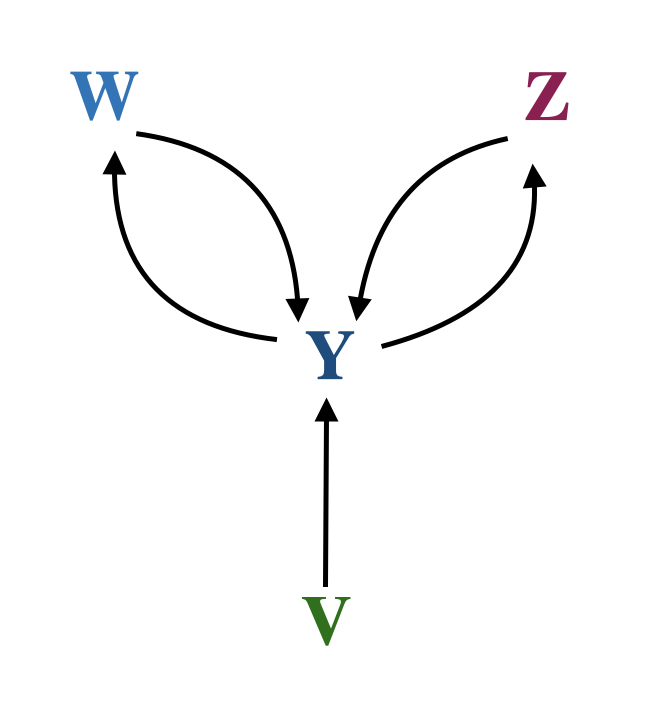}
     \end{subfigure}
\caption{Left: The micro DAG $\cG$ presented in \cite{ParKas17}. Middle: The macro DAG presented in \cite{ParKas17}. Right: The coarse DMG $\co(\cG,\cP)$ with respect to the indicated partition.} \label{fig.PK-example}
\end{figure}

As the above mentioned orientation rule implies the orientation $\Y \to \mathbf{V}$ of $\cG'$ instead of the correct orientation $\Y \leftarrow \mathbf{V}$ in the ground truth group DMG $\co(\cG,\cP)$, the authors of \cite{ParKas17} argue that Meek's orientation rules are no longer valid for group DAGs even if d-faithfulness on the group-level does hold. However, we argue that faithfulness should refer to the cyclic \emph{ground truth graph} $\co(\cG,\cP)$, and the pair $(\co(\cG,\cP),P)$ \emph{does} violate $\sigma$-\emph{faithfulness}: the groups $\W$ and $\Z$ are not $\sigma$-separated in $\co(\cG,\cP)$ but are independent. In fact, by \cite[Corollary 1]{mooij_constraint-based_2020} which does not make assumptions on the dimensionality of the node variables, the PC-algorithm (and thus the Meek rules for DAGs) is sound if the ground truth graph of groups is directed and acyclic, and if this DAG and the joint distribution of the variables are assumed $d$-faithfulness\footnote{Recall that $d$-faithfulness and $\sigma$-faithfulness are equivalent for DAGs.} to each other. To summarize, in the example of \cite{ParKas17}, the Meek rules lead to a wrong orientation, because the graph of groups is incorrectly assumed to be acyclic. 

\section{Grouped Time Series Graphs} \label{sec.grouped_ts_graphs}

When using graphical modes to model causation for time evolving processes, there are several common modeling choices that are discussed in the literature that can all be adapted to the group setting. The arguably most common notion is that of a (stationary) time series DMG (ts-DMG for short) $\cG = (\cV,\cE,\cB)$ in which the processes are unrolled in time and discretized. That is, the processes are modelled as univariate infinite time series $X_i =(X_i(t))_{t \in \mathbb Z}, \ i \in I = \{1,\dots, n\}$ and the nodes of the ts-DMG correspond to the indices $(i,t) \in \cV = I \times \mathbb Z$. As usual, we freely identify an index $(i,t)$ with a variable $X_i(t)$ as long as there is no danger of confusion. In other words, there is a node in the causal graph for every time instance of every process. In addition, directed edges are not allowed to point into the past, i.e. $X_i(s) \to X_j(t)$ implies $s \leq t$. Finally, the stationarity assumption means that the presence of edges only depends on the time lag between nodes and not the actual time instances. More precisely, if there is a directed or bidirected edge $(X_i(s),X_j(t))$, then there is an edge $(X_i(s+u),X_j(t+u))$ of the same type for any $u \in \mathbb Z$. A coarser representation of causal interactions between time series is that of a \emph{time series summary DMG} or \emph{process DMG} $\cG^{\mathrm{sum}} = (\cV^{\mathrm{sum}}, \cE^{\mathrm{sum}},\cB^{\mathrm{sum}})$ in which a node corresponds to a process $X_i$ as a whole, i.e. $\cV^{\mathrm{sum}} = I$. Such graphs thus express whether processes causally influence each other but hold no information on the time lag of the interaction. Depending on the convention, self-edges $(X_i,X_i)$ are allowed or not allowed and we stick to the latter (no self-edges) in this work. While some causal discovery methods \cite{Runge17Science,runge2020discovering,gerhardus_high-recall_2020} aim to infer the time unrolled ts-DMG, others such as Granger causality \cite{granger_investigating_1969} infer the process graph. Clearly, any ts-DMG can be projected to a process DMG by ignoring the time component and adding a (bi)directed edge $(X_i,X_j), \ i \neq j$ if and only if there is a (bi)directed edge $(X_i(s),X_j(t))$ for some $s,t \in \mathbb{Z}$. Note that this is nothing but a special instance of our coarsening operation in the case where micrographs have infinite nodes, see Figure \ref{fig.summary-graph-as-grouped-graph}.

\begin{lemma} \label{lem.summary-graph-coarse}
If $\cG = (\cV,\cE,\cB)$, $\cV = I \times \mathbb Z$ is a time series DMG, then its summary DMG is $\co(\cG,\cP)$ for the partition $\cP = \{ \{X_i\} \times \mathbb{Z} \}_{i \in I} \cong I$.
\end{lemma}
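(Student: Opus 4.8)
The plan is to unfold the definition of the time series summary DMG and the definition of the coarse graph $\co(\cG,\cP)$ (Definition \ref{def.coarsegraph}) and verify that they produce the same vertex set and the same edge sets. First I would check that $\cP = \{\{X_i\}\times\mathbb Z\}_{i\in I}$ is a legitimate partition of $\cV = I\times\mathbb Z$ in the sense used in the paper: it is finite (since $I = \{1,\dots,n\}$ is finite), its blocks are pairwise disjoint with union all of $\cV$, and each block $\{X_i\}\times\mathbb Z$ is countable, which is permitted because $\cV$ itself is countable. The bijection $\cP \cong I$ sending $\{X_i\}\times\mathbb Z$ to $i$ then identifies the node set $\cP$ of $\co(\cG,\cP)$ with $I = \cV^{\mathrm{sum}}$, so both graphs carry the same vertices.

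Next I would compare edges. Fix $i\neq j$ and write $\mathbf{Y}_i = \{X_i\}\times\mathbb Z$, $\mathbf{Y}_j = \{X_j\}\times\mathbb Z$. By Definition \ref{def.coarsegraph}(i), $\co(\cG,\cP)$ contains the directed edge $\mathbf{Y}_i \to \mathbf{Y}_j$ if and only if $\mathbf{Y}_i\neq\mathbf{Y}_j$ (equivalently $i\neq j$) and there is a directed micro-edge $Y\to Z$ in $\cG$ with $Y\in\mathbf{Y}_i$, $Z\in\mathbf{Y}_j$; but such a micro-edge is precisely an edge $X_i(s)\to X_j(t)$ for some $s,t\in\mathbb Z$. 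By the definition of the summary DMG $\cG^{\mathrm{sum}}$, this is exactly the condition for $X_i\to X_j$ to be an edge of $\cG^{\mathrm{sum}}$. The identical argument with Definition \ref{def.coarsegraph}(ii) matches bidirected edges. Since time series DMGs have no undirected edges, there is nothing further to check, so $\co(\cG,\cP)$ and $\cG^{\mathrm{sum}}$ have the same directed and the same bidirected edge sets, hence are equal as mixed graphs under the identification $\cP\cong I$.

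Finally I would reconcile the two ``no self-edge'' conventions: the summary DMG is by stipulation free of self-edges $(X_i,X_i)$, and Definition \ref{def.coarsegraph} likewise forbids self-edges on $\co(\cG,\cP)$ by requiring $\mathbf{Y}_i\neq\mathbf{Y}_j$, so the restriction to $i\neq j$ in the edge comparison is consistent on both sides and no spurious loops appear. I do not expect a genuine obstacle here; the only point needing a little care is this self-edge bookkeeping together with the observation that Definition \ref{def.coarsegraph} remains meaningful verbatim when the blocks of $\cP$ are (countably) infinite, since its edge conditions are existential quantifications over pairs of micro-nodes and require no adaptation of the earlier constructions.
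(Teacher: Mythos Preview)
Your proposal is correct and follows the same approach as the paper, which treats this lemma as an immediate observation from the definitions (the paper does not supply a formal proof, only the remark preceding the lemma that the summary DMG is ``nothing but a special instance of our coarsening operation''). Your careful bookkeeping on self-edges and on the countably infinite blocks simply makes explicit what the paper leaves implicit.
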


\begin{figure}[h!]
\centering
\includegraphics[scale=0.3]{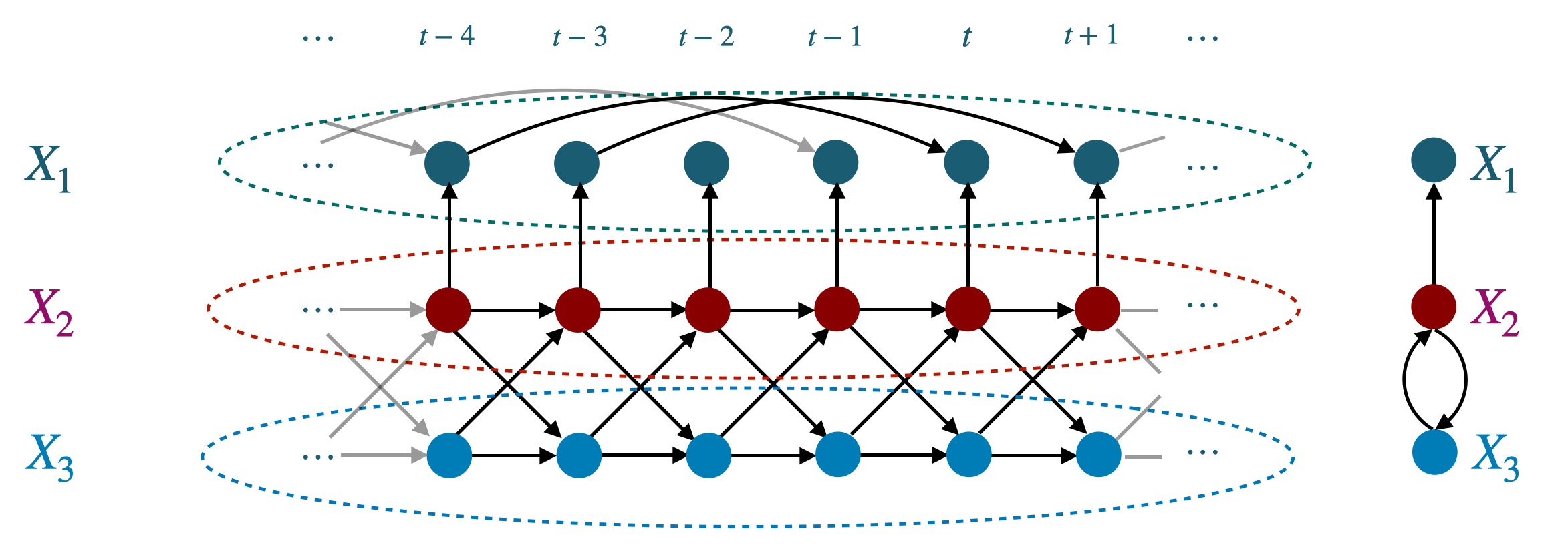}
\caption{The summary graph viewed as a coarsened group DMG of the unrolled time series DMG.} \label{fig.summary-graph-as-grouped-graph}
\end{figure}

Of course, there is no formal reason to disallow more general partitions of $\cV = I \times \mathbb Z$. For instance, when $\mathcal{Q}$ is a partition of the set of processes $\{ X_1,\dots, X_n \} \cong I$, we can define the \emph{grouped ts-DMG} of $\cG$ as $\co(\cG,\mathcal Q')$ where $\mathcal{Q}' = \{ \Y \times \{t \}; \ \Y \in \mathcal{Q}, \ t \in \mathbb{Z} \}$ is the \emph{contemporaneous partition} of $\mathcal{Q}$, see Figure \ref{fig.grouped-ts-DMG}. We can coarsen the grouped ts-DMG further to obtain the \emph{grouped summary DMG} or \emph{grouped process DMG} 
\[ \co(\cG,\mathcal Q')^{\mathrm{sum}} = \co(\cG, \mathcal Q'') \]
 where $\mathcal{Q}'' = \{ \Y \times \mathbb{Z}; \ \Y \in \mathcal{Q}  \} \cong \mathcal{Q}$ is the \emph{full process partition} of $\mathcal{Q}$, see Figure \ref{fig.grouped-summary-DMG}.
 
 \begin{figure}[h!]
 \centering
 \includegraphics[scale=0.35]{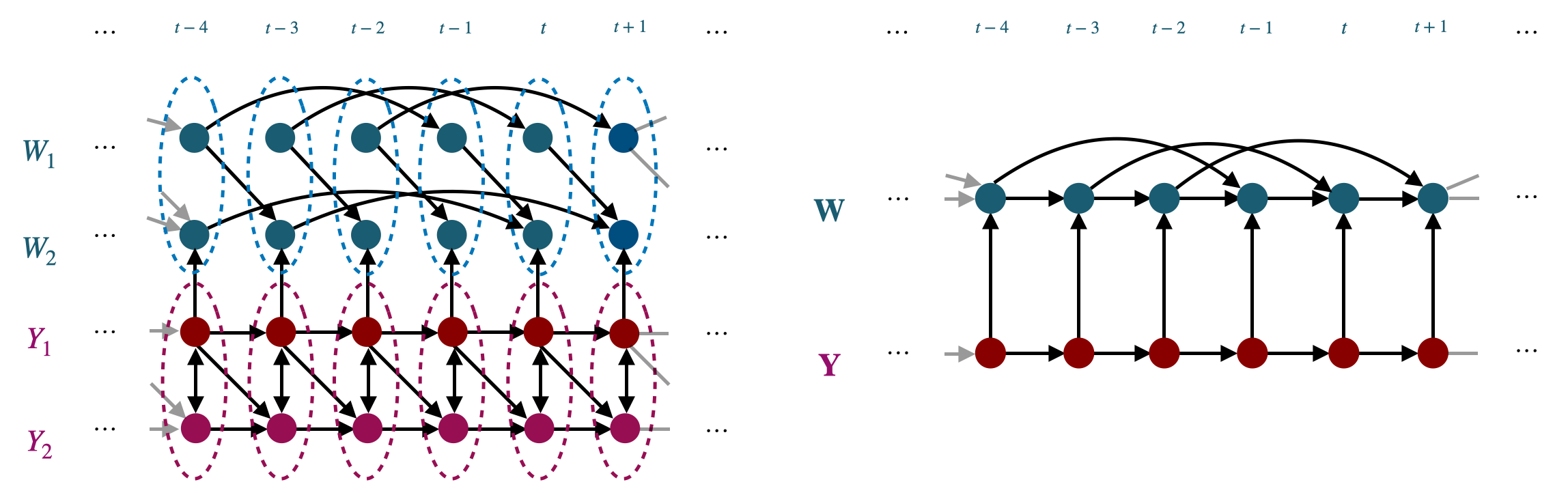}
\caption{Left: A partition $\cP$ of an unrolled time series DMG into contemporaneous groups. Right: The grouped ts-DMG $\co(\cG,\cP)$ with respect to the partition $\cP$.} \label{fig.grouped-ts-DMG}
\end{figure}

 \begin{figure}[h!]
 \centering
 \includegraphics[scale=0.22]{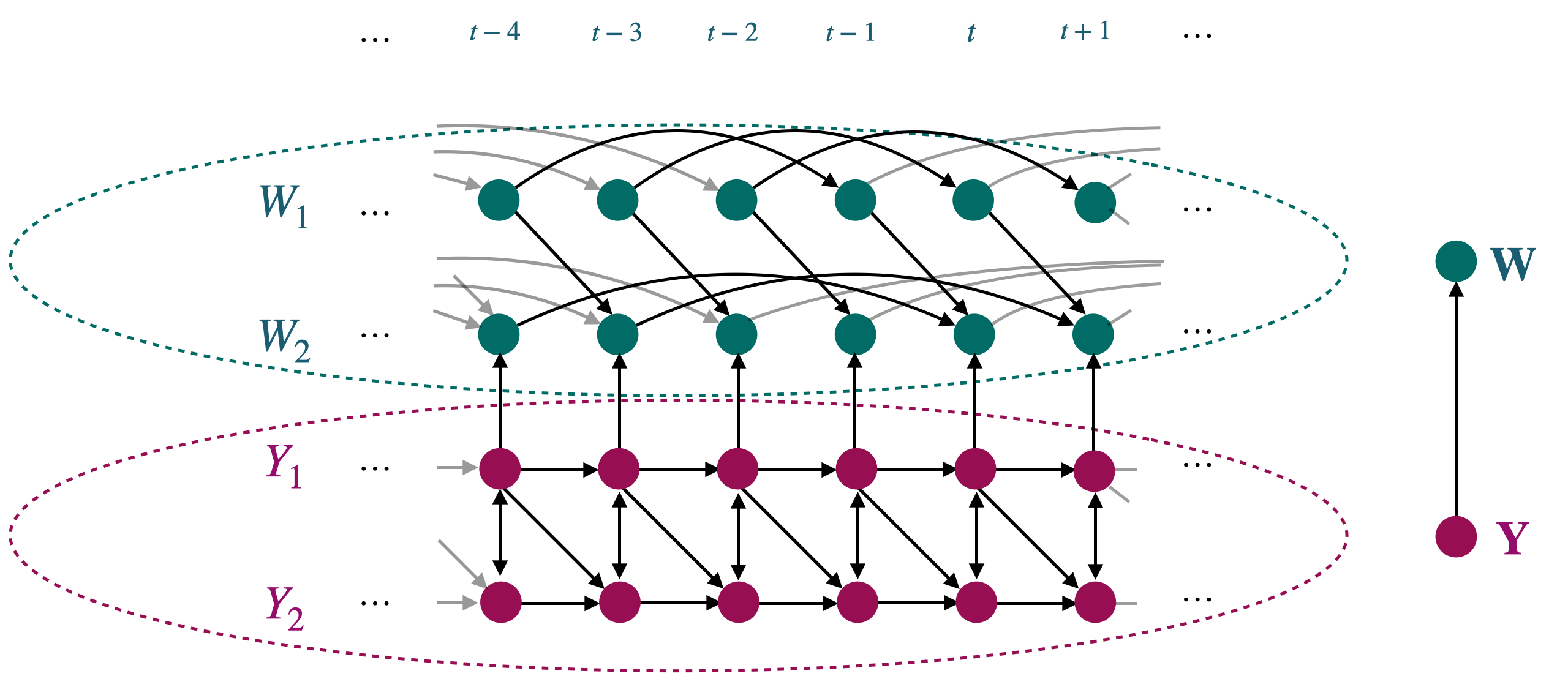}
\caption{Left: A process grouping of a ts DMG. Right: The corresponding grouped summary DMG.} \label{fig.grouped-summary-DMG}
\end{figure}

\subsection{Faithfulness in Grouped Time Series Graphs}

Given that grouped ts-DMGs  and grouped summary DMGs are special cases of coarsened graphs, the criteria of Theorem \ref{lem.connectivity-criterion} and Theorem \ref{lem.connectivity-criterion2} are still sufficient to ensure $\sigma$-faithfulness.

\begin{corollary} \label{cor.faithfulness-ts}
Let $\cG$ be time series DMG and let $\mathcal{Q}$ be a partition of the set of processes $\{ X_1,\dots, X_n \}$ with contemporaneous partition $\mathcal{Q}'$ and full process partition $\mathcal{Q}''$. Moreover, let $P_{\lX}$ be the joint distribution of $\{X_i(t)\}_{i\in I, t \in \mathbb Z}$. If the assumptions of Theorems \ref{lem.connectivity-criterion} or \ref{lem.connectivity-criterion2} are satisfied w.r.t. $\mathcal{Q}'$ ( respectively $\mathcal{Q}''$), then the pair $(\co(\cG, \mathcal Q'), P_{\lX})$ (respectively $(\co(\cG, \mathcal Q''), P_{\lX})$) is $\sigma$-faithful.
\end{corollary}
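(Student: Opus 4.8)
The plan is to treat this as an immediate specialization of Theorems \ref{lem.connectivity-criterion} and \ref{lem.connectivity-criterion2}. By the constructions introduced in this section, the grouped ts-DMG $\co(\cG,\mathcal{Q}')$ is literally the coarse graph of the micro DMG $\cG$ --- the unrolled time series DMG, viewed as a DMG over the countable node set $I \times \mathbb{Z}$ --- with respect to the contemporaneous partition $\mathcal{Q}' = \{\Y \times \{t\} : \Y \in \mathcal{Q},\ t \in \mathbb{Z}\}$; likewise, the grouped summary DMG $\co(\cG,\mathcal{Q}'')$ is the coarse graph of $\cG$ with respect to the full process partition $\mathcal{Q}'' = \{\Y \times \mathbb{Z} : \Y \in \mathcal{Q}\}$. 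Writing $\cP$ for whichever of $\mathcal{Q}'$, $\mathcal{Q}''$ is under consideration, the pair $(\co(\cG,\cP),P_{\lX})$ is then exactly the object whose $\sigma$-faithfulness is concluded by Theorem \ref{lem.connectivity-criterion} (or Theorem \ref{lem.connectivity-criterion2}).

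First I would observe that hypotheses (i)--(iii) of the cited theorem are precisely the standing assumption of the corollary, since the corollary assumes verbatim that ``the assumptions of Theorems \ref{lem.connectivity-criterion} or \ref{lem.connectivity-criterion2} are satisfied'' with respect to $\mathcal{Q}'$ (respectively $\mathcal{Q}''$). So there is nothing further to verify about the assumptions: one simply applies the relevant theorem with $\cG$ as micro DMG and $\cP$ as partition and reads off $\sigma$-faithfulness of $(\co(\cG,\cP),P_{\lX})$, which is the claimed conclusion.

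The one point that warrants care --- and the closest the argument comes to an obstacle --- is that, unlike in the main body, the micro graph here is infinite: $\cG$ has node set $I \times \mathbb{Z}$, $\mathcal{Q}'$ is an infinite partition, and each group of $\mathcal{Q}''$ is infinite, so one cannot simply route through the finite-group results. I would therefore check that nothing in the proofs of Theorems \ref{lem.connectivity-criterion} and \ref{lem.connectivity-criterion2}, nor in the lemmas feeding them (in particular Lemma \ref{lem.d-sep}, Lemma \ref{lem.macro-micro-faithful} and Corollary \ref{cor.macro-micro-faithful2}), uses finiteness of the node set or of the individual groups. This is indeed the case: mixed graphs are permitted countable node sets throughout the paper, walks and paths are finite alternating tuples by definition, $\sigma$-blocking and $\sigma$-separation are statements about such finite walks, and the conditioning sets appearing in the (in)dependence statements are arbitrary subsets of the variables; hence every step goes through unchanged. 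As a sanity check on hypothesis (ii) it is also worth remarking that, since directed edges of a ts-DMG never point into the past, every directed cycle of $\cG$ lies within a single time slice $I \times \{t\}$, so each strongly connected component of $\cG$ is finite, and (ii) just amounts to requiring that the process-grouping $\mathcal{Q}$ not split any contemporaneous strongly connected set of processes. With these remarks in place the corollary follows at once.
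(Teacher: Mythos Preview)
Your proposal is correct and matches the paper's approach: the paper does not give a separate proof of this corollary, treating it as an immediate specialization of Theorems \ref{lem.connectivity-criterion} and \ref{lem.connectivity-criterion2} once grouped ts-DMGs and grouped summary DMGs are recognized as coarse graphs of the unrolled ts-DMG. Your additional remarks on the infinite node set and infinite partition $\mathcal{Q}'$ are a welcome piece of due diligence that the paper leaves implicit.
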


At the same time, if these criteria are not assumed to hold, violations of $\sigma$-faithfulness are still easily constructed even if there are no contemporeaneous edges and all micro-processes are autocorrelated, see e.g. Figure \ref{fig.ts-faithfulness-violation} for a faithfulness violation on the grouped summary DMG. In addition, in micro-level ts-DMGs, cycles can only appear in the contemporaneous part of the graph as directed edges cannot point backwards in time. Cycles will thus only be included in the grouped time series DMG if the time resolution of the analyzed data is not fine enough to resolve all feedback loops. If the time resolution is believed to be fine enough, all cycles are resolved which renders Theorem \ref{lem.connectivity-criterion} useless in the ts-domain.

 \begin{figure}[h!]
 \centering
 \includegraphics[scale=0.22]{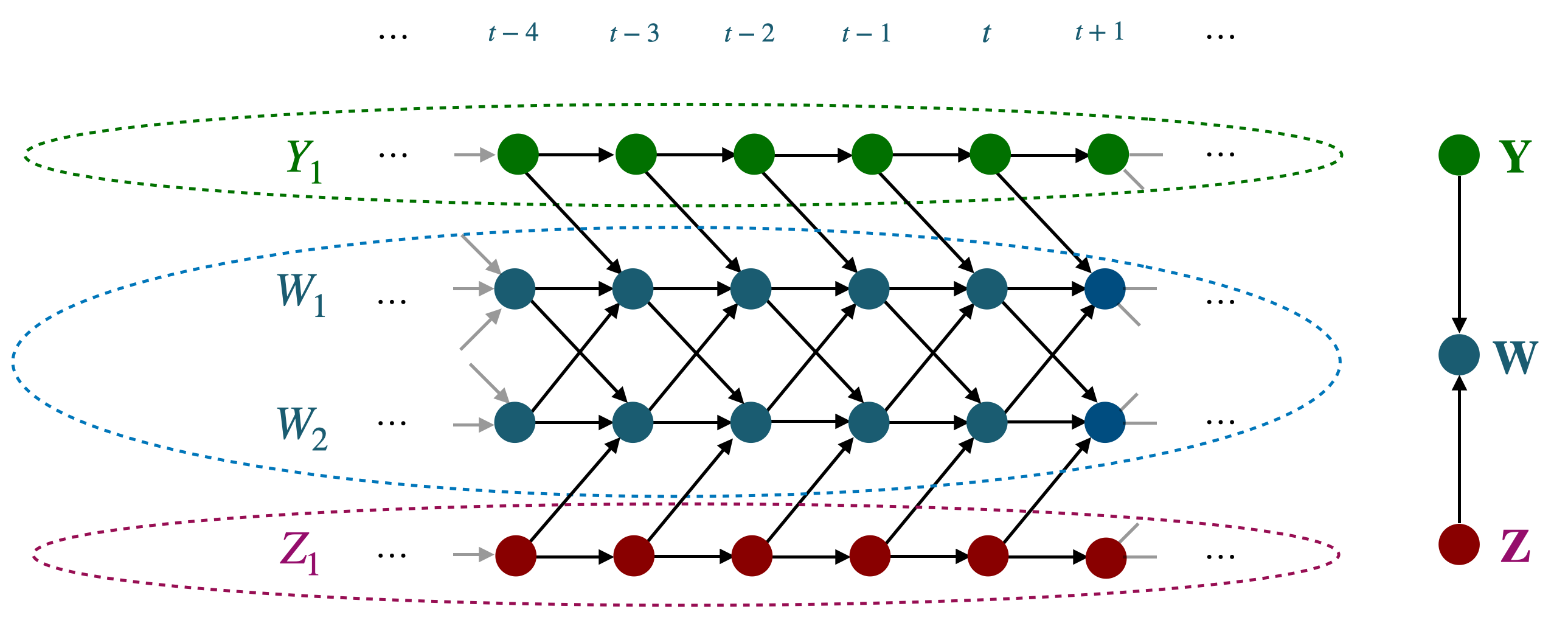}
\caption{A $d$-faithfulness violation on the grouped summary DMG. Conditioning on the process $\W$ blocks all micro paths between the processes $\Y$ and $\Z$.} \label{fig.ts-faithfulness-violation}
\end{figure}

\section{Interpretation of Causation in Group (D)MGs} \label{sec.interpretation}

Many of the examples presented in this work, see e.g. Figures \ref{fig.faithfulness-violations} and \ref{fig.non-local}, show that group DMGs have to be carefully interpreted when associating a causal meaning to paths in the graph; a point that has already been made in \cite{ParKas17}. They formulate a notion of \emph{potential} and \emph{actual causation} in terms of interventions that can be mirrored in our graphical language. 

\begin{definition}[Apparent and true causes] \label{def.actual-potential-causes}
Let $\G$ be a DMG over a set of micro-variables $\lX$ and let $\cP$ be a partition of $\lX$ inducing the group DMG $\co(\cG,\cP)$. We say that $\Y \in \cP$ is an \emph{apparent cause} of $\Z  \in \cP$ if there exists a directed path  $\Y \rightarrow \dots \rightarrow \Z$ on $\co(\cG,\cP)$. $\Y$ is called a \emph{true cause} of $\Z$ if there is a directed path $Y \rightarrow \dots \rightarrow Z$ on $\cG$ for some $Y \in \Y$ and $Z \in \Z$.
\end{definition}

In other words, directed paths on group DMGs may not be regarded as truly causal in general as corresponding micro-paths might be absent. In particular, intervening on a potential cause $\Y$ of $\Z$ might not change the distribution of the effect group $\Z$. We record the following result for good measure.

\begin{lemma} \label{lem.actual-causation}
Let $\G$ be a DMG over a set of micro-variables $\lX$ and let $\cP$ be a partition of $\lX$ inducing the group DMG $\co(\cG,\cP)$. 
\begin{itemize}
\item If $\Y\to \Z$ is a directed edge, then $\Y$ is a true cause of $\Z$.
\item If the condition (ii) and (iii) of Theorem \ref{lem.connectivity-criterion} are satisfied, then any apparent cause of a group $\Z \in \cP$ is a true cause of $\Z$.
\item If the condition (ii) and (iii-a) of Theorem \ref{lem.connectivity-criterion2} are satisfied, then any apparent cause of a group $\Z \in \cP$ is a true cause of $\Z$.
\end{itemize}
\end{lemma}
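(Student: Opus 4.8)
The plan is to derive all three items from one construction: given a directed path between groups in $\co(\cG,\cP)$, build a directed \emph{walk} between micro nodes of the two end groups in $\G$, and then invoke the elementary fact that a directed micro walk from $Y$ to $Z$ always contains a directed micro path from $Y$ to $Z$ (delete any cycles it traverses). The first item is immediate from Definition~\ref{def.coarsegraph}(i): an edge $\Y\to\Z$ of $\co(\cG,\cP)$ is realized by a micro edge $Y\to Z$ with $Y\in\Y$ and $Z\in\Z$, which is already a directed path, so $\Y$ is a true cause of $\Z$.

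For the second item, suppose $\Y$ is an apparent cause of $\Z$, witnessed by a directed path $\W_0\to\W_1\to\dots\to\W_k$ in $\co(\cG,\cP)$ with $\W_0=\Y$ and $\W_k=\Z$; write $\e_i$ for the macro edge $\W_{i-1}\to\W_i$. First I would start the micro walk: by Definition~\ref{def.coarsegraph}(i) there is a micro edge $s\to t_1$ with $s\in\Y$ and $t_1$ in the target boundary $\mathrm{bd}_{\e_1}(\W_1)\subseteq\W_1$. Then I would argue by induction on $i$: assume the walk has reached a micro node $t_i\in\mathrm{bd}_{\e_i}(\W_i)$ for some $1\le i\le k-1$. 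The adjacent pair $(\e_i,\e_{i+1})$ is a right-directed mediator at $\W_i$, so condition~(iii) of Theorem~\ref{lem.connectivity-criterion} supplies a node $t_i'$ in the source boundary $\mathrm{bd}_{\e_{i+1}}(\W_i)$ with $\mathrm{sc}_{\G}(t_i)=\mathrm{sc}_{\G}(t_i')$; if $t_i\ne t_i'$, there is a directed micro path from $t_i$ to $t_i'$ inside the common component $\mathrm{sc}_{\G}(t_i)$, which by condition~(ii) is contained in $\W_i$. Since $t_i'\in\mathrm{bd}_{\e_{i+1}}(\W_i)$, there is a micro edge $t_i'\to t_{i+1}$ with $t_{i+1}\in\mathrm{bd}_{\e_{i+1}}(\W_{i+1})\subseteq\W_{i+1}$, and concatenation extends the walk. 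After $k-1$ such steps the walk reaches $t_k\in\W_k=\Z$, and extracting a path shows that $\Y$ is a true cause of $\Z$.

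For the third item I would run the same induction with condition~(iii-a) of Theorem~\ref{lem.connectivity-criterion2} in place of condition~(iii): applied to the right-directed mediator $(\e_i,\e_{i+1})$ it directly furnishes $t_i'\in\mathrm{bd}_{\e_{i+1}}(\W_i)$ together with a right-directed micro path from $t_i$ to $t_i'$ that does not leave $\W_i$, so condition~(ii) is not even needed for this variant.

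I do not anticipate a genuine obstacle; the only point that needs care is the boundary bookkeeping at each intermediate group $\W_i$. After traversing the micro realization of $\e_i$ one lands in the \emph{target} boundary $\mathrm{bd}_{\e_i}(\W_i)$, whereas the micro edge realizing $\e_{i+1}$ must leave from the \emph{source} boundary $\mathrm{bd}_{\e_{i+1}}(\W_i)$; conditions~(iii) and~(iii-a) are stated precisely so as to bridge these two boundaries inside $\W_i$ (via a path within a strongly connected component, respectively a right-directed path within $\W_i$). One should also remember to reduce the constructed directed micro walk to a directed micro path with the same endpoints before concluding, so that it witnesses true causation in the exact sense of Definition~\ref{def.actual-potential-causes}.
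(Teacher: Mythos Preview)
Your proposal is correct and follows essentially the same approach as the paper: the paper simply refers back to the proofs of Theorems~\ref{lem.connectivity-criterion} and~\ref{lem.connectivity-criterion2}, where exactly the inductive boundary-to-boundary construction you spell out (choosing a micro realization of $\e_1$, bridging $\mathrm{bd}_{\e_i}(\W_i)$ to $\mathrm{bd}_{\e_{i+1}}(\W_i)$ inside $\W_i$ via (iii) resp.\ (iii-a), and iterating) was carried out for general paths and specializes to a right-directed micro path when $\Pi$ is directed. Your added remarks---extracting a path from the resulting walk and noting that (ii) is in fact not used under (iii-a)---are correct refinements that the paper does not make explicit.
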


\begin{proof}
The first claim of the lemma follows directly from the definition of $\co(\cG,\cP)$. The second and third claim follows directly from the proof of Lemmas \ref{lem.connectivity-criterion} and \ref{lem.connectivity-criterion2} where for a given directed path $\Pi = \Y \rightarrow \dots \rightarrow \Z$ on $\co(\cG,\cP)$, we constructed a connecting directed micro-path  $\pi = Y \rightarrow \dots \rightarrow Z$ on $\cG$ for some $Y \in \Y$ and $Z \in \Z$ such that $\co(\pi) = \Pi$.
\end{proof}

\subsection{Causation in Grouped Time Series Graphs}
We now turn to the question whether any apparent cause in a grouped ts-DMGs or a grouped summary DMG is a true cause. For grouped ts-DMGs, the answer is no for the same reason as for usual group DMGs. On the level of the grouped summary graph, however, apparent causation implying true causation may be more realistic, at least if the grouped processes are believed to be \emph{causally mixing}, a notion inspired by the common assumption of mixing in dynamical systems.

\begin{definition} \label{def.causal-mixing}
Consider a ts-DMG $\cG = (\cV,\cE,\cB)$ over micro processes $X_1,\dots, X_n$, $X_i = (X_i(t))_{t\in \mathbb{Z}}$. Let $\mathcal{Q}$ be a partition of $\{ X_1,\dots, X_n \}$ and consider the induced grouped ts-DMG $\co(\cG,\mathcal Q')$ where $\mathcal{Q}' = \{ \Y \times \{t \}; \ \Y \in \mathcal{Q}, \ t \in \mathbb{Z} \}$.
\\
Then, the pair $(\cG,\mathcal{Q})$ is called \emph{causally mixing} if for any $\Y \in \mathcal{Q}$ and any pair of micro-processes $X_i,X_k \in \Y$ the following holds:
\begin{itemize}
\item for any $s\in \mathbb Z$, there exists $t> s$ and a directed path $X_i(s) \to X_{i_1}(s+1) \to X_{i_2}(s+2) \to \dots \to X_{i_m}(t-1) \to X_k(t)$ such that $X_{i_{\alpha}} \in \Y$ for all $\alpha=1,\dots,m$.
\end{itemize}
\end{definition}

Causal mixing means that after a sufficient amount of time has passed, causal information has fully spread throughout any process group. We will see now that causal mixing ensures that, at least at the level of the grouped summary graph, directed causal paths can be understood in the usual sense as any apparent cause is a true cause. However, causal mixing does not ensure $\sigma$-faithfulness on the grouped summary DMG as the example in Figure \ref{fig.ts-faithfulness-violation} demonstrates.

\begin{lemma} \label{lem.grouped-summary-causation}
Consider a stationary ts-DMG $\cG = (\cV,\cE,\cB)$ over micro processes $X_1,\dots, X_n$, $X_i = (X_i(t))_{t\in \mathbb{Z}}$. Let $\mathcal{Q}$ be a partition of $\{ X_1,\dots, X_n \}$ and consider the induced grouped summary DMG $ \tilde{\cG}:= \co(\cG, \mathcal Q'')$ where $\mathcal{Q}'' = \{ \Y \times \mathbb{Z}; \ \Y \in \mathcal{Q}  \}$. If $(\cG,\mathcal{Q})$ is causally mixing, then every apparent cause in $\tilde{\cG}$ is a true cause in $\tilde{\cG}$.
\end{lemma}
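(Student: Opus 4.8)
The idea is to take a directed path in $\tilde{\cG} = \co(\cG,\mathcal Q'')$ witnessing apparent causation, lift each of its edges to a micro-edge of the time series DMG $\cG$, and then splice these micro-edges into a single directed micro-walk, using causal mixing to bridge the gap that occurs inside each process group; a final walk-to-path reduction produces the required directed micro-path, i.e. true causation.

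Concretely, suppose $\mathbf{Y}$ is an apparent cause of $\mathbf{Z}$ in $\tilde{\cG}$, so there is a directed path $\mathbf{Y} = \mathbf{W}_0 \to \mathbf{W}_1 \to \dots \to \mathbf{W}_k = \mathbf{Z}$ in $\tilde{\cG}$, which we may take to be nontrivial (otherwise $\mathbf{Y} = \mathbf{Z}$ and there is nothing to prove). By Definition \ref{def.coarsegraph}, each summary edge $\mathbf{W}_i \to \mathbf{W}_{i+1}$ is witnessed by a directed micro-edge $X_{p_i}(s) \to X_{q_i}(t)$ of $\cG$ with $X_{p_i} \in \mathbf{W}_i$, $X_{q_i} \in \mathbf{W}_{i+1}$ and lag $t - s \ge 0$; by stationarity of $\cG$ this same edge (with the same non-negative lag) is available starting at any time instance. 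I would then build a right-directed micro-walk incrementally. Begin at the node $X_{p_0}(0)$, which lies in the $\mathcal Q''$-block $\mathbf{W}_0 \times \mathbb{Z} = \mathbf{Y} \times \mathbb{Z}$. Having reached a node $X_{p_i}(c)$, append the micro-edge $X_{p_i}(c) \to X_{q_i}(e)$ with $e \ge c$, arriving at $X_{q_i}(e) \in \mathbf{W}_{i+1} \times \mathbb{Z}$. If $X_{q_i} \neq X_{p_{i+1}}$, apply the causal-mixing property (Definition \ref{def.causal-mixing}) to the pair $X_{q_i}, X_{p_{i+1}} \in \mathbf{W}_{i+1}$ at time $e$: this yields a right-directed micro-path $X_{q_i}(e) \to \dots \to X_{p_{i+1}}(c')$ with $c' > e$, all of whose nodes lie in $\mathbf{W}_{i+1} \times \mathbb{Z}$; append it. (If $X_{q_i} = X_{p_{i+1}}$, no bridging is needed.) Iterating for $i = 0, \dots, k-1$, after the last step we are at $X_{q_{k-1}}(\cdot)$ with $X_{q_{k-1}} \in \mathbf{W}_k = \mathbf{Z}$, so the walk ends in $\mathbf{Z} \times \mathbb{Z}$.

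By construction the time indices along the walk are non-decreasing and every edge is right-directed, so this is a legitimate right-directed walk in $\cG$ from a node of $\mathbf{Y} \times \mathbb{Z}$ to a node of $\mathbf{Z} \times \mathbb{Z}$. Since it has finite length, excising any repeated node (and the sub-walk between its two occurrences) repeatedly turns it into a right-directed micro-path with the same endpoints, which exhibits $\mathbf{Y}$ as a true cause of $\mathbf{Z}$ in the sense of Definition \ref{def.actual-potential-causes}.

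\textbf{Main obstacle.} Everything except one step is bookkeeping with time indices and the standard fact that a finite directed walk contains a directed path with the same endpoints. The one genuinely non-trivial step is the intra-group bridging, and this is exactly where causal mixing is indispensable: without it the micro-edge \emph{entering} a group $\mathbf{W}_{i+1}$ and the micro-edge \emph{leaving} it may attach to different, causally disconnected micro-processes of $\mathbf{W}_{i+1}$ — precisely the situation underlying Figure \ref{fig.ts-faithfulness-violation}, where apparent causation is not true causation. A minor point to handle cleanly is the degenerate case $\mathbf{Y} = \mathbf{Z}$ and the (harmless) question of whether Definition \ref{def.causal-mixing} is meant to also cover $X_i = X_k$; neither affects the argument.
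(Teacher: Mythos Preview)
Your argument is correct and follows essentially the same route as the paper's proof: both lift the directed macro-path edge by edge to micro-edges of $\cG$, use stationarity to place each lifted edge at the right time instance, and invoke causal mixing to connect the entry point of each intermediate group to the source of the next outgoing micro-edge. The only cosmetic differences are that you make the walk-to-path reduction explicit (in the paper's construction the time indices are non-decreasing and the $\cP$-segments lie in distinct groups, so the concatenation is already a path) and that you single out the degenerate case $X_{q_i} = X_{p_{i+1}}$, which the paper absorbs into the inductive step.
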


\begin{proof}
For this proof, recall that we can identify elements of $\mathcal{Q}$ and $\mathcal{Q}''$ through the map $\Y \mapsto \Y \times \mathbb{Z}$. Consider two process groups $\Y,\Z \in \mathcal{Q}$. Moreover, let $\Pi \times \mathbb{Z} := (\Pi(1) \times \mathbb Z,\mathbf{e}_1,\dots,\mathbf{e}_{r-1},\Pi(r) \times \mathbb Z)$ be a directed path on from $\Pi(1) \times \mathbb{Z} = \Y \times \mathbb{Z}$ to $\Pi(r) \times \mathbb{Z} = \Z \times \mathbb{Z}$ in the group summary DMG $ \tilde{\cG}= \co(\cG, \mathcal Q'')$. We need to show that there exists a micro-path $\pi$ in $\cG$ from $Y(s)$ to $Z(t)$, $s \leq t$, for some micro-processes $Y \in \Y$ and $Z \in \Z$. We construct $\pi$ inductively as follows. First choose a directed micro-edge $e_1 = Y(s) \to W(s_1) \in \mathrm{mic}(\mathbf{e}_1)$ for some $s \leq s_1$. Then, consider $\Pi(i), \ 1 < i < r$ and assume that a directed micro-path $\pi_i$ that ends in $W(s_{i-1}) \in \mathrm{bd}_{\e_{i-1}}(\Pi(i) \times \mathbb{Z})$ has already been constructed. Choose a micro-process $W' \in \Pi(i)$ such that $W'(t') \in \mathrm{bd}_{\e_{i}}(\Pi(i) \times \mathbb{Z})$ for some $t' \in \mathbb Z$. By causal mixing there is a directed path $\xi_i$ from $W(s_{i-1})$ to $W'(t_i)$ for some $t_i > s_{i-1}$ that does not leave $\Pi(i) \times \mathbb{Z}$. Stationarity of $\cG$ and $W'(t') \in \mathrm{bd}_{\e_{i}}(\Pi(i) \times \mathbb{Z})$ imply that also $W'(t_i) \in \mathrm{bd}_{\e_{i}}(\Pi(i) \times \mathbb{Z})$ so we can find a micro-edge $e_{i+1} \in \mic(\mathbf{e}_{i+1})$ whose source node is $W'(t_i)$. After concatenating $ \pi_{i+1} = \pi_i \circ \xi \circ e_{i+1}$ we have obtained the micro-path $\pi_{i+1}$ to $\Pi(i+1) \times \mathbb{\Z}$ and we continue inductively until we reach $\Z \times \mathbb{Z}$.
\end{proof}

\section{Further discussions and Outlook} \label{sec.pitfalls}

In this section, we will zoom out from the technical results of the previous sections and turn towards a high-level discussion on variable groupings and dimension reduction. 

\subsection{Choosing variable groups} 
In this work, we have operated under the standing assumption that the partition $\cP$ of all micro-variables into variable groups is fixed. We have then studied the transferal of causal discovery assumptions from the micro- to the group level given this fixed partition $\cP$. While in many problems, practitioners may have clear ideas on which micro-variables should be grouped together or not, in others there might be more than one plausible choice of partition. When the goal is to make this choice in such a way that faithfulness is a realistic assumption on the group level, Theorems \ref{lem.connectivity-criterion} and \ref{lem.connectivity-criterion2} at least provide a heuristic: there should be sufficient causal interactions internal to the variable groups. In particular, grouping together micro-variables that seem to be unrelated causally, appears to be problematic. This seems to be in line with our intuition. After all, why would one group together variables that seem unrelated in the first place?
Beyond these heuristic considerations, learning pairs $(\cP,\cG(\cP))$ of a partition $\cP$ and a graph $\cG(\cP)$ over its constituents from data under appropriate optimality constraints may be an interesting, although challenging problem for future research.

\subsection{Dimension reduction and causal discovery}

As alluded to in the introduction, in observation-based analyses of causal interactions, the common alternative to working with variable groups in their entirety, is to reduce them to a single univariate variable, or, if they evolve dynamically, to a single index time series. While some form of dimensionality reduction is unavoidable in high-dimensional settings, the goal of this paragraph is to point out the pitfalls of applying a causal discovery method to dimensionally-reduced proxies, at least if dimension reduction is applied naively. In the subsequent paragraph, we will carry out a similar analysis for a second naive approach, namely, using all available micro-variables as the input of a constraint-based causal discovery method. We contrast this to constraint-based group-level discovery, that is the application of a constraint-based method such as the PC algorithm to groups of random variables in which only multivariate conditional independence test between groups are employed as a whole.

\paragraph*{Applying causal discovery to dimensionally reduced variables}

The most common dimension reduction approach to causal discovery on variable groups $\lX_1,\dots,\lX_r$ proceeds as follows. \\

\begin{enumerate}
    \item Reduce $\lX_i$ to a univariate random variable $X_i$, for instance by setting $X_i = m(\lX_i)$ to be the group mean or the first principal component in a PCA on $\lX_i$.
    \item Apply a causal discovery algorithm to $X_1,\dots,X_r$.
\end{enumerate}
\vspace*{0.2cm}
This procedure is appealing to domain researchers for several reasons. First of all, dimension reduction techniques can be carried out quickly, they counter the \emph{curse of dimensionality}, and the resulting quantities can often be interpreted easily. Moreover, as per the law of large numbers, averaging can help to reduce observational noise, at least if noise terms of different members of a given variable group are believed to be weakly correlated. For instance if every member $X^j_i$ of group $\lX_i$ is believed to be produced by a common driver and purely observational noise, i.e. $X^j_i = \hat{X}_i+ \eta_{i,j}$ and the noise terms $\eta_{i,j}$ have mean zero and are weakly  or un-correlated across the $j$ index, then in the large group limit, the group mean $\hat{X}_i = m(\lX_i)$ will recover $\hat{X}_i$. Thus, if the causal dynamics are modelled by structural equations on the $\hat{X}_i$ such as $\hat{X}_i := f_i(\mathrm{pa}(\hat{X}_i), \eta_{\hat{X}_i})$ with $\mathrm{pa}(\hat{X}_i) \subset \{ \hat{X}_1,\dots,\hat{X}_r  \}$ and the groups arise as $\lX_i = (\hat{X}_i,\dots,\hat{X}_i)^T + \boldsymbol{\eta}_{i}$ with zero-mean noise vectors that are mutually independent across the $i$ index and whose components are weakly  or un-correlated, then the group mean will be an appropriate choice of aggregation technique to recover the causal dynamics. 

\noindent On the other hand, if different parts of a given cause group $\Y$ have opposing causal effects on a target group $\Z$ that roughly cancel each other, the effect of the group mean of $\Y$ on the group mean of $\Z$ may be zero, and neither the causal effect nor the dependence $\Y \centernot\ind \Z$ can be recovered from the averaged data.  An often invoked real-world example of this are the opposite-sign effects of two different types of blood cholesterol, low-density lipoprotein (LDL) and high-density lipoprotein (HDL), on heart disease, see \cite{Rubensteinetal17}. Consequently, research on the effect of total blood cholesterol (LDL+HDL) on heart disease has come to contradictory conclusions.

\noindent In a similar vein, conditioning on the mean value $m(\W)$ of a variable group $\W$ may not suffice to recover a conditional independence $\Y \ind \Z | \W$. For instance, consider a structural causal model
\begin{align*}
W_1 \ &= \ \eta_{W_1} \\
W_2 \ &= \ \eta_{W_2} \\
Y \ &:= \ W_1+2W_2+ \eta_{Y} \\
Z \ &:= \ W_1+2W_2+ \eta_{Z},
\end{align*}
with variable partition $\Y = \{Y \}, \Z = \{Z \}, \W = \{W_1,W_2 \}$, and with independent noise terms  $\eta_{W_1},\eta_{W_2}, \eta_{Y},\eta_{Z}$. Then we have $\Y \ind \Z | \W$ but $m(\Y) \centernot\ind m(\Z) \ | \ m(\W)$, where again $m(\cdot)$ denotes the group mean. The latter relation becomes apparent when rewriting $Y = 2m(\W)  + W_2 + \eta_Y$ and $Z = 2m(\W)  + W_2 + \eta_Z$, so that after conditioning on $m(\W)$, $Y$ and $Z$ still share the common random component $W_2$ which is not fully determined by $m(\W)$. Thus, causal discovery approaches that invoke conditional independence tests on aggregated quantities may come to wrong conclusions. However, this example also illustrates that the primary reason for such faulty inferences is that dimension reduction and inference were conducted \emph{independently of each other}. In fact, in the example above, there is an aggregation of $\W$ that does preserve the independence  $\Y \ind \Z | \W$: if $m'(\W) = W_1+2W_2$, then $\Y \ind \Z \ | \ m'(\W)$. Research on how variable aggregation and inference can combined in such a way that they inform each other, is still relatively scarce, and we refer to \cite{ChaEbPer16, ChaEbPer17, Rubensteinetal17} for interesting ideas and further discussions.



\paragraph*{Micro-level causal discovery}
A second straightforward approach to causal discovery on variable groups $\lX_1,\dots,\lX_r$ roughly works as follows: 
\begin{enumerate}
    \item apply a given causal discovery method to the totality of all micro-variables. This will output a graph over all micro-variables containing edges of different types. 
    \item[2a.] Then coarsen this micro-graph as in Definition \ref{def.graphs}, that is draw an edge of a specific type between groups $\Y$ and $\Z$ if there exists an edge of this type between two members $Y \in \Y$ and $Z \in \Z$ of these groups.
    \item[2b.]  Alternatively, if only one edge is to be allowed between groups, decide on the type of this edge by a majority rule, e.g. draw a directed edge $\Y \to \Z$ if the majority of edges between members  $Y \in \Y$ and $Z \in \Z$ are directed as $Y \to Z$.
\end{enumerate} 
As constraint-based causal discovery algorithms such as PC typically come with soundness and completeness guarantees under method-specific assumptions \cite{PearlCausality, KaBue07}, in theory, the micro-graph (and therefore the macro-graph derived from it) can be inferred to an optimal degree, that is up to a certain type of equivalence. Still, in practice, there are some obvious drawbacks of such an approach. First, as the number of micro-variables within groups can be very high, the computational effort can be massive while much of the inferred micro-level information, namely all interaction \emph{internal to variable groups} is of little relevance to the actual task of inferring the interactions \emph{between variable groups}. This issue is particularly problematic if the variable groups happen to be very dense, i.e. if there are many micro-edges within groups. This is because this case falls firmly into the computational worst case scenario for constraint-based causal inference in which computing time grows exponentially with the number of variables \cite{KaBue07}. At the same time, one can argue that typically variable groups are chosen the way they are exactly because their members are highly correlated or have strong causal interactions. From a statistical perspective running many conditional independence tests on the micro-level that are irrelevant to the actual inference task, tends to be detrimental to the method's success, see \cite{wahl_vector_2022} for some toy experiments with two variable groups and continuous data. In addition, the well-known finite sample guarantees of Kalisch and B\"{u}hlmann \cite{KaBue07} for the PC algorithm again rely on sparsity conditions that may not be met on the micro-graph if the variable groups are very dense while they might be met on the coarse group DMG.

On the other hand, full micro-variable causal discovery can sometimes orient edges between groups that a group-level approach can not orient, see Figure \ref{fig.micro-macro-inference}. This can be both a blessing and a curse: while additional orientations are a plus whenever they are correct, a wrong statistical test result of an independence test that only involves micro-variables \emph{within the same group} can lead to a wrongly oriented edge \emph{between} variable groups, see Figure \ref{fig.micro-macro-inference}. Therefore, group-level causal discovery can be considered more conservative than full micro-level causal discovery in the sense that it might provide fewer orientations while being more robust to testing errors. Lastly, if the causal discovery algorithm at hand assumes the absence of hidden confounders, it will suffer if hidden confounding is actually present in the data. Hence, if hidden confounders only affect micro-variables within the same group, then micro-level causal discovery will be challenged while group-level causal discovery will only be affected by confounders between different groups, see again the discussion in Section \ref{sec.Markov}. Nevertheless, in the case of discrete data, conditional independence tests are particularly challenged by large conditioning sets as every state of the conditioning variables has to be considered separately. In this case, the empirical experiments conducted in \cite{ParKas17} suggest that the micro-level causal discovery approach which employs more tests but has smaller conditioning sets than the group-level approach outperforms the latter.

\begin{figure}[h!]
\centering
\includegraphics[scale=0.35]{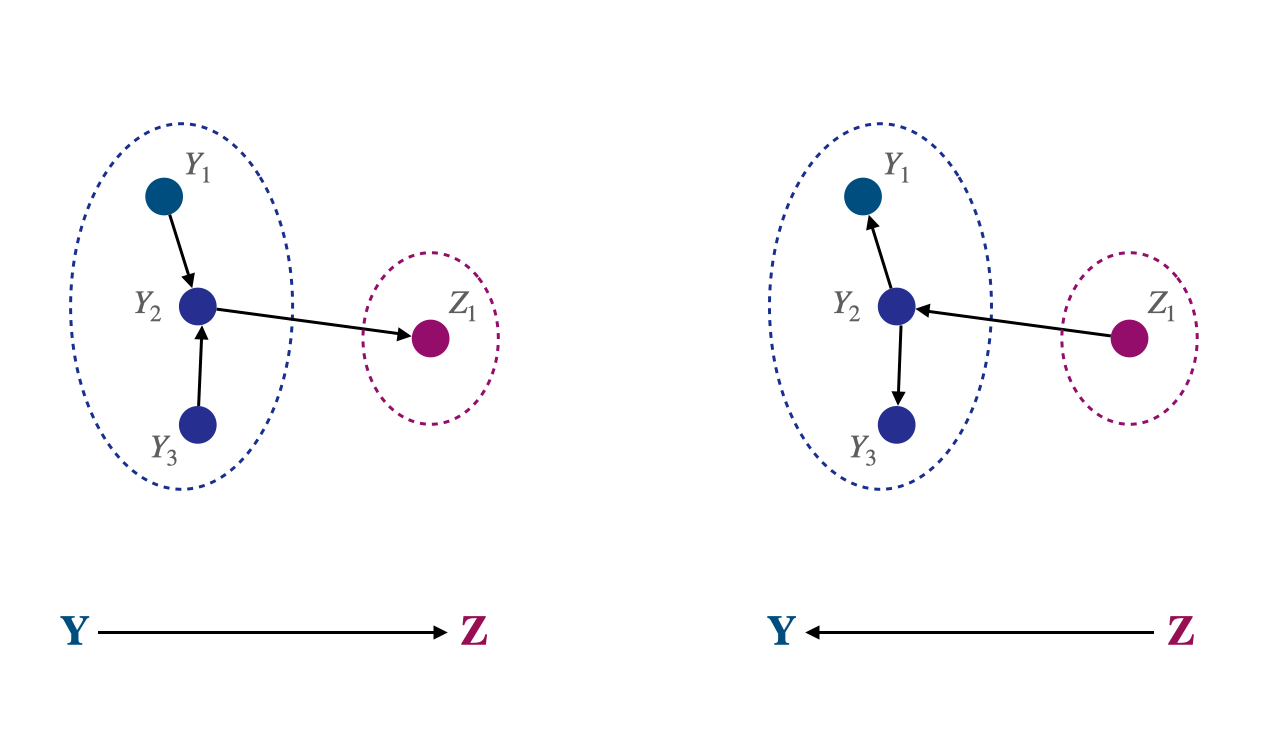}
\caption{Left: Running the PC-algorithm with perfect independence tests on the micro-variables will infer the full micro-structure and will therefore also be able to orient the group-level edge $\Y \to \Z$. Group-level PC will not be able to infer this orientation. Right: If, due to a wrong statistical test result or due to a faithfulness violation, the micro-level PC-algorithm mistakenly judges $Y_1 \ind Y_3$, it has found a separating set for $Y_1$ and $Y_3$ that does not contain $Y_2$ and will thus orient the unshielded triple $Y_1 - Y_2 - Y_3$ as a collider $Y_1 \rightarrow Y_2 \leftarrow Y_3$. If the remaining tests return the true (in)dependecies $Y_1 \ind Z | Y_2, \ Y_3 \ind Z | Y_2$, then PC's orientation rules will imply the edge orientation $Y_2 \to Z$. Hence, the PC-algorithm will again infer the micro-structure on the left and the wrong group-level orientation $\Y \to \Z$.} Note that the wrong test only involves micro-variables that belong to group $\Y$. Group-level PC will never run this wrong test and will not orient the edge  $\Y - \Z$, neither correctly nor wrongly. \label{fig.micro-macro-inference}
\end{figure}

We summarize strengths and pitfalls of dimension reduction causal discovery, micro-level causal discovery as well as group-level causal discovery in Table \ref{tab.strengths}.

\begin{table}[h!]
\centering
    \begin{tabularx}{\textwidth}{lcccrX}
        \toprule
     &\multicolumn{1}{c}{Dimension reduction + CD} & \multicolumn{1}{c}{Micro-level CD} & \multicolumn{1}{c}{Group-level CD} \\
        \midrule
      \textbf{Strengths} & Computationally most efficient & Good for small groups; & Fewer CI tests than micro-level CD;\\
        &approach; & empirically superior  & robust to within-group confounding  \\
      &noise-removal. & to group-level CD on discrete data. &  and other violations. \\
      \midrule
        \textbf{Weaknesses}  & May change conditional & Computationally inefficient;& Assumptions and interpretation of \\ 
        & independencies and & vulnerable to & output must be evaluated carefully; \\
        &  causal conclusions & within-group assumption & multivariate CI testing  \\
        & fundamentally. & violations. &  less developed; \\
        & \ & \ & computationally less efficient \\
        & \ & \ & than dimension reduction + CD. \\
        \bottomrule
    \end{tabularx}
    \caption{Strengths and weaknesses of the three fundamental approaches to causal discovery for variable groups: causal discovery after dimension reduction, micro-level causal discovery, and group-level causal discovery. Approaches that integrate dimension reduction and inference, while perhaps retaining reduced variable groups of smaller size might be a fruitful middle ground.}
    \label{tab.strengths}
\end{table}


\section{Summary}

In this work, we have provided a thorough discussion of assumptions for causal discovery on groups of random variables. In particular we have shown that causal faithfulness is easily violated in generic settings so that faithfulness-based causal discovery methods need to be applied with care. On the other hand we have presented two criteria (Theorem \ref{lem.connectivity-criterion} and \ref{lem.connectivity-criterion2}) on the internal connectivity of variable groups that do guarantee $\sigma$-faithfulness. It will be important to develop and evaluate more elaborate group-level causal discovery techniques and to compare them to the baseline methods presented in Section \ref{sec.pitfalls} empirically, in particular for continuous data. On the theoretical side, it would be worthwhile to study the compatibility of statistical dimension reduction and causal modelling in greater detail, for instance following the ideas laid out in \cite{ChaEbPer16, ChaEbPer17, Rubensteinetal17}.




\paragraph*{Acknowledgements} 
J.W., U.N., and J.R. received funding from the European Research Council (ERC) Starting Grant CausalEarth under the European Union’s Horizon 2020 research and innovation program (Grant Agreement No.\ 948112). The authors thank Sofia Faltenbacher for designing the layout of many of the figures in this work.


\bibliographystyle{vancouver}
\bibliography{References.bib}

\newpage

\begin{appendices}
\section{Group DMGs from group-valued SCMs} \label{sec.group_SCM}

In this appendix, we will shortly discuss another way of obtaining a group DMG that is distinct from coarsening a graph of micro-variables, namely by defining a model directly through structural equations. For a discussion of counterfactual distributions in vector-valued SCMs, see \cite[Supplement, Theorem 7]{anand_causal_2023}.

\begin{definition}[vector-valued SCMs]
A \emph{vector-valued structural causal model (vSCM)} $\mathfrak{M} = (\mathfrak{S},P_{\mathbf{E}})$ over a partition $ \cP $ of a set of random variables $\lX$ into random vectors $\lX^1,\dots,\lX^r$ is a collection of structural assigments
\[
  \lX_i :=  \mathbf{f}^i(\mathrm{pa}(\lX^i),\mathbf{E}^i) 
\]
with $\mathrm{pa}(\lX^i) \subset \{\lX^1,\dots,\lX^r\} \backslash \{ \lX^i \}$ and multivariate noise vectors $\mathbf{E}^1,\dots, \mathbf{E}^r$ with $\dim(\lX^i) = \dim(\mathbf{E}^i)$ that have joint distribution $P_{\mathbf{E}}$.
The causal graph $\cG(\mathfrak{M})$ of $\mathfrak{M}$ is the DMG with nodes $\lX^1,\dots,\lX^r$ where a directed edge $\lX^j \to \lX^i$ is drawn if $\lX^j \in \mathrm{pa}(\lX^i)$ and a bidirected edge $\lX^i \leftrightarrow\lX^j$ is drawn if $\mathbf E^i \centernot \ind \mathbf E^j$.\footnote{If the causal graph $\cG(\mathfrak{M})$  has cycles, it is not always true that random vectors $\lX^1,\dots,\lX^r,\mathbf{E}^1,\dots, \mathbf{E}^r$ obeying the SCM actually exist. Solvability of cyclic SCMs is thoroughly discussed in \cite{BonFOrPetMoo21} for univariate node variables and the results straightforwardly transfer to the multivariate setting. The same is true for their analysis on Markov properties if mutual independence is used as an independence model.}
\end{definition}

While group DMGs derived by coarsening micro-variable graphs assume a causal structure on the level of the micro-variables and is then "forgotten" after coarsening, in a vector-valued SCM any causal meaning in the form of a graph is only defined on the group-level. The internal relationships among the entries of a vector $\lX^i$ that are not due to external influences are modelled only by the distribution $P_{\mathbf{E}^i}$ and are thus of a probabilistic nature. This seems reasonable for many practical applications where the micro-variables may not be considered causal entities (for instance imagine $\lX^i$ to be a field of surface tempature measurements in some spatial region). On the other hand, vector-valued SCMs make it hard to \emph{derive} faithfulness results from properties of the micro-variables, as no notion of faithfulness is purely distributional. Instead faithfulness can only be postulated as an assumption on the group-level directly.

\section{Proofs}

\subsection*{Proofs of the results in Section \ref{sec.groupDMGs}} \label{app.proofs_sec_groupDMGs}

\begin{proof}[Proof of Lemma \ref{lem.acyclicparts}] \ 
\begin{itemize}
\item[(i)] Assume first that $\mathrm{co}(\G,\cP)$ is acyclic and let $\cW$ be a strongly connected component. If there were $W_1, W_2 \in \cW $ that belonged to different groups of the partition $\cP$, say $W_1 \in \Y$ and $W_2 \in \Z$, then on $\G$ we could find directed paths $\pi_1$ from $W_1$ to $W_2$ and $\pi_2$ from $W_2$ to $W_1$. Then the induced coarse path $\co(\pi_1)$ would constitute a directed path from $\Y$ to $\Z$ and the induced coarse path $\co(\pi_2)$ would constitute a directed path from $\Z$ to $\Y$. Concatening both paths, we would obtain a cycle which contradicts our assumption. 
\item[(ii)] The converse is already wrong for coarsenings of micro DAGs in which the strongly connected components correspond to the nodes of the graph, see e.g. Figure \ref{fig.cycles}.
\item[(iii)] Let $\cP$ be the partition of $\G$ into strongly connected component and let $\tilde{\pi}$ be a directed path from $\Y$ to $\Z$. Then, we argue first that for any two node $Y \in \Y, \ Z \in \Z$, there is a directed micro path $\pi$ from $Y$ to $Z$ on $\cG$. Indeed, if $\tilde{\pi}$ just consists of an edge $\Y \to \Z$, then there must be $Y' \in \Y$ and $Z' \in \Z$ that are connected by a micro edge $Y' \to Z'$. By the definition of strongly connected components, there must also be directed paths from $Y$ to $Y'$ and from $Z'$ to $Z$, so we have found the desired micro path. If $\tilde{\pi}$ has more than one edge, we can proceed similarly by noting that for any motive $\W \rightarrow \Y \rightarrow \Z$ there are micro edges $W \to Y$, $Y' \to Z$ with $W \in \W, \ Y,Y' \in \Y, \ Z \in \Z$ and either $Y=Y'$ or there is a directed path from $Y$ to $Y'$ as $\Y$ is strongly connected. Concatenating all edges and paths found this way, we obtain the desired micro path. 
Finally, we conclude by observing that any cycle on $\co(\G,\cP)$ must thus induce a cycle on the micro MG $\cG$. Indeed, a cycle on $\co(\G,\cP)$ could be decomposed into directed paths $\tilde{\pi}_1$ and $\tilde{\pi}_2$ one from say $\Y$ to $\Z$ and one from $\Z$ to $\Y$ to which we then apply the argument above.
\end{itemize}
\end{proof}

\begin{proof}[Proof of Theorem \ref{thm.coarsen_acyclify}]
Write $\cG = (\cV,\cE,\cB, \cU)$ and  $\cG^{\mathrm{acy}} = (\hat{\cV},\hat{\cE},\hat{\cB},\hat{\cU})$ and $\cP = \{\lX^1,\dots,\lX^r \}$. We have to show that $\mathrm{co}(\cG^{\mathrm{acy}},\cP)$ and $\mathrm{co}(\cG,\cP)$ have the same directed, bidirected and undirected edges. \\
First, let $\lX^i \to \lX^j$ be a directed edge on $\mathrm{co}(\cG,\cP)$ so that there must exist a directed edge $A \to B \in \cE$ with $A \in \lX^i$ and $B \in \lX^j$. Hence $A \in \mathrm{pa}_{\G}(B) \subset \mathrm{pa}_{\G}(\mathrm{sc}_{\G}(B))$ and we also see that $A \notin \mathrm{sc}_{\G}(B)$ by part (a) of Lemma \ref{lem.acyclicparts} as $\cP$ was assumed acyclic w.r.t. $\cG$. So by definition of acyclification, we get $A \to B \in \hat{\cE}$ and thus the edge $\lX^i \to \lX^j$ is present on $\mathrm{co}(\cG^{\mathrm{acy}},\cP)$. 
On the other hand, if  $\lX^i \to \lX^j$ is a directed edge on $\mathrm{co}(\cG^{\mathrm{acy}},\cP)$, then there must be an edge $A \to B \in \hat{\cE}$ with $A \in \lX^i$ and $B \in \lX^j$. Therefore $A \in \mathrm{pa}_{\G}(\mathrm{sc}_{\G}(B))\backslash \mathrm{sc}_{\G}(B) $, so there must be a node $C \in \mathrm{sc}_{\G}(B)$ and an edge $A \to C \in \cE$. By Lemma \ref{lem.acyclicparts} (a), we obtain $\mathrm{sc}_{\G}(B) \subset \lX^j$ so that there must be an edge $\lX^i \to \lX^j$ on $\mathrm{co}(\cG,\cP)$. \\
We now turn to bidirected edges. If $\lX^i \leftrightarrow \lX^j$ is a bidirected edge on $\mathrm{co}(\cG,\cP)$, then there exists a bidirected edge $A \leftrightarrow B \in \cB$ with $A \in \lX^i$ and $B \in \lX^j$. By definition of acyclification, we also have $A \leftrightarrow B \in \hat{\cB}$, so $\lX^i \leftrightarrow \lX^j$ is a bidirected edge on $\mathrm{co}(\cG^{\mathrm{acy}},\cP)$ as well. Finally assume that $\lX^i \leftrightarrow \lX^j$ is a bidirected edge on $\mathrm{co}(\cG^{\mathrm{acy}},\cP)$, so that there exists a bidirected edge $A \leftrightarrow B \in \hat{\cB}$ with $A \in \lX^i$ and $B \in \lX^j$. By acyclicity and Lemma \ref{lem.acyclicparts} $A$ and $B$ must lie in different strongly connected components of $\cG$. Therefore there must be $A'\in \mathrm{sc}_{\cG}(A) \subset \lX^i$ and $B'\in \mathrm{sc}_{\cG}(B) \subset \lX^j$ connected by a bidirected edge $A' \leftrightarrow B' \in \cB$. We conclude that $\lX^i \leftrightarrow \lX^j$ must be a bidirected edge on $\mathrm{co}(\cG,\cP)$.
Finally, we discuss undirected edges. Thus assume first that $\lX^i - \lX^j$ is an undirected edge on $\mathrm{co}(\cG,\cP)$, so that there must exist an undirected edge $A - B \in \cU$ with $A \in \lX^i$ and $B \in \lX^j$. Since $\cP$ was assumed acyclic,  we see that $A \notin \mathrm{sc}_{\G}(B)$ by part (a) of Lemma \ref{lem.acyclicparts}, so that there must be an undirected edge $A - B \in \hat{\cU}$. Thus $\lX^i - \lX^j$ must be an undirected edge of $\mathrm{co}(\cG^{\mathrm{acy}},\cP)$. Conversely if $\lX^i - \lX^j$ is assumed to be an undirected edge of $\mathrm{co}(\cG^{\mathrm{acy}},\cP)$, there must be an undirected edge $A - B \in \hat{\cU}$ with $A \in \lX^i, \ B \in \lX^j$. By definition of acyclification, we must have $A-B \in \cU$ and thus $\lX^i - \lX^j$ must be an undirected edge of $\mathrm{co}(\cG,\cP)$. This finishes the proof.

\end{proof}

\begin{proof}[Proof of Lemma \ref{lem.d-sep}]
We will only discuss the case where $\co(\pi)$ (and thus $\pi$) is a non-trivial walk.
\begin{itemize}
\item[(i)] If $\co(\pi)$ is $\sigma$-blocked by $\cS$, then there are three options.
\begin{itemize}
\item[(1)]   If the first (or last) node of $\co(\pi)$ is in $\cS$, then $\cT$ must contain the first (or last) node of $\pi$ and thus $\sigma$-blocks $\pi$.
\item[(2)]  There is a collider $\W$ on $\co(\pi)$ with $\cS \cap \des(\W) = \emptyset$. We argue that in this case $\W$ must contain a collider $W$ of the micro walk $\pi$. Indeed, if $\pi$ passes through only one node of $\W$, this follows directly. If $\pi$ passes through more than one node, $\pi$ must enter $\W$ at a micro node $\pi(i)$ with an edge pointing to $\pi(i)$ (either bidirected or directed) and leave $\W$ at a micro node $\pi(j)$, $j > i$, again with an edge pointing to $\pi(j)$ (either bidirected or directed). Thus at some point in the path segment $\pi(i,j)$ the directionality of the arrows must oppose each other, that is to say that path segment must contain a collider, say $\pi(l)$. Any descendant $D$ of $\pi(l)$ must lie in $\W$ itself or in a proper descendant of $\W$, say $D \in \mathbf{D}$ as the directed path $\pi(l) \to \dots \to D$ induces a coarse path $\W \to \dots \to \mathbf{D}$. As both $\W$ and its proper descendants do not lie in $\cS$, $\des(\pi(l)) \cap \cT = \emptyset$.
\item[(3)] There is a non-collider $\co(\pi)(k)$ on $\co(\pi)$ that is contained in $\cS$ and an edge $\co(\pi)(k) \to \co(\pi)(l), l \in \{k-1,k+1 \}$ with $\scc_{\co(\G,\cP)}(\co(\pi)(k)) \neq \scc_{\co(\G,\cP)}(\co(\pi)(l)) $. Therefore, on $\pi$, there must be an edge $\pi(i) \to \pi(j), \ j \in \{i-1,i+1 \}$ with $\pi(i) \in \co(\pi)(k)$ and $\pi(j) \in \co(\pi)(l)$. Since $\pi(i)$ has an outgoing edge it is a non-collider and by the contraposition of Lemma \ref{lem.sc-inherit} $\scc_{\G}(\pi(i)) \neq \scc_{\G}(\pi(j))$. Therefore $\cT$, which contains $\pi(i)$ $\sigma$-blocks $\pi$.

\end{itemize}
\item[(ii)] We use the following counterexample to show that the converse of (i) is not true. Let $\G$ be given by $W \rightarrow Y_1 \rightarrow Y_2 \leftarrow Y_3 \rightarrow Z$ partitioned as $\W = \{W\}, \Y = \{ Y_1, Y_2, Y_3 \}, \Z = \{ Z \} $. Then the path from $W$ to $Z$ is closed since it contains the collider $Y_2$ while the coarse path $\W \to \Y \to \Z$ is open.
\item[(iii)] If $\pi$ is an arbitrary walk between $Y \in \Y$ and $Z \in \Z$, then $\co(\pi)$ is a walk between $\Y$ and $\Z$ and thus $\sigma$-blocked by $\cS$. Hence by assertion (i), $\pi$ is $\sigma$-blocked by $\cT$.
\item[(iv)]  Let $\G$ be given by $W \rightarrow Y_1 - Y_2 - Y_3 \leftarrow Z$ partitioned as $\W = \{W\}, \Y = \{ Y_1, Y_2, Y_3 \}, \Z = \{ Z \} $. Then the path $\W \to \Y \leftarrow \Z$ is $\sigma$-blocked while the micro path from $W$ to $Z$ is $\sigma$-open as it does not contain any colliders.
\end{itemize}
\end{proof}

\begin{proof}[Proof of Lemma \ref{cor.m-sep}]
For part (i) it suffices to note that the only difference between the two types of separation lies in the their definition for non-colliders, so only part (i)(3) of the proof of Lemma \ref{lem.d-sep} slightly differs. When a coarse walk $\co(\pi)$ has a non-collider, say $\co(\pi)(k)$, then $\pi$ must have a non-collider $\pi(j) \in \co(\pi)(k)$ as well. So if $\co(\pi)(k) \in \cS$, then $\pi(j) \in \cT$ and $\pi$ is $m$-blocked. Part (iii) follows directly from (i) and the counterexamples of parts (ii) and (iv) do not involve cycles and are equally valid for $m$-separation.
\end{proof}

\subsection{Proofs of the results in Section \ref{sec.Markov}} \label{app.proofs_sec_Markov}

\begin{proof}[Proof of Lemma \ref{lem.mutual-indep-identity2}]
We prove Lemma \ref{lem.mutual-indep-identity2} by induction over $n = |\Y \cup \Z|$. For $n=2$, we must have $\Y = \{ Y \}, \Z = \{ Z \}$ and the result follows by choosing the subset $\cM = \emptyset$. Now assume that the result has been shown for some arbitrary but fixed $n\geq 2$ and let $|\Y \cup \Z| = n+1$. W.l.o.g. we can assume that $|\Z| > 1$. Let $Y \in \Y$ and $Z \in \Z$ be arbitrary. Choosing $\cM = \Z\backslash\{ Z \}$, by assumption we have $Y \ind Z | \W, \Z\backslash\{ Z \}$. According to Lemma \ref{lem.mutual-indep-identity}, we are done if we can show that also $Y \ind \Z\backslash\{ Z \} | \W$. To prove this, we observe first that $\Y \cup \Z\backslash\{ Z \} = n$. Moreover, for any $Y \in \Y$, $Z' \in \Z\backslash\{ Z \}$ and $\mathcal{M}' \subset \Y \cup \Z\backslash\{ Z \} \backslash\{Y,Z' \} $, we have $Y \ind Z' \ | \ \W, \mathcal{M}'$. Thus the induction hypothesis implies that $\Y \ind  \Z \backslash\{ Z \} | \W$ and in particular $Y \ind  \Z \backslash\{ Z \} | \W$ as desired.
\end{proof}

\begin{proof}[Proof of Lemma \ref{lem.equivalent-independence}]
We only have to prove that conditional pairwise independence implies conditional mutual independence. So let $\Y, \Z$ be pairwisely independent given $\cT$. Iterating part (ii) of Lemma \ref{lem.mutual-indep-identity}, mutual independence follows if we can show that for any pair $Y \in \Y, Z \in \Z$ and any subset $\mathcal{M} \subset \Y \cup \Z$, we also have $Y \ind \Z | \cM \cup \cT$.
By $\sigma$-faithfulness of $(\G,P_{\lX})$, any pair $Y \in \Y, Z \in \Z$ is $\sigma$-separated by $\cT$, that is all micro-paths leading from the group $\Y$ to the group $\Z$ are $\sigma$-blocked by $\cT$. If we can show that all micro-paths are still $\sigma$-blocked by $\cM \cup \cT$, the result follows by the $\sigma$-Markov property of $(\G,P_{\lX})$. Since all such micro-paths are $\sigma$-blocked by $\cT$, we only need to make sure that none of these paths is opened again by adding $\cM$ to the separating set. Suppose there was such a path $\pi$ starting at $\pi(1) \in \Y$ and ending at $\pi(r) \in \Z$, that is $\sigma$-blocked by $\cT$ but $\sigma$-unblocked by $\cT \cup \cM$. Let $\pi(k)$ be the last node of $\pi$ in $\Y$ and let $\pi(l), k < l$ be the first node of $\pi$ in $\Z$ after $k$. Since $\cT \subset \lX \backslash \{\Y, \Z \}$, the subpath $\pi'$ of $\pi$ starting at $\pi'(1) = \pi(k)$ and ending at $\pi'(s) = \pi(l)$ must still be $\sigma$-blocked by $\cT$. On the other hand, since $\cM\cup \cT$ $\sigma$-unblocks $\pi$, it must $\sigma$-unblock $\pi'$. Thus there must be at least one collider on $\pi'$ that has a descendant in $\cM$. Let $\pi'(i)$ be the last such collider on $\pi'$ with descendant $D \in \cM$. If $\pi'(i) = D$, then $\pi'(i) \in \Y \cup \Z$ contradicting the fact that $\pi'$ does not have inner nodes in $\Y \cup \Z$. Therefore $D$ must be a proper descendant of $\pi'(i)$. As $D \in \cM$, we have in particular $D \in \Y \cup \Z$ and we will assume w.l.o.g. $D \in \Y$. Let $\pi'' = \pi(i) \to \dots \to D$ be the descending micro-path and assume that $\pi''(j)$ is the first node that belongs to $\cM$. But then, the concatenation of $\pi''(j) \leftarrow \dots \leftarrow \pi'(i)$ and $\pi'(i,s)$ leads from $\Y$ to $\Z$ and is $\sigma$-unblocked by $\cT$ contradicting our assumption that all such micro-paths must be $\sigma$-blocked by $\cT$.
\end{proof}

\begin{proof}[Proof of Theorem \ref{prop.sigma-Markov_relations}]
We assume that $P_{\lX}$ does not have the strong $\sigma$-Markov property and show that this leads to a contradiction. Since $(\co(\G,\cP),P_{\lX})$ does not have the $\sigma$-Markov property, there must be vectors $\Y, \Z$ that are $\sigma$-separated by a set of vectors $\cS$ but not mutually conditionally independent given $\cS$.
By Lemma \ref{lem.d-sep}, for every pair $Y \in \Y, Z \in \Z$, all paths on $\G$ between $Y$ and $Z$ are $\sigma$-blocked by $\mathcal{T}$. On the other hand, using Lemma  \ref{lem.mutual-indep-identity2}, we see that there must be $Y' \in \Y, Z' \in \Z $ and a subset $\mathcal{M} \subset \Y \cup \Z$ such that 
\[Y' \centernot\ind Z' | \mathcal{M}, \mathcal{T}. \]
By the $\sigma$-Markov property on the micro DMG $\G$ this means that there must be a path $\pi$ between $Y'$ and $Z'$ on $\cG$ that is not $\sigma$-blocked by $\mathcal{M},\mathcal{T}$, but is $\sigma$-blocked by $\mathcal{T}$. Therefore $\Y \notin \cS$ and $\Z \notin \cS$ as otherwise we would have $Y' \in \cT$ or $Z' \in \cT$. Moreover $\cT$ cannot contain any non-collider $\pi(l)$ of $\pi$ pointing to a neighbor $\pi(l \pm 1)$ in a different strongly connected component, $\pi$ must have at least one collider and any collider on $\pi$ must have a descendant in $\mathcal M$. Our goal is now to construct a path $\tilde{\pi}$ on $\co(\G,\cP)$ that is not $\sigma$-blocked by $\cS$ resulting in a contradiction. Consider first the coarsened path $\co(\pi)$ of $\pi$. $\cS$ can not contain any non-colliders of $\co(\pi)$ pointing to a neighbor in a different strongly connected component of $\co(\G,\cP)$ by Lemma \ref{lem.sc-inherit}. If $\co(\pi)$ does not have any colliders it must be $\sigma$-unblocked by $\cS$ and we are done. Therefore, assume that $\co(\pi)$ does contain colliders. If all such colliders would have a descendant in $\cS$, again the path would be $\sigma$-unblocked by $\cS$ and the desired contradiction would be obtained. Thus, we can assume that at least one collider on $\co(\pi)$ does not have any descendants in $\cS$. Any collider $\mathbf{C}$ of $\co(\pi)$ must contain a micro-collider $C$ of the micro-path $\pi$ and by the considerations above there must be a directed micro-path $C \to \dots \to M$ for some $M \in \cM \subset \Y \cup \Z$.  Coarsening this micro-path we see that for any collider $\mathbf{C}$ of $\co(\pi)$ there must be a directed macro-path from $\mathbf{C}$ to $\Y$ or to $\Z$. Writing out $\co(\pi)= (\co(\pi)(1),\dots,\co(\pi)(r))$ where $\co(\pi)(1) = \Y$ and $\co(\pi)(r) = \Z$, we define the sets
\[
 U = \left\{ k \ | \ \co(\pi)(k) \in \mathrm{col}(\co(\pi)), \ \cS \cap  \mathrm{des}(\co(\pi)(k)) = \emptyset , \  \mathrm{and} \ \Z \in \mathrm{des}(\co(\pi)(k))  \right\}
\]
and
\[
 U' = \left\{ k \ | \ \co(\pi)(k) \in \mathrm{col}(\co(\pi)), \ \mathrm{des}(\co(\pi)(k)) = \emptyset ,\mathrm{and} \ \Y \in \mathrm{des}(\co(\pi)(k))  \right\}.
\]
By the considerations above, at least one of these sets must be non-empty. If $U$ is non-empty, let $k'$ be its minimum, so that $\co(\pi)(k)$ is the collider closest to $\Y$. Thus the subpath path $\co(\pi)(1),\dots,\co(\pi)(k'))$ must be right- directed,i.e. $\co(\pi)(1) \to \dots \to \co(\pi)(k')$. Since $k' \in U$, we can join it with a directed path from $\co(\pi)(k')$ to $\Z$ yielding a path $\Y \to \dots \to \Z$. This path can not be $\sigma$-blocked by $\cS$ as all of its nodes are either non-colliders of $\co(\pi)$ or descendants of $\co(\pi)(k')$. Thus we have found the desired path. If $U'$ is non-empty the argument is analoguous with $k'':= \max U'$ instead of $k'$ and left-directed instead of right-directed paths.
\end{proof}

\begin{proof}[Proof of Theorem \ref{prop.m-Markov_relations}]
Up to a few subtleties, the proof is similar to the one of Theorem \ref{prop.sigma-Markov_relations}. 
We assume that $P_{\lX}$ that $(\co(\G,\cP),P_{\lX})$ does not have the $m$-Markov property. We show that this leads to a contradiction. Since $(\co(\G,\cP),P_{\lX})$ does not have the $m$-Markov property, there must be vectors $\Y, \Z$ that are $m$-separated by a set of vectors $\cS$ but not mutually conditionally independent given $\cS$. 
 Hence, by Lemma \ref{cor.m-sep}, for every pair $Y \in \Y, Z \in \Z$, all paths on $\G$ between $Y$ and $Z$ are $m$-blocked by $\mathcal{T}$. On the other hand, using Lemma \ref{lem.mutual-indep-identity2}, we see that there must be $Y' \in \Y, Z' \in \Z $ and a subset $\mathcal{M} \subset \Y \cup \Z$ such that 
\[Y' \centernot\ind Z' | \mathcal{M}, \mathcal{T}. \]
By the $m$-Markov property on $\G$ this means that there must be a path $\pi$ between $Y'$ and $Z'$ on $\cG$ that is not $m$-blocked by $\mathcal{M},\mathcal{T}$, but is $m$-blocked by $\mathcal{T}$. Therefore $\cT$ cannot contain any non-colliders of $\pi$, $\pi$ must have at least one collider and any collider on $\pi$ must have a descendant in $\mathcal M$. Our goal is now to construct a path $\tilde{\pi}$ on $\co(\G,\cP)$ that is not $m$-blocked by $\cS$ resulting in a contradiction. Consider first the coarsened path $\co(\pi)$ of $\pi$. $\cS$ can not contain any non-colliders of $\co(\pi)$ as otherwise $\cT$ would contain a non-collider of $\pi$. So if $\co(\pi)$ does not have any colliders it must be $m$-open given $\cS$ and we are done. So assume that $\co(\pi)$ does contain colliders. If all such colliders would have a descendant in $\cS$, again the path would be $m$-opened by $\cS$ and the desired contradiction would be obtained. Thus assume that at least one collider on $\co(\pi)$ does not have any descendants in $\cS$. Any collider $\mathbf{C}$ of $\co(\pi)$ must contain a micro-collider $C$ of the micro-path $\pi$ and by the considerations above there must be a directed micro-path $C \to \dots \to M$ for some $M \in \cM \subset \Y \cup \Z$.  Coarsening this micro-path we see that for any collider $\mathbf{C}$ of $\co(\pi)$ there must be a directed macro-path from $\mathbf{C}$ to $\Y$ or to $\Z$. Writing out $\co(\pi)= (\co(\pi)(1),\dots,\co(\pi)(r))$ where $\co(\pi)(1) = \Y$ and $\co(\pi)(r) = \Z$, we define the sets
\[
 U = \left\{ k \ | \ \co(\pi)(k) \in \mathrm{col}(\co(\pi)), \ \cS \cap  \mathrm{des}(\co(\pi)(k)) = \emptyset , \  \mathrm{and} \ \Z \in \mathrm{des}(\co(\pi)(k))  \right\}
\]
and
\[
 U' = \left\{ k \ | \ \co(\pi)(k) \in \mathrm{col}(\co(\pi)), \ \mathrm{des}(\co(\pi)(k)) = \emptyset ,\mathrm{and} \ \Y \in \mathrm{des}(\co(\pi)(k))  \right\}.
\]
By the considerations above, at least one of these sets must be non-empty. If $U$ is non-empty, let $k'$ be its minimum, so that $\co(\pi)(k)$ is the collider closest to $\Y$. Thus the subpath path $\co(\pi)(1),\dots,\co(\pi)(k'))$ must be right- directed,i.e. $\co(\pi)(1) \to \dots \to \co(\pi)(k')$. Since $k' \in U$, we can join it with a directed path from $\co(\pi)(k')$ to $\Z$ yielding a path $\Y \to \dots \to \Z$. This path can not be $m$-blocked by $\cS$ as all of its nodes are either non-colliders of $\co(\pi)$ or descendants of $\co(\pi)(k')$. Thus we have found the desired path. If $U'$ is non-empty the argument is analoguous with $k'':= \max U'$ instead of $k'$ and left-directed instead of right-directed paths.
\end{proof}

\subsection{Proofs of the results in Section \ref{sec.faithfulness}} \label{app.proofs_sec_faithfulness}

\begin{proof}[Proof of Lemma \ref{lem.macro-micro-faithful}]
Assume first that $(\co(\G,\cP),P_{\lX})$ is $\sigma$-faithful and let be $\Y$ and $\Z$ $\sigma$-connected by a set $\cS \subset \cP$. Therefore by assumption $\Y \centernot\ind \Z | \cT$. By Lemma \ref{lem.equivalent-independence} and our assumptions, conditional pairwise independence of groups implies conditional mutual independence. Applying the logical contraposition, we thus obtain $\Y \centernot\ind^{pw} \Z | \cT$. Thus there must be $Y \in \Y$ and $Z \in \Z$ such that $Y \centernot\ind Z | \cT$. By $\sigma$-Markovianity of $(\G,P_{\lX})$, $Y$ and $Z$ must be $\sigma$-connected by $\cT$. \\
Conversely assume that whenever $\Y$ and $\Z$ are $\sigma$-connected by a set $\cS \subset \cP$. Then there exist $Y \in \Y$ and $Z \in \Z$ that are $\sigma$-connected by $\mathcal{T} = \bigcup_{\W \in \cS} \W$. By $\sigma$-faithfulness of $(\G,P_{\lX})$ it follows that $Y \centernot\ind Z | \cT$ and thus $\Y \centernot\ind \Z | \cT $. Thus $(\co(\G,\cP),P_{\lX})$ is $\sigma$-faithful.

\end{proof}

\begin{proof}[Proof of Theorem \ref{lem.connectivity-criterion}]
Let $\W,\Y \in \cP$ and assume that $\Pi = (\Pi(1),\e_1,\Pi(2),\e_2,\dots,\e_{n-1},\Pi(n))$ is a path on $\co(\G,\cP)$ with $\Pi(1) = \W, \ \Pi(n)= \Y$ that is $\sigma$-unblocked by $\cS \subset \cP$. Let $\cT = \bigcup_{\Z \in \cS} \Z$. According to Corollary \ref{cor.macro-micro-faithful2}, we need to construct a path $\pi$ that coarsens to $\Pi$ and that is $\sigma$-unblocked by $\cT$. Consider the edge $\e_1$ and choose $e_1 \in \mic(\e_1)$ whose nodes we will immediately denote by $\pi(1) \in \Pi(1)$ and $\pi(2) \in \Pi(2)$, i.e. $e_1 = (\pi(1),\pi(2))$. For $i =2,\dots,n-1$, we proceed inductively as follows. Assume that we have already defined a path $(\pi(1),e_1\dots,e_{s-1},\pi(s))$ such that $\pi(s-1)\in \Pi(i-1), \ \pi(s) \in \Pi(i)$ and $e_{s-1} \in \mic(\e_{i-1})$. By assumption (iii) of the theorem, we can find a node $Y \in \mathrm{bd}_{\e_{i}}(\Pi(i)) \subset \Pi(i)$ such that $\scc(\pi(s)) = \scc(Y)$. If $\e_{i-1}$ is left-directed and $\e_i$ is left-directed or bidirected, choose a left-directed path $\xi(i)= (\xi(1) =\pi(s),\tilde{e}_1,\dots,\tilde{e}_m, \xi(m)=Y)$, in all other cases, choose a right-directed path $\xi_i= (\xi_i(1) =\pi(s),\tilde{e}_1,\dots,\tilde{e}_m, \xi_i(m)=Y)$. Note that by condition (ii), all nodes of $\xi_i$ must remain in $\Pi(i)$. Concatenate $\xi_i$ with $\pi$, i.e. set $e_{s-1+j} = \tilde{e}_j, \ \pi(s-1+j) = \xi_i(j), \ j=1,\dots, m$. Finally since $Y \in \mathrm{bd}_{\e_{i}}(\Pi(i))$ we can choose an edge $e_{s+m} \in \mic(\e_i)$ connecting $Y = \pi(s+m)$ to some $\pi(s+m+1) \in \Pi(i+1)$. If $i+1 = n$, we have finished the construction of our micro-path $\pi$ and by construction $\co(\pi) = \Pi$. We need to show now that $\cT$ $\sigma$-unblocks $\pi$. There are different cases to check.
\begin{itemize}
\item Assume that $\pi(1) \in \cT$ (respectively $\pi(n) \in \cT$). In this case we must have $\Pi(1) \in \cS$ (or $\Pi(n) \in \cS$) which would $\sigma$-block $\Pi$, contrary to our assumption. Thus $\pi(1), \pi(n) \notin \cT$.
\item Assume that $\pi(k)$ is a collider on $\pi$ for some $1 <k<\mathrm{len}(\pi)$ and that $\pi(k) \in \Pi(i)$. 
\begin{itemize}
\item The first case to discuss here is $|\Pi(i)| = 1$, i.e. $\Pi(i) = \{ \pi(k) \}$. In this case either $\Pi(i) \in \cS$ in which case $\pi(k) \in \cT$ or $\Pi(i)$ must have a proper descendant $\mathrm{S} \in \cS$. Then similar to the above construction of $\pi$, using (ii) and (iii) we can also construct a descending path $\pi(k) \to \dots \to S$ for some $S \in \cS$. Thus $\pi(k)$ has a descendant in $\cT$ and again the collider $\pi(k)$ is $\sigma$-unblocked.
\item The second case is $|\Pi(i)| > 1$. Because of our choice of the internal path $\xi_i$ as directed, $\pi(k) \in \mathrm{bd}_{\e_{i}}$ or $\pi(k) \in \mathrm{bd}_{\e_{i-1}}$. We will only discuss the first case $\pi(k) \in \mathrm{bd}_{\e_{i}}$ as the second one is completely analogous. If $\pi(k) \in \mathrm{bd}_{\e_{i}}$, then $\pi(k+1) \in \Pi(i+1)$ and the edge $e_k = (\pi(k),\pi(k+1))$ must be left- or bidirected as $\pi(k)$ is a collider. Again because of the way we chose $\xi_i$, the unique edge on $\pi$ in $\mic(\e_{i-1})$ must be right- or bidirected. Thus both $\e_{i-1}$ and $\e_{i}$ must have an arrowhead towards $\Pi(i)$, i.e $\Pi(i)$ is a collider on $\Pi$. Thus $\Pi(i)$ must have a descendant in $\cS$. Suppose first that this descendant is $\Pi(i)$ itself. Then the collider $\pi(k)$ is $\sigma$-unblocked as it is contained in $\cT$. The other nodes on $\pi$ that are part of $\Pi(i)$ are non-collider but do not point to neighbors on $\pi$ that are part of a different strongly connected component. Thus by condition (3) in the definition of $\sigma$-separation (Definition \ref{def.sigma-sep}) including them in $\cT$ does not $\sigma$-block $\pi$. Next, suppose that the descendant of $\Pi(i)$ in $\cS$ is a proper descendant. Once again, using (ii) and (iii) we can construct a descending path $\pi(k) \to \dots \to S$ for some $S \in \cS$ so that the collider $\pi(k)$ is unblocked.
\end{itemize} 
\item Assume that $\pi(k)$ is a non-collider on $\pi$ for some $1 <k<\mathrm{len}(\pi)$ and that $\pi(k) \in \Pi(i)$. As $\cS$ $\sigma$-unblocks $\Pi$, we must be in one of the following situations. Either (I) $\Pi(i) \notin \cS$ or (II) $\Pi(i) \in \cS$ but if $\e_{i-1} = \Pi(i-1) \leftarrow \Pi(i)$ or $\e_i = \Pi(i) \to \Pi(i+1)$ then $\scc(\Pi(i-1)) = \scc(\Pi(i))$, respectively $\scc(\Pi(i)) = \scc(\Pi(i+1))$.
\begin{itemize}
\item[(I)] In this case, $\pi(k) \notin \cT$ thus the non-collider $\pi(k)$ is $\sigma$-unblocked by $\cT$.
\item[(II)] If $\Pi(i) \in \cS$ suppose that $\e_i = \Pi(i) \to \Pi(i+1)$ has a tail at $\Pi(i)$. As stated above, the fact that $\cS$ $\sigma$-unblocks $\Pi$ means that we must have $\scc(\Pi(i+1)) = \scc(\Pi(i))$. If $\pi(k+1)$ is also an element of $\Pi(i)$, then by construction of $\pi$, it has a right-directed edge $\pi(k) \to \pi(k+1)$ and $\scc(\pi(k)) = \scc(\pi(k+1))$. Thus, even though $\pi(k) \in \cT$, it is still $\sigma$-unblocked by $\cT$. Thus we can assume that $\pi(k+1) \in \Pi(i+1)$ which means in particular that $\pi(k) \in \mathrm{bd}_{\e_i}(\Pi(i))$ and $\pi(k+1) \in \mathrm{bd}_{\e_i}(\Pi(i+1))$.  As $\scc(\Pi(i+1)) = \scc(\Pi(i))$, there exists a directed path $\Gamma$ on $\co(\G,\cP)$ starting at $\Pi(i+1)$ and ending at $\Pi(i)$. As with the construction of $\pi$ above, because of the boundary connection condition (iii), we can once again construct a micro-path $\gamma$ from $\pi(k+1)$ to $\pi(k)$ so that $\scc(\pi(k)) = \scc(\pi(k+1))$. So, once again even though $\pi(k) \in \cT$, it is still $\sigma$-unblocked by $\cT$. The final case is that $\e_i$ does not have a tail at $\Pi(i)$ which means that $\e_{i-1} = \Pi(i-1) \leftarrow \Pi(i)$ must have one. The argument that  $\scc(\pi(k)) = \scc(\pi(k-1))$ is then completely parallel to the discussion for tailed $\e_i$, taking into account that the path segment $\xi_i$ of $\pi$ that is internal to $\Pi(i)$ is left-directed (or trivial) by construction. Therefore also in this case, even though $\pi(k) \in \cT$, it is still $\sigma$-unblocked by $\cT$.
\end{itemize}
\end{itemize}
We have shown above that every collider of $\pi$ has a descendant in $\cT$ and that every non-collider is either not part of $\cT$ or points exclusively to neighbors in the same connected component. In summary  $\cT$ $\sigma$-unblocks $\pi$, so $\sigma$-faithfulness is proven. \\
To show that $\co(\G,\cP)$ is acyclic, assume that there exists a right-directed cycle $\Pi$ with $ \Pi(1)  =\Pi(\mathrm{len}(\Pi))$. W.l.o.g., we can assume that $\Pi$ is irreducible. Let $\cP'$ be the partition of $\cG$ into strongly connected components and $\mathcal{H} = \co(\G,\cP')$ which is always acyclic. Then condition (iii) implies that $\Pi$ induces a right-directed walk $\Gamma_0$ on $\mathcal{H}$ with no repeating middle vertices such that $\Gamma_0(1), \Gamma_0(\mathrm{len}(\Gamma_0)) \subset \Pi(1)$. 
If $\Gamma_0(1) = \Gamma_0(\mathrm{len}(\Gamma_0))$ we have found a cycle in $\mathcal{H}$ and thus a contradiction. If not, we can again use condition (iii) to construct a right-directed walk $\Gamma_0'$ with $\Gamma_0'(1) = \Gamma_0(\mathrm{len}(\Gamma_0))$ and $\Gamma_0'(\mathrm{len}(\Gamma_0')) \subset \Pi(1)$. Concatenating $\Gamma_0$ and $\Gamma_0$ to $\Gamma_1 = \Gamma_0 \circ \Gamma_0'$, we have found two walks now that start in the same strongly connected component $\Gamma_0(1)$ and end in $\cP'\cap 2^{\Pi(1)}$.  Again if $\Gamma_1(\mathrm{len}(\Gamma_1)) = \Gamma(1)$ we are done, otherwise we continue to construct walks $\Gamma_k$ on $\cH$ in this manner. Since the set $\cP'\cap 2^{\Pi(1)}$ is finite, at some point the condition  $\Gamma_k(\mathrm{len}(\Gamma_k)) = \Gamma(k)$ must be be met and we arrive at a contradiction.


\end{proof}

\begin{proof}[Proof of Theorem \ref{lem.connectivity-criterion2}]
Let $\W,\Y \in \cP$ and assume that $\Pi = (\Pi(1),\e_1,\Pi(2),\e_2,\dots,\e_{n-1},\Pi(n))$ is a path on $\co(\G,\cP)$ with $\Pi(1) = \W, \ \Pi(n)= \Y$ that is $\sigma$-unblocked by $\cS \subset \cP$. Let $\cT = \bigcup_{\Z \in \cS} \Z$. According to Corollary \ref{cor.macro-micro-faithful2}, we need to construct a path $\pi$ that coarsens to $\Pi$ and that is $\sigma$-unblocked by $\cT$. We proceed as in the proof of Lemma \ref{lem.connectivity-criterion} and for non-colliders $(\e_i,\Pi(i+1),\e_{i+1})$, the construction of $\pi$ completely parallels that proof. For a collider $(\e_i,\Pi(i+1),\e_{i+1})$, we simply can choose the micro path $\xi_i$ in the argument to be a micro-collider $(\xi_i(1), e^i_1,\xi(2),e^i_2,\xi(3))$ with $\xi_i(1) \in \Pi(i-1), \ \xi_i(2) \in \Pi(i), \ \xi_i(3) \in \Pi(i+1)$.
Showing that $\cT$ unblocks $\pi$ again completely parallels the the proof of Lemma \ref{lem.connectivity-criterion}, noting that if $\pi(k)$ is a non-collider on $\pi$ it must lie in a group $\Pi(i)$ that is a non-collider of $\Pi$. If $\pi(k)$ is a collider on $\pi$, then $\pi(k) \in \Pi(i)$ and $\Pi(i)$ must be a collider on $\Pi$. Since $\cS$ $\sigma$-unblocks $\Pi(i)$, we must either have $\Pi(i) \in \cS$ or $\Pi(i)$ must have a proper descendant in $\cS$, say $\W$. In the first case, it follows that $\pi(k) \in \cT$. In the second case, consider the descending path $\Gamma = (\Pi(i), \tilde{\e}_1, \dots, \tilde{\e}_s, \W)$. We can again construct a descending micro-path $\gamma$ starting at some $\gamma(1) \in \mathrm{bd}_{\tilde{\e}_1}(\Pi(i))$ to some $W \in \W$. Since the pair $\e_i,\tilde{\e}_1$ is an (almost) mediator, we can use condition (iii-a) to find a directed micro-path from $\pi(k)$ to $\gamma(1)$ and hence to  $W$. Thus, $\pi(k)$ has a descendant in $\cT$ as desired.
\end{proof}

\end{appendices}

\end{document}